\theoremstyle{plain}
\newtheorem{theorem}{Theorem}[section]
\newtheorem{proposition}[theorem]{Proposition}
\newtheorem{lemma}[theorem]{Lemma}
\newtheorem{corollary}[theorem]{Corollary}
\newtheorem{remark}[theorem]{Remark}
\newtheorem{definition}[theorem]{Definition}
\newtheorem{example}[theorem]{Example}
\newtheorem{assumption}[theorem]{Assumption}
\newtheorem{condition}[theorem]{Condition}
\theoremstyle{nonumberplain}
\newenvironment{proof}[1][]
{\ifthenelse{\equal{#1}{}}{\smallskip\noindent\textsl{Proof. }}{\smallskip
\noindent\textsl{Proof #1. }}}{\hfill$\Box$}
\newcommand{\RR}{{\rm I\kern-2pt R}}
\newcommand{\E}{{\rm I\kern-2pt E}}
\newcommand{\EE}{{\rm I\kern-2pt E}}
\newcommand{\PP}{{\rm I\kern-2pt P}}
\newcommand{\NN}{{\rm I\kern-2pt N}}
\def \N{\mathbb{N}}
\def \R{\mathbb{R}}
\newcommand{\QQ}{\mathbb{Q}}
\def\Xt{\pi_t(\xi)}
\def\XT{\pi_T(\xi)}
\def\Ac{{\cal A}}
\def\Bc{{\cal B}}
\def\Cc{{\cal C}}
\def\Ec{{\cal E}}
\def\Fc{{\cal F}}
\def\Fut{{\cal F}^{\cal U}_{t}}
\def\FuT{{\cal F}^{\cal U}_{T}}
\def\Pc{{\cal P}}
\def\Qc{{\cal Q}}
\def\Sp{\Ac}
\def\Sc{{\cal S}}
\def\Tr{{\cal H}(\mathbb{F}^{\mathcal{U}})}
\def\Uc{{\cal U}}
\def\Vt{V^{x,H,C}}
\def \o{\omega}
\def \O{\Omega}
\def\1{\mbox{\bf 1}}
\begin{document}

\title{The robust superreplication problem: a dynamic approach}

\author[1]{Laurence Carassus}
\author[2\footnote{ Support from the European Research Council under the European Union's Seventh Framework Programme (FP7/2007-2013) / ERC grant agreement no. 335421 is gratefully acknowledged. JO and JW are also thankful to St.\ John's College in Oxford for its financial support. JW further acknowledges support from the German Academic Scholarship Foundation.}]{Jan Ob\l{}\'oj}
\author[2$^*$]{Johannes Wiesel}
\affil[1]{L\'eonard de Vinci P\^ole Universitaire, Research Center
and LMR, Universit\'e de Reims-Champagne Ardenne.
Email: laurence.carassus@devinci.fr}
\affil[2]{Mathematical Institute and St. John’s College, University of Oxford, Oxford}


\maketitle

\begin{abstract}
  In the frictionless discrete time financial market of Bouchard et al.(2015) we consider a trader who, due to regulatory requirements or internal risk management reasons, is required to hedge a claim $\xi$ in a risk-conservative way relative to a family of probability measures $\Pc$.  We first describe the evolution of $\pi_t(\xi)$ - the superhedging price at time $t$ of the liability $\xi$ at maturity $T$ - via a dynamic programming principle and show that $\pi_t(\xi)$  can be seen as a concave envelope of $\pi_{t+1}(\xi)$ evaluated at today's prices. 
Then we consider an optimal investment problem for the trader who is rolling over her robust superhedge and phrase this as a robust maximisation problem, where the expected utility of inter-temporal consumption is optimised subject to a robust superhedging constraint. This utility maximisation is carrried out under a new family of measures $\Pc^u$, which no longer have to capture regulatory or institutional risk views but rather represent trader's subjective views on market dynamics. Under suitable assumptions on the trader's utility functions, we show that optimal investment and consumption strategies exist and further specify when, and in what sense, these may be unique.
\end{abstract}

\section{Introduction}

We consider a discrete time financial market and an agent who needs to hedge a liability $\xi$ maturing at a future date $T$ in a robust and risk-conservative way. Our focus is on the interplay between the beliefs used for assessing the risks, the beliefs used for agent's investment decisions and the dynamics of agent's actions. For simplicity we assume away other factors and consider an agent who can trade in a dynamic way with no constraints or frictions in $d$ assets available in the market at prices which are exogenous. More precisely, following the approach of \cite{Samuelson} and \cite{BS73}, risky assets are modelled as stochastic processes and their behaviour specified by a probability measure. However, unlike the classical uni-prior approach which fixes one such measure $P$, we consider a multi-prior framework and work simultaneously under a whole family of measures $P\in \Pc$. This offers a robust approach which accounts for model ambiguity, also referred to as \emph{Knightian uncertainty} after \cite{Kni}. 

The price to pay for a robust modelling view comes through specificity of outputs: while the uni-prior setting might generate a unique fair price for a derivate contract a multi-prior setting will typically generate a relatively wide interval of no-arbitrage prices, a tradeoff first identified in the seminal paper of \cite{merton_no_dominance1973}. 
We consider a trader who, due to regulatory requirements or internal risk management reasons, is required to hedge $\xi$ in a risk-conservative way relative to $\Pc$. This means that initially she has to allocate capital equal to $\pi(\xi)$, the superhedging price of $\xi$, i.e., the price of cheapest trading strategies which are guaranteed to cover the liability $\xi$ under all $P\in \Pc$. There might be many such cheapest superhedging strategies and the trader can pick any one of them to follow until time $T$. This is a conservative and non-linear risk assessment: the capital the trader would be allowed to borrow against a long position in $\xi$ is $-\pi(-\xi)$ and is typically significantly lower than $\pi(\xi)$.

The superhedging price $\pi(\xi)$ can be characterised theoretically and has been considered in a number of papers, see \cite{BN} and the discussion below. To the best of our knowledge, the focus of most of these works has been on the static problem: the problem today for the horizon $T$. In contrast, in this paper we want to focus on the dynamics of the robust pricing and hedging problem \emph{through time}. We ask how $\pi(\xi)$ changes \emph{over time} and how the trader should act optimally through time. Clearly, tomorrow she will see new prices in the market and will be able to recompute the superhedging price. If the new price is lower, she will be able to unwind her old position, buy a new position and be left with a surplus. She could then consume this (e.g., pay into her credit line if the initial capital was borrowed) or invest further if she believes the market offers suitable opportunities. 

Our first main contribution is to describe the evolution of $\pi_t(\xi)$ - the superhedging price at time $t$ of the liability $\xi$ at maturity $T$. We work in the setting of \cite{BN} and consider an abstract  set of priors $\Pc$, possibly large and in particular not dominated by a single probability measure. The measures $P\in \Pc$ are represented as compositions of one-step kernels and to establish the dual characterisation of $\pi_0(\xi)$ \cite{BN} have essentially proven a dynamic programming principle for the dual objects. We prove that $(\pi_t(\xi))_{0\leq t\leq T}$ satisfy a dynamic programming principle, and that $\pi_t(\xi)$ can be seen as a concave envelope of $\pi_{t+1}(\xi)$ evaluated at today's prices. To the best of our knowledge, this was first suggested in the robust setting by \cite{Dupire}. 
We also characterise $\pi_t(\xi)$ as the wealth of a minimal superhedging strategy in the sense of \cite{FK97}. These results provide natural robust extensions of classical uni-prior results, see \cite{fs}, including a robust version of the algorithm in \cite{CGT06}. Further, considering $\Pc$ which corresponds to the pointwise robust setting of \cite{Bur16b}, we show that $\pi_t(\xi)$ corresponds to the uniprior superhedging price for an extreme $P\in \Pc$. Proving our results in the robust setting requires rather lengthy and technical arguments. This is mainly due to delicate measurability questions.

Our second main contribution is to consider an optimal investment problem for a trader who is rolling over her robust superhedge. 
This is phrased as a problem of robust maximisation of expected utility of inter-temporal consumption subject to a robust superhedging constraint. Here the robust constraint means the superhedging has to be satisfies $P$-a.s.\ for all $P\in \Pc$. The robust utility maximisation means that we consider a max-min problem, where minimisation is over $P\in \Pc^u$. We argue that the latter problem should be considered with respect to a different set of priors $\Pc^u\subseteq \Pc$ than the former problem. Measures $P\in \Pc^u$ no longer have to capture regulatory or institutional risk views but rather represent trader's subjective views on market dynamics. Under suitable assumptions on the trader's utility functions, we show that optimal investment and consumption strategies exist and further specify when, and in what sense, these may be unique. We provide examples to illustrate various pitfalls occurring when our assumptions are not satisfied.

Throughout, we work in the setup of \cite{BN} who extended the classical uni-prior theory of pricing and hedging in discrete time to the robust mutli-prior case, introducing a suitable notion of no-arbitrage, proving a robust version of the fundamental theorem of pricing and hedging and establishing a robust pricing-hedging duality. Numerous authors have since adopted their setup and worked on robust extensions of the classical problems in quantitative finance such as pricing and hedging of American options, utility maximisation or transaction cost theory to name just a few examples, see \cite{Nutz,BC16,Aksamitetal:18,bayraktar2017arbitrage,bouchard2017super} and the references therein. We note that alternative ways to address model uncertainty are possible, including the pathwise, or pointwise, approach developed in \cite{DaHo07,ABPW13,ClassS,Burz16a,Bur16b} among others. Whilst the resulting robust framework for pricing and hedging is equipped with different notions of arbitrage and different fundamental theorems, it was recently shown by \cite{jw} to be equivalent to the multi-prior approach. Thus, on an abstract level, there is no loss of generality in our choice to adopt the multi-prior approach of \cite{BN}. It is important however that we work in discrete time. While in the classical setup no-arbitrage theory, including dynamic understanding of the superhedging price, is well developed in continuous time, see \cite{FK97,DelSch05}, in the robust setting an extension of abstract no-arbitrage theory, as developed in \cite{BN} or \cite{Bur16b}, to the continuous time is still open. This is despite a body of works which have achieved either particular or generic steps towards such a goal, large enough so that we can not do it justice in this introduction but refer to \cite{AP95,Ly95,DenisMartini:06,CO11, denis2007utility, EJ13,biagini2014robust,HO18,BCHPP:17,bartl2017pathwise} and the references therein.

We note that $d$ may be large and our assets may include both primary and derivate assets. Indeed, one way of making robust outputs more specific is by including more traded assets in the analysis. This was the original motivation behind the works on the robust pricing and hedging in continuous time, going back to \cite{Ho982}, where one typically assumes that the market prices of European options on the underlying assets co-maturing with our liability $\xi$ are known. Here, we consider an abstract general setup and allow any $d$-tuple of traded assets, for a finite $d$. We may expect that the level of uncertainty regarding different assets may differ and this would be reflected in $\Pc$. However it is crucial that all the assets are traded dynamically. From a theoretical standpoint, this is both necessary to obtain a dynamic programming principle for the superhedging prices and without loss of generality in the sense that any \cite{BN} setup where some assets are only available for trading at time $0$ can be lifted to a setup with dynamic trading in all assets in a way which does not introduce arbitrage and does not affect time-$0$ superhedging prices, see \cite{Aksamitetal:18}. From a practical standpoint, this is not a  significant assumption as we may only consider liquidly traded assets.

The remainder of the paper is organised as follows. The next section introduces and discusses our modelling framework. \Cref{sec. conc} presents the results characterising the dynamics of the superhedging price. We then specialise, in \cref{secex}, to the pathwise setting when $\Pc$ contains all measures with specified supports. This allows for a more intuitive interpretation of the results, easier proofs and explicit examples. \Cref{sec:utility} then considers the secondary utility maximisation problem for a trader who dynamically re-balances her superhedging strategy and states the existence and uniqueness results for the optimal investment and consumption strategies. Finally, proofs are presented in three appendices.

\section{Models of Financial markets}
\label{setup}

In this section we set up the multi-prior modelling framework and give introductory definitions. Future dynamics of financial assets are modelled using probability measures but, unlike the classical case where one such measure is fixed, we typically work simultaneously under all $P$ from a large family of measures $\Pc$. Our market has $d$ traded assets, these could be stocks or options, but importantly all are traded dynamically. We do not consider statically traded assets, i.e., only available for buy-and-hold trading, as then the superhedging prices typically can not admit a dynamic programming principle across all times, see \cite{Aksamitetal:18}.  
\subsection{Uncertainty modelling}
\label{ts}
We work in the setting of \cite{BN} to which we refer for details and motivation. 
We only recall the main objects of interest here and refer to \cite{bs}[Chapter 7] for technical details.
Let $\Omega$ be a Polish space and denote by $\Omega^t$ its $t$-fold Cartesian product. We define the price process $S$ of discounted prices of $d$ traded stocks as a Borel measurable map $S_t(\o)=(S_t^1(\o), \dots, S_t^d(\o)):\O^T \to \R^d_+$ for every $\o=(\o_0,\ldots,\o_T)$ with the convention $S_0(\omega)=s_0 \in \R^d_+$ and $T\in \mathbb{N}$ is the time horizon. Prices are specified in discounted units and we have a riskless asset with price equal to $1$ for all $0\le t \le T$. 
Furthermore let $\mathfrak{P}(\Omega^t)$ be the set of all probability measures on $\mathcal{B} (\Omega^t)$, the Borel-$\sigma$-algebra on $\Omega^t$. We denote by $\Fut$ the universal completion of $\mathcal{B}(\Omega^t)$. We  often consider $(\Omega^t, \Fut)$ as a subspace of $(\Omega^T, \FuT)$ and write $\mathbb{F}^{\mathcal{U}}= (\Fut)_{t=0, \dots, T}$. In the rest of the paper, we will use the same notation for $P \in \mathfrak{P}(\Omega^T)$ and for its (unique) extension to $\FuT$.
For a given $\Pc \subseteq  \mathfrak{P}(\Omega^T)$, a set $N \subset \Omega^T$ is called a $\Pc$-polar  if for all $P \in \Pc$, there exists some $A_{P} \in \mathcal{B}(\Omega^T)$ such that $P(A_{P})=0$ and $N \subset A_{P}$. We say that a property holds $\Pc$-quasi-surely (q.s.), if it holds outside a $\Pc$-polar set. Finally we say that a set is of $\Pc$-full measure  if its complement is a $\Pc$-polar set.\\
To give a probabilistic description of the market we consider  a family of random sets  $\Pc_{t} : \Omega^t \twoheadrightarrow \mathfrak{P}(\O)$, for all $0\leq t\leq T-1$.  The set $\Pc_{t}(\omega)$ can be seen as the set of all possible models for the $t+1$-th period given the path $\omega \in \Omega^t$ at time $t$. In order to aggregate trading strategies on different paths in a measurable way, we assume here that the sets $\Pc_t$ have the following property:
\begin{assumption}
\label{Qanalytic}
The set $\Pc$ has Analytic Product Structure (APS), which means that
\begin{align*}
\Pc=\{P_0 \otimes \cdots \otimes P_{T-1} \ | \ P_t \text{ is an}\  \Fut\text{-measurable selector of }\Pc_t \},
\end{align*}
where the sets $\Pc_t(\omega) \subseteq \mathcal{P}(\Omega)$ are nonempty, convex and 
\begin{align*}
\text{graph}(\Pc_t) = \{ (\omega, P)\ | \ \omega \in \Omega^t, \ P \in \Pc_t (\omega)\}
\end{align*}
is analytic.
\end{assumption}


The fact that $\text{graph}(\mathcal{P}_t)$ is analytic allows for an application of the Jankov-von-Neumann theorem (\cite[Prop. 7.49, p.182]{bs}), which guarantees the existence of universally measurable selectors $P_t: \O^t \to \mathfrak{P}(\O)$. Here $P_0 \otimes \cdots \otimes P_{T-1}$ denotes the $T$-fold application of Fubini's theorem, which defines a measure on $\mathfrak{P}(\Omega^T)$.
Indeed, analyticity of the graph of $\Pc_t$ is of paramount importance for the preservation of measurability properties. For example the proof of a quasisure superreplication theorem (see \cite[Lemma 4.10]{BN}) uses the fact that if  $X_{t+1}: \Omega^{t+1} \to \mathbb{R}$  is upper semianalytic, then $\sup_{P \in \mathcal{P}_{t}(\omega)} \E_{P} [X_{t+1}(\o,\cdot)]$ remains upper semianalytic.
Apart from \cref{Qanalytic}, we make no specific assumptions on the set of priors $\Pc$. It is neither assumed to be dominated by a given reference probability measure nor to be weakly compact. 
Some concrete examples, including when $\Pc_{t}(\omega)$ are non-compact random sets, are discussed in \cref{secex}.
\subsection{Trading}
Trading strategies are represented by $\mathbb{F}^{\mathcal{U}}$-predictable $d$-dimensional  processes $H:=\{H_{t}\}_{ 1 \le t \le T}$ where for all $1 \leq t \leq T$, $H_{t}$ represents the investor's holdings in  each of the $d$ assets at time $t$. The set of trading strategies is denoted by $\Tr$.  Investors are allowed to consume and their cumulative consumption is represented by an $\R$-valued $\mathbb{F}^{\mathcal{U}}$-adapted process $C=\{C_{t} \}_{1 \le t \le T}$, $C_{0}=0$ and which is assumed to be non-decreasing: $C_{t} \leq C_{t+1}$ $\Pc\mbox{-q.s.}$ The set of cumulative consumption processes is denoted by $\Cc$. We will use the notation
$\Delta S_{t}=S_{t}-S_{t-1}$ and $\Delta C_{t}=C_{t}-C_{t-1}$ for $1\le t\le T.$
Given an initial wealth $x\in \R$, a trading portfolio $H$ and a cumulative consumption process $C$, the wealth process $\Vt$ is governed by
\begin{align}\label{hj}
\Vt_0 &=x  \nonumber \\ 
\Vt_t &=\Vt_{t-1}+H _{t} \Delta S_{t}-\Delta C_{t}\quad\mbox{for}\;1\le t\le T.  
\end{align}
The condition $C=0$ means that the portfolio $H$ is self-financing and in this case we  write $V^{x,H}$ instead of $V^{x,H ,0}$.

We are interested in superhedging of a (European) contingent claim and therefore adapt the presentation of \cite{FK97} to the robust framework. A (European) contingent claim is represented by an $\FuT$-measurable random variable $\xi$ and the set of superhedging strategies for $\xi$ is denoted by
\begin{align}
\label{aah} 
\Sp(\xi):=\left\{ (x,H ,C) \in \R \times \Tr \times \Cc \ \bigg| \ \Vt_T \geq \xi \ \Pc\mbox{-q.s.} \right\}.
\end{align}
\begin{definition}
 The superreplication price $\pi(\xi)$ of an $\FuT$-measurable random variable $\xi$ is the minimal initial capital needed for superhedging $\xi$, i.e., 
 \begin{align}\label{piG}
\pi(\xi):=\inf \left\{x \in \mathbb{R} \ | \ \exists (H ,C)\in \Tr \times \Cc \mbox{ such that } (x,H,C) \in \Sp(\xi) \right\},
\end{align}
with $\pi(h)=+\infty$ if $\Sp(\xi) = \emptyset$. A superhedging strategy $(\hat{x},\hat{H},\hat{C}) \in \Sp(\xi)$ is called \emph{minimal} if for all $(x,H ,C) \in \Sp(\xi)$ $\Vt_t \ge V_{t}^{\hat{x},\hat{H},\hat{C}}$ $\Pc$-q.s. for all $0\le t\le T$. 
\end{definition}
It is easy to see that $\hat{x}=\pi(\xi)$ for any minimal superhedging strategy $(\hat{x},\hat{H},\hat{C}) \in \Sp(\xi)$.


\subsection{No-arbitrage condition and Pricing measures}
\label{BNexp}
We recall the no-arbitrage condition introduced in \cite{BN}.
\begin{assumption}\label{NAQT}
There is no $\Pc$-quasisure arbitrage (NA$(\Pc)$) in the market if for all $H  \in \Tr$ with $V_{T}^{0,H} \geq 0 \ \Pc\mbox{-q.s.}$ we have $V_{T}^{0,H}  = 0 \ \Pc\mbox{-q.s. }$
\end{assumption}
The above definition gives an intuitive extension of the classical no-arbitrage condition, specified under a fixed probability measure $P$, to the multi-prior case of family of probability measures $\Pc$. The intuition is justified by the FTAP generalisation proved by \cite[Theorem 4.5]{BN}: under \cref{Qanalytic} (recall that $S$ is Borel-adapted) NA$(\Pc)$ is equivalent to the fact that for all $P \in \Pc$, there exists some $Q \in \Qc$ such that $ P \ll Q$ where
\begin{align}\label{mathR}
 \Qc:=\{Q \in \mathfrak{P}(\Omega^T)\ | \ \exists \, P \in \Pc,\ Q \ll P \; \mbox{and $S$ is a martingale under $Q$}\}.
\end{align}
\begin{remark}
 By the same token, further results, e.g., on the Superhedging Theorem or the worst-case expected utility maximisation (see \cite{Nutz},  \cite{BC16},  \cite{Bart16} and \cite{NS16}) provide more evidence supporting the view that NA$(\Pc)$ is a well-chosen extension of the classical no-arbitrage assumption. However, the price to pay when using NA$(\Pc)$ is related to technical measurability issues arising when one considers a one step version of the NA$(\Pc)$ (see \eqref{NP1} below). In \cite{Bart16} a stronger version of \cref{NAQT} is introduced which states that \eqref{NP1} below is satisfied for all $\o \in \O^t$. In \cite{BC16}, a stronger version of no-arbitrage is proposed (sNA$(\Pc)$) which states that there is no-arbitrage in the classical sense for all measures $P \in \Pc$. In both cases  some of the measurability issues are simplified. Finally, different approaches to model uncertainty may lead to fundamentally different notions of arbitrage. In the pathwise approach, one typically asks that some subset of paths supports a feasible model -- this is in contrast to the multi-prior setup in this paper where essentially \emph{all} $P\in \Pc$ are assumed to be feasible models. In consequence, the no-arbitrage conditions in the pathwise approach, e.g., model independent arbitrage as in \cite{DaHo07,CO11,ABPW13} or  Arbitrage de la classe $\Sc$  (see \cite{ClassS}), are much weaker than NA$(\Pc)$, i.e., their notions of arbitrage are much stronger than the $\Pc$-q.s.\ arbitrage. To wit, negation of sNA$(\Pc)$ above gives that there is a classical arbitrage for at least one $P\in \Pc$ while \cite{DaHo07} say that there is a \emph{weak arbitrage opportunity} if there is a classical arbitrage under \emph{all} $P\in \Pc$.
\end{remark}

The one step version of the NA$(\Pc)$ is the following: for $\omega \in \Omega^t$ fixed we say that NA$(\Pc_{t}(\omega))$ condition holds if for all $H \in \mathbb{R}^d$
\begin{align} \label{NP1}
H\Delta S_{t+1}(\omega,\cdot) \geq 0 \; \Pc_{t}(\omega) \mbox{-q.s.} \quad\Rightarrow \quad H\Delta S_{t+1}(\omega,\cdot) = 0 \;\Pc_{t}(\omega)\mbox{-q.s.} \end{align}
It is proved in \cite[Theorem 4.5]{BN} that under the assumption that $S$ is Borel measurable and (APS) of $\Pc$, the condition NA$(\Pc)$ is equivalent to the fact that for all $0\leq t\leq T-1$, there exists some $\Pc$-full measure set $\Omega^{t}_{NA} \in \Fut$, such that for all $\omega \in \Omega^{t}_{NA}$, NA$(\Pc_{t}(\omega))$ holds. We also introduce the one-step versions of the set $\Qc$:
\begin{align*}
\Qc_{t}(\omega)=\left\{Q \in \mathfrak{P}(\Omega)\ | \ \exists \, P\in \Pc_{t}(\omega) \mbox{ such that } Q \ll P\; \mbox{ and } \EE_Q [\Delta S_{t+1}(\omega,\cdot)] =0\right\}.
\end{align*}

As is shown in \cite[Lemma 4.8]{BN}, $\Qc_{t}$ has an analytic graph. An application of the Jankov-von Neumann Theorem and Fubini's Theorem shows that we have
\begin{align}\label{Mstar}
\Qc= \{ Q_0 \otimes \cdots \otimes Q_{T-1} \ | \ Q_t \mbox{ is }\Fut\mbox{-measurable selector of }\Qc_t \text{ for all }0\le t\le T-1 \}.
\end{align}

\section{Existence and characterisation of minimal superhedging strategies}
\label{sec. conc}
The Superhedging theorem, also known as the pricing-hedging duality, is one of the fundamental results in the classical setting of $\Pc=\{P\}$, see \cite{fs,FK97} and the references therein. One of the main results in \cite{BN} was its extension to the multi-prior case: 
\begin{align}
\label{repdual}
\pi(\xi)=\sup_{Q \in \Qc} \EE_{Q} [\xi].
\end{align}
While this duality is important and theoretically pleasing, its use for computations may be hampered by lack of a tractable characterisation of the set $\Qc$. One of our aims is to give a more algorithmic approach to the above duality. To this end, we establish a suitable dynamic programming principle (DPP) for the superhedging price and also show existence of minimal superhedging strategies in the spirit of \cite{FK97}. This leads to a robust generalisation of the algorithm in \cite{CGT06} and gives a way to handle computation of superhedging prices and, importantly, strategies.

\subsection{Main Result}
To state our main result we need to introduce some further notation. For an upper semianalytic function $\xi: \Omega^T \to \R$ let $\{\Xt\}_{0\le t\le T}$ denote the one step superhedging prices $\pi_t(\xi): \Omega^t \to \overline{\R}$ given by
\begin{equation}
\begin{split}
\XT(\omega)  &=  \xi(\omega),\quad \textrm{ and for }0 \leq t \leq T-1
\label{defminr}\\ 
\Xt(\omega) &= \inf \{x \ | \ \exists H \in \R^d \text{ such that } x + H \Delta S_{t+1}(\omega,\cdot) \ge \pi_{t+1}(\xi)(\omega,\cdot) \; \Pc_{t}(\omega)\mbox{-q.s.}\}. 
\end{split}
\end{equation}
Note that the above superhedging prices can be construed as concave envelopes. Indeed, with a slight abuse of notation we denote the one-step quasisure concave envelope $\widehat{f}: \Omega^{t}\times \R_+^d \to \R$ by
\begin{align*}
\widehat{f}(\omega,s)=\inf \{u(s) \ | \ u:\R^d_+ \to \R \mbox{ closed concave, }u(S_{t+1}(\o,\cdot)) \ge f(\omega,\cdot) \ \Pc_t(\omega)\text{-q.s.} \}
\end{align*}
for $t\in \{1, \dots, T\}$ and an upper semianalytic function $f:\Omega^{t}\times \Omega\to\R$, where we recall that a concave function is closed, if its superlevel set is closed. As every concave function can be written as the pointwise infimum of linear functions the equality
\begin{equation}\label{eq:sh-cncenv}
\Xt(\omega)=\widehat{\pi_{t+1}(\xi)}(\omega,S_{t}(\omega)),\quad \omega\in \Omega^t,\quad 0\leq t\leq T-1
\end{equation}
holds and the one-step superhedging prices can be obtained by iteratively taking concave envelopes in the coordinates of $\Omega$.

Let us now define the corresponding dual expressions for the one step case. For $\omega \in \Omega^t$ and $f: \, \O^t \times \O \to \overline{\R}$, we define $\Ec_t (f) : \, \Omega^t  \to \overline{\R}$ by
\begin{align*}
\Ec_t(f) (\o)= \sup_{Q \in \Qc_{t}(\omega)}\EE_Q[f(\omega, \cdot)].
\end{align*}
Furthermore, for measurable $\xi: \Omega^T \to \R$, we define the sequences of operators
\begin{align}
\Ec^T(\xi)  =  \xi\quad \textrm{ and }\quad 
\label{defect} \Ec^t(\xi) =  \Ec_t \circ  \Ec^{t+1} (\xi),\quad 0 \leq t \leq T-1.
\end{align}
With notation at hand, we can state our first main result which gives existence of minimal superhedging strategies and establishes a Dynamic Programming Principle for $\Xt$ and $\Ec^t(\xi)$.
\begin{theorem}
\label{eu} Let \cref{Qanalytic} and NA$(\Pc)$ hold.
Let $\xi:\Omega^T \to\R$ be  an upper semianalytic function such that $\sup_{Q\in\mathcal{Q}} \EE_Q[\xi^- ] < \infty.$ Then:
\begin{itemize}
\item[(i)] there exists a minimal superhedging strategy in $\Sp(\xi)$;
\item[(ii)] for any minimal superhedging strategy
$(\hat{x},\hat{H},\hat{C})\in \Sp(\xi)$, its value satisfies
\begin{align}
\label{eeu} V^{\hat{x},\hat{H},\hat{C}}_t &= \Xt=\Ec^t(\xi) \;\; \Pc \mbox{-q.s.},\quad 0\leq t\leq T.
\end{align}
In particular,
\begin{align*}
\hat{x}=\pi(\xi) &= \pi_0(\xi)=\Ec^0 (\xi).
\end{align*}
\end{itemize}
\end{theorem}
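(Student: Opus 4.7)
The plan is to proceed by backward induction from time $T$ to $0$, combining a one-step superhedging duality under each $\Pc_t(\o)$ with a measurable selection of the one-step optimal hedge. At $t=T$ we trivially have $\XT=\xi=\Ec^T(\xi)$, which is upper semianalytic by assumption. For the induction step, assume $\pi_{t+1}(\xi)=\Ec^{t+1}(\xi)$ $\Pc$-q.s., that it is upper semianalytic, and that $\sup_{Q\in\Qc}\EE_Q[(\pi_{t+1}(\xi))^-]<\infty$. For each fixed $\o\in\O^t$, the slice $f(\o,\cdot):=\pi_{t+1}(\xi)(\o,\cdot)$ is then upper semianalytic on $\O$, and on the $\Pc$-full measure set $\Omega^t_{NA}$ on which NA$(\Pc_t(\o))$ holds (see the discussion after \eqref{NP1}), a one-step version of the duality \eqref{repdual} yields
\[
\Xt(\o)=\Ec_t(f)(\o)=\sup_{Q\in\Qc_t(\o)}\EE_Q[f(\o,\cdot)],
\]
with the infimum in \eqref{defminr} attained by some $\hat H_{t+1}(\o)\in\R^d$.

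For measurability, I would argue that the operator $\Ec_t$ preserves upper semianalyticity in the spirit of \cite[Lemma 4.10]{BN}, using that $\mathrm{graph}(\Qc_t)$ is analytic (recalled before \eqref{Mstar}) together with \cite[Prop.~7.48]{bs}; hence the induction hypothesis propagates to $\Ec^t(\xi)$. To extract $\hat H_{t+1}$ as an $\Fut$-measurable selector, I would apply the Jankov--von Neumann theorem to the set
\[
\Gamma_t:=\{(\o,H)\in\O^t\times\R^d\ :\ \Xt(\o)+H\Delta S_{t+1}(\o,\cdot)\ge\pi_{t+1}(\xi)(\o,\cdot)\;\Pc_t(\o)\text{-q.s.}\},
\]
whose analyticity follows from (APS) of $\Pc$ and whose $\o$-sections are nonempty on $\Omega^t_{NA}$. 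The hypothesis $\sup_{Q\in\Qc}\EE_Q[\xi^-]<\infty$ is propagated backwards in $t$ and rules out $\Xt=-\infty$ on a $\Pc$-full measure set, so the construction is never degenerate.

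Next I would paste these one-step selectors into a global strategy: set $\hat x:=\pi_0(\xi)$, $V_0^{\hat x,\hat H,\hat C}:=\hat x$, and recursively
\[
\Delta\hat C_{t+1}:=V_t^{\hat x,\hat H,\hat C}+\hat H_{t+1}\Delta S_{t+1}-\pi_{t+1}(\xi),\qquad V_{t+1}^{\hat x,\hat H,\hat C}:=\pi_{t+1}(\xi).
\]
The one-step inequality from the selection gives $\Delta\hat C_{t+1}\ge 0$ $\Pc$-q.s., so $\hat C\in\Cc$; since $V_T^{\hat x,\hat H,\hat C}=\XT=\xi$ $\Pc$-q.s., we obtain $(\hat x,\hat H,\hat C)\in\Sp(\xi)$ with $V_t^{\hat x,\hat H,\hat C}=\Xt=\Ec^t(\xi)$ at every $t$. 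Conversely, for any $(x,H,C)\in\Sp(\xi)$, a backward induction using $V_T^{x,H,C}\ge\xi$ and the infimum characterisation of $\Xt$ gives $V_t^{x,H,C}\ge\Xt=V_t^{\hat x,\hat H,\hat C}$ $\Pc$-q.s., which simultaneously delivers $\hat x=\pi(\xi)=\pi_0(\xi)$ and the minimality of $(\hat x,\hat H,\hat C)$.

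The main obstacle is the measurability/selection step: verifying at each backward stage that (a) the one-step duality together with the attainment of the optimal hedge holds on a $\Pc$-full, $\Fut$-measurable set of $\o$, (b) upper semianalyticity is preserved by $\Ec_t$ (which rests essentially on the analyticity of $\mathrm{graph}(\Qc_t)$ and on an operational calculus for upper semianalytic functions as in \cite{bs}), and (c) Jankov--von Neumann produces a universally measurable selector $\hat H_{t+1}$ of $\Gamma_t$ together with an $\Fut$-measurable version of $\Xt$. I expect most of the technical work in the appendix to be devoted to these analytic-set manipulations rather than to the essentially classical Snell-envelope mechanics that underlie the dynamic programming identity.
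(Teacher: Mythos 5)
Your overall strategy matches the paper's: backward induction, a one-step superhedging duality on the $\Pc$-full set $\Omega^t_{NA}$, preservation of upper semianalyticity via \cite[Lemma 4.10]{BN}, a Jankov--von Neumann selection of $\hat H_{t+1}$, pasting the one-step selectors into a global $(\hat x,\hat H,\hat C)$, and then a backward comparison to prove minimality. The preparatory measurability steps you flag (analyticity of $\Ec_t$-images, integrability/finiteness propagation, selector existence) are indeed where most of the paper's appendix is spent, and you identify them correctly.

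However there is a genuine gap in the inductive step that your sketch glosses over. From the hypothesis $\pi_{t+1}(\xi)=\Ec^{t+1}(\xi)$ $\Pc$-q.s.\ you want to conclude $\pi_t(\xi)=\Ec^t(\xi)=\Ec_t(\Ec^{t+1}(\xi))$ $\Pc$-q.s., via the one-step identity $\pi_t(\xi)(\o)=\Ec_t\bigl(\pi_{t+1}(\xi)\bigr)(\o)$. But $\Ec_t$ is a supremum over $Q\in\Qc_t(\o)$, and the set $A=\{\pi_{t+1}(\xi)\neq\Ec^{t+1}(\xi)\}$ being $\Pc$-polar does \emph{not} obviously imply that the $\o$-section $A_\o$ is $\Pc_t(\o)$-polar for $\Pc$-q.e.\ $\o$; attempting to show that directly requires a measurable selection of a ``bad'' kernel $P_t(\o)$ with $P_t(\o)(A_\o)>0$, and since $A$ is only universally measurable (not analytic) it is unclear that this selection can be carried out. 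The paper avoids the issue entirely: in its proof of \eqref{beindidonc} it fixes $P\in\Pc$ and constructs a single dominating kernel $\tilde P_t(\o)=\sum_n 2^{-n}P_n(\o)$, with $P_n(\o)=\bigl(\hat P_n(\o)+\bar P_n(\o)\bigr)/2\in\Pc_t(\o)$ chosen so that $\tilde P_t(\o)$ dominates $\varepsilon$-optimal dual measures simultaneously for $\Ec_t(\Ec^{t+1}(\xi))$ and for $\Ec_t(\pi_{t+1}(\xi))$. Both one-step suprema are then rewritten as suprema over $\{Q\ll\tilde P_t(\o),\ \EE_Q[\Delta S_{t+1}(\o,\cdot)]=0\}$ via \cite[Theorem~3.4]{BN}, and since $\pi_{t+1}(\xi)=\Ec^{t+1}(\xi)$ holds $(P|_{\Fut}\otimes\tilde P_t)$-a.s.\ by the induction hypothesis and countable mixing, the two suprema coincide $P$-a.s. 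Your sketch needs either this device or a worked-out substitute for it; without it, the crucial identity $\Xt=\Ec^t(\xi)$ $\Pc$-q.s.\ is not justified. The rest of your plan (the finiteness propagation via $\sup_{Q\in\Qc}\EE_Q[\Ec^t(\xi^-)]<\infty$, the pasting of selectors into $\hat C$, and the minimality comparison) matches the paper's Lemma~B.1/B.3 and the final part of the proof.
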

Perhaps suprisingly the proof of the above result is technically involved and is thus relegated to Appendix \ref{appendix:laurence}. However in the special case of the canonical setting $\Omega=\R_+^d$, $S_t(\omega)=\omega_t$ and $\Pc=\{P \in \mathfrak{P}(X) \ | \ \text{supp}(P) \text{ is finite}\}$ for an analytic set $X \subseteq \Omega^T$  the underlying arguments are quite intuitive and simple. We outline them in the next section.

\subsection{Canonical space: Concave envelopes and computation of the superhedging price}
\label{secex}

In this subsection we work on the canonical space, i.e. we set $\Omega= \R^d_+$ and $S_t(\o)=(\o_t^1, \dots, \o_t^d)$. In particular $\xi(S_1(\o), \dots, S_T(\omega))=\xi(\o)$ holds.\\
We start by developing in more detail the special case when $\Pc$ is obtained by specifying the support for feasible moves of the stock prices. This captures the pathwise approach but is also natural in the quasisure framework as NA$(\Pc)$ and $\pi(\xi)$ only depend on the polar sets of $\Pc$. More precisely we give the following definition:
\begin{definition}
Assume that for $0\le t\le T-1$ we are given correspondences $f_t: \Omega^t \twoheadrightarrow \R^d$. We say that a sequence of sets $(\Pc_t)_{0\le t \le T-1}$ such that $\Pc_t \subseteq \mathfrak{P}(\Omega)$ for all $0 \le t \le T-1$ is generated by $\{f_t\}_{0 \le t \le T-1}$ if 
\begin{align*}
\Pc_t(\omega)=\{ P \in \mathfrak{P}(\Omega) \ | \ \text{supp}(P) \subseteq f_t(\omega) \}
\end{align*}
for $0\le t \le T-1$, where $\text{supp}(P)$ denotes the support of a measure $P$. 
\end{definition}
Recall that a correspondence $f: \Omega^t \twoheadrightarrow \R^d$ is called measurable if $\{\omega \in \Omega^t \ | \ f(\omega) \cap O \neq \emptyset\}\in \mathcal{B}(\Omega^t)$ for all open sets $O \subseteq \R^d$. We refer to \cite[14.A, p.643ff.]{rw} for the theory of measurable correspondences.
\begin{lemma}\label{lem. graph}
Let $(\Pc_t)_{0\le t \le T-1}$ be generated by measurable, closed valued correspondences $\{f_t\}_{0 \le t\le T-1}$. Then $\Pc_t$ has Borel measurable graph for all $0 \le t \le T-1$.
\end{lemma}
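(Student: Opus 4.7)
The plan is to write $\text{graph}(\Pc_t)$ as the preimage of $\{1\}$ under a jointly Borel measurable map, exploiting the equivalence $\text{supp}(P)\subseteq F \iff P(F)=1$ which is valid for closed subsets $F$ of a Polish space.

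First, I would establish that the graph $G_t:=\{(\omega,x)\in\Omega^t\times\Omega : x\in f_t(\omega)\}$ of the correspondence $f_t$ itself is Borel. Since $f_t$ is measurable and closed-valued, the Castaing representation theorem yields a sequence of Borel measurable selectors $(g_n)_{n\in\N}$ with $f_t(\omega)=\overline{\{g_n(\omega):n\in\N\}}$ for every $\omega\in\Omega^t$ with $f_t(\omega)\neq\emptyset$. Consequently, the map $(\omega,x)\mapsto d(x,f_t(\omega))=\inf_{n\in\N}\|x-g_n(\omega)\|$ is Borel as a countable infimum of Borel functions, and hence $G_t=\{(\omega,x) : d(x,f_t(\omega))=0\}$ is Borel in $\Omega^t\times\Omega$.

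Next, since $f_t(\omega)$ is closed, $\text{supp}(P)\subseteq f_t(\omega)$ is equivalent to $P(f_t(\omega))=1$, which in turn reads $\Phi(\omega,P):=\int \1_{G_t}(\omega,x)\,dP(x)=1$. Therefore $\text{graph}(\Pc_t)=\Phi^{-1}(\{1\})$, and it suffices to show that $\Phi:\Omega^t\times\mathfrak{P}(\Omega)\to[0,1]$ is Borel measurable. I would obtain this via a functional monotone class argument: let $\mathcal{H}$ denote the class of bounded Borel $h:\Omega^t\times\Omega\to\R$ for which $(\omega,P)\mapsto\int h(\omega,x)\,dP(x)$ is Borel on $\Omega^t\times\mathfrak{P}(\Omega)$. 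Then $\mathcal{H}$ contains every product $h(\omega,x)=\1_A(\omega)\1_B(x)$ with $A\in\mathcal{B}(\Omega^t)$ and $B\in\mathcal{B}(\Omega)$, since $P\mapsto P(B)$ is Borel on $\mathfrak{P}(\Omega)$ endowed with the weak topology; it is a vector space; and it is stable under bounded monotone pointwise limits by monotone convergence. Hence $\mathcal{H}$ contains all bounded Borel functions, so $\Phi$ lies in $\mathcal{H}$ and $\text{graph}(\Pc_t)=\Phi^{-1}(\{1\})$ is Borel.

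The main obstacle I anticipate is precisely this joint Borel measurability of $\Phi$. Separate measurability in $\omega$ (for fixed $P$, via joint Borel-ness of $G_t$) and in $P$ (for fixed $\omega$, where $P\mapsto P(f_t(\omega))$ is even upper semicontinuous on a closed section) is elementary; the monotone class machinery is what is needed to upgrade these to joint Borel measurability on the product space $\Omega^t\times\mathfrak{P}(\Omega)$.
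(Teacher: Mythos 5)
Your proof is correct, but it follows a genuinely different route from the paper's. The paper also starts from the Borel measurability of $\mathrm{graph}(f_t)$ (citing Rockafellar--Wets Theorem 14.8 directly, where you rederive it via a Castaing representation and the distance function $(\omega,x)\mapsto d(x,f_t(\omega))$ --- equivalent content), but then it embeds the problem into the space of measures on $\Omega^{t+1}$: it uses that $\mathfrak{P}(\mathrm{graph}(f_t))=\{\mu : \mu(\mathrm{graph}(f_t))=1\}$ is Borel in $\mathfrak{P}(\Omega^{t+1})$ (Bertsekas--Shreve Cor.~7.25.1), shows that $D(\omega,P)=\delta_\omega\otimes P$ is a homeomorphism onto its image, and recovers $\mathrm{graph}(\Pc_t)$ as $D^{-1}$ of a Borel set. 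You instead prove joint Borel measurability of the evaluation map $\Phi(\omega,P)=P(f_t(\omega))$ directly by a functional monotone class argument and take $\Phi^{-1}(\{1\})$; since $(\delta_\omega\otimes P)(\mathrm{graph}(f_t))=P(f_t(\omega))$ by Fubini, the two arguments hinge on the same underlying fact, packaged differently. Your version is more elementary and self-contained: it sidesteps the topological verification that $D$ is a homeomorphism onto its image and the slightly delicate point that the image $\{\delta_\omega\otimes P\}$ is Borel in the ambient space $\mathfrak{P}(\Omega^{t+1})$ (which really rests on the Lusin--Souslin theorem for injective Borel maps, not merely on ``homeomorphisms map Borel sets to Borel sets''). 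On the other hand, your monotone class step is precisely the content of Bertsekas--Shreve Prop.~7.29 on joint measurability of $(x,q)\mapsto\int f(x,y)\,q(dy)$, so it could be compressed to a citation, making your argument as short as the paper's. Both proofs use, explicitly or implicitly, that for closed $F$ one has $\mathrm{supp}(P)\subseteq F$ iff $P(F)=1$, which is where closed-valuedness of $f_t$ enters.
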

Under the assumptions of \cref{lem. graph} we can then define $\Pc \subseteq \mathfrak{P}(\Omega^T)$ satisfying (APS) as in \cref{Qanalytic}.\\
\begin{proof}
By assumption the graph of $f_t$ is $\Bc(\Omega^t) \otimes \Bc(\R^d)=\Bc((\R^d)^{t+1})$-measurable for all $t \in \{0, \dots T-1\}$ (see \cite[Theorem 14.8, p.648]{rw}). Thus by \cite[Cor. 7.25.1, p.134]{bs} $\mathfrak{P}(\text{graph}(f_t))$ is Borel as well. Define the map
\begin{align*}
D: \Omega^t \times \mathfrak{P}(\R^d_+) \to \mathfrak{P}(\Omega^{t+1}), \ (\omega, P) \mapsto \delta_{\omega} \otimes P
\end{align*}
and note that $D$ is a homeomorphism from $\Omega^t \times \mathfrak{P}(\R^d_+)$ to $\{\delta_{\omega} \otimes P \ |\ \omega \in \Omega^t, \ P \in \mathfrak{P}(\R^d_+) \}$. Indeed, take a sequence $(\omega_n, P_n) \in \Omega^t \times \mathfrak{P}(\R^d_+)$ such that $(\omega_n, P_n)$ converges to $(\omega, P)$ in the product topology. Denote by $\mathcal{L}^1_b(\Omega^{t+1})$ the bounded 1-Lipschitz functions on $\Omega^{t+1}$. Then
\begin{align*}
&\lim_{n \to \infty}\sup_{f \in \mathcal{L}^1_b(\Omega^{t+1})} \left|\int_{\Omega^{t+1}} f d(\delta_{\omega_n} \otimes P_n) - \int_{\Omega^{t+1}} f d(\delta_{\omega} \otimes P) \right| \\
\le\ &\lim_{n \to \infty} \left(|\omega_n-\omega|+\sup_{f \in \mathcal{L}^1_b(\Omega^{t+1})} \left|\int_{\Omega^{t+1}} f(\omega, \cdot) dP_n - \int_{\Omega^{t+1}} f(\omega, \cdot)  dP \right|\right)=0,
\end{align*}
so $\delta_{\omega_n} \otimes P_n$ converges weakly to $\delta_{\omega}\otimes P$. Continuity of the inverse map follows directly from the definition of weak convergence of measures. Note also that a homeomorphism map Borel sets to Borel sets. As
\begin{align*}
\mathfrak{P}(\text{graph}(f_t)) \cap \{\delta_{\omega} \otimes P \ |\ \omega \in \Omega^t,\ P \in \mathfrak{P}(\R^d) \}
\end{align*}
is Borel-measurable, applying the inverse map $D^{-1}$ we conclude that
\begin{align*}
\text{graph}(\Pc_t)=D^{-1}(\mathfrak{P}(\text{graph}(f_t)) \cap \{\delta_{\omega} \otimes P \ |\ \omega \in \Omega^t,\ P \in \mathfrak{P}(\R^d) \})
\end{align*}
is Borel.
\end{proof}

In fact, for such a set $\Pc$ the condition NA$(\Pc_t(\omega))$ is equivalent to $0 \in \text{ri}(f_t(\omega)-S_t(\omega))$, where $\text{ri}(A)$ denotes the relative interior of the convex hull of $A$. For a proof of this result in a more general setup, see \cite[Thm. 3.3, p. 6]{jw}. This deterministic condition is called No Pointwise Arbitrage in \cite{Bur16b} and can be checked without resorting to the use of probability measures. 

As an intuitive outline of the proof of \cref{eu}, let us now assume that $\Pc=\{P \in \mathfrak{P}(X) \ | \ \text{supp}(P) \text{ is finite}\}$ and NA$(\Pc)$ holds, where $X \subseteq \Omega^T$ is some analytic set. 
We can now prove the crucial equality $\Xt=\Ec_t(\pi_{t+1}(\xi))$ directly using the concave envelope characterisation \eqref{eq:sh-cncenv}, see also \cite{beiglbock2014martingale} and the references therein. Indeed, it follows from \cite[Prop 6.1, p. 14]{jw} that $\Pc$ satisfies \cref{Qanalytic} in this case and
\begin{align*}
\Qc= \{ Q \in \mathfrak{P}(X) \ | \ \text{supp}(Q) \text{ is finite and }S \text{ is a martingale under }Q \},
\end{align*}
see also \cite[Example 1.2, p.827]{BN} for $X=(\R^d)^T$ and \cite[Cor. 4.6, p.151]{Lange} for locally compact $X$. Let $\omega=(\omega_1, \dots, \omega_t) \in \Omega^t$. Using Jensen's inequality
\begin{align}\label{eq. jensen}
\Ec_t(f)(\omega) &= \sup_{Q \in \Qc_t(\omega)} \EE_Q[f(\omega, \cdot)]\nonumber
\le \sup_{Q \in \Qc_t(\omega)} \EE_Q[\hat{f}(\omega, \cdot)]\\
&\le \sup_{Q \in \Qc_t(\omega)}\hat{f}(\omega, \EE_Q[\cdot])=\hat{f}(\omega, \omega_t),
\end{align}
where $\E_{Q}[\cdot]=\int_{\R_+^d} yQ(dy)$. To establish the $``\ge"$-inequality, it suffices to observe that  
\begin{align*}
s \mapsto \sup_{Q \ll P \text{ for some }P\in \Pc_t(\omega), \ \EE_{Q}[\cdot]=s} \EE_Q[f(\omega,\cdot)]
\end{align*}
is concave and dominates $f(\omega,\cdot)$ on $S_{t+1}(\Sigma_t^{\omega})$, where  $\Sigma_t^{\omega}:= \{\tilde{\omega}\in X \ | \ (\tilde{\omega}_1, \dots, \tilde{\omega}_t)=\omega \}$. While concavity is clear in general (see \cite[Lemma 2.2]{beiglbock2014martingale}), the domination property crucially relies on the fact that the set $\{Q \ll P \ \text{for some }P \in \Pc_t(\omega), \ \EE_{Q}[\cdot]=s\}$ contains the Dirac measures at points $s \in S_{t+1}(\Sigma_t^{\omega}).$ For a general set $\Pc$ this is not true: For example in the case $\Pc=\{P\}$ for some $P \in \mathfrak{P}(\Omega)$ in general only the set $ \{Q \ll P, \ \EE_{Q}[\cdot]=s\}$ is non-empty for $s$ in the relative interior of the convex hull of the support of $P$ (see \cite[Theorem 1.48, p.29]{fs}).

The following definition further characterises closed-valued correspondences $\{f_t\}_{0 \le t \le T-1}$ and is needed to identify an important subclass of sets $\{\Pc_t\}_{0 \le t \le T-1}$ generated by $\{f_t\}_{0 \le t \le T-1}$:

\begin{definition} \label{def. corunifcont}
A closed-valued correspondence $f_t: \Omega^t \to \R^d$ is called uniformly continuous if for all $\epsilon>0$ there exists $\delta >0$ such that for all $\omega, \omega' \in \Omega^T$ such that $|\omega'-\omega|\le \delta$ we have $d_H(f_t(\omega), f_t(\omega'))\le \epsilon$,
where $$d_H(A,B):=\max\left(\sup_{v \in A} \inf_{\tilde{v} \in B}|v- \tilde{v}|, \sup_{\tilde{v}\in B} \inf_{v \in A}|v-\tilde{v}| \right)$$ denotes the Hausdorff metric on closed subsets $A,B$ of $\Omega$.
\end{definition}
Uniformly continuous correspondences are in particular continuous (see \cite[Def. 5.4, p.152]{rw}) and thus measurable (\cite[Theorem 5.7, p.154]{rw}). It turns out, that when the correspondences fulfil this continuity condition and are compact-valued, the $\Pc$-q.s. superhedging price of a continuous payoff $\xi$ coincides with the $P$-a.s. superhedging price of $\xi$ for every $P$ with support equal to the paths generated by the correspondences $\{f_t\}_{0 \le t \le T-1}$:
\begin{proposition}\label{prop:pathwiseex}
Suppose $(\Pc_t)_{0 \le t \le T-1}$ is generated by closed-valued, uniformly continuous correspondences $\{f_t\}_{0 \le t \le T-1}$ and that NA$(\Pc)$ holds. Furthermore assume that the function $\xi: \Omega^T \to \R$ is continuous and $\{f_t\}_{0\le t\le T-1}$ are compact-valued.
Take any measure $P=P_0 \otimes \cdots \otimes P_{T-1}$ such that 
\begin{align*}
\text{supp}(P_t(\omega))=f_{t}(\omega), \quad 0 \le t \le T-1, \ \omega \in \Omega^t.
\end{align*}
Then, for all $0 \le t \le T-1$ and $\omega \in \Omega^t$,
\begin{align}\label{eq. continuous}
\Xt(\omega)=\inf \{ x \in \R \ | \ \exists H \in \R^d \text{ such that } x+H \Delta S_{t+1}(\omega, \cdot) \ge \pi_{t+1}(\xi)(\omega, \cdot) \ P\text{-a.s}\}.
\end{align}
and $\omega \mapsto \pi_t(\xi)(\omega)$ is continuous.
\end{proposition}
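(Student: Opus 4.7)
The plan is to establish both claims simultaneously by backward induction on $t$, proving at each step the pointwise equality \eqref{eq. continuous} and the continuity of $\omega \mapsto \pi_t(\xi)(\omega)$. The base case $t=T$ is trivial since $\pi_T(\xi)=\xi$ is continuous by hypothesis and \eqref{eq. continuous} becomes an equality between identical expressions.

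For the inductive step, assume $\pi_{t+1}(\xi)$ is continuous. First I would reduce both sides of \eqref{eq. continuous} to the common deterministic expression
\[
\inf\bigl\{x\in\R\ \big|\ \exists H\in\R^d\text{ with }x+H(y-\omega_t)\ge \pi_{t+1}(\xi)(\omega,y)\text{ for all }y\in f_t(\omega)\bigr\}.
\]
For the left-hand side, note that $\Pc_t(\omega)$ contains every Dirac measure $\delta_y$ with $y\in f_t(\omega)$, so $\Pc_t(\omega)$-polar sets are precisely those disjoint from $f_t(\omega)$, and the $\Pc_t(\omega)$-q.s.\ inequality collapses to a pointwise inequality on $f_t(\omega)$. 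For the right-hand side, since $\pi_{t+1}(\xi)(\omega,\cdot)$ is continuous by induction and $\text{supp}(P_t(\omega))=f_t(\omega)$ is closed, the set where the superhedging inequality holds is closed, so it is of full $P_t(\omega)$-measure iff it contains $f_t(\omega)$.

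Next I would establish continuity of $\pi_t(\xi)$. By \eqref{eq:sh-cncenv} one has $\pi_t(\xi)(\omega)=\widehat{\pi_{t+1}(\xi)}(\omega,\omega_t)$, and combining with the deterministic reformulation above plus LP/Carath\'eodory duality, $\pi_t(\xi)(\omega)$ equals
\[
\sup\left\{\int_{f_t(\omega)}\pi_{t+1}(\xi)(\omega,y)\,Q(dy)\ \bigg|\ Q\in\mathfrak{P}(f_t(\omega)),\ \int y\,Q(dy)=\omega_t\right\},
\]
with the supremum attained on measures supported on at most $d+2$ points. Compactness of $f_t(\omega)$ renders the feasible set of measures weakly compact, and joint continuity of the objective in $(\omega,Q)$ follows from continuity of $\pi_{t+1}(\xi)$ together with Hausdorff continuity of $f_t$, which ensures $\pi_{t+1}(\xi)(\omega^n,\cdot)\to\pi_{t+1}(\xi)(\omega,\cdot)$ uniformly on any compact neighbourhood of $f_t(\omega)$. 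Continuity of $\pi_t(\xi)$ then follows from Berge's Maximum Theorem applied to this parametric optimisation.

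The main obstacle will be lower hemicontinuity of the correspondence $\omega\mapsto\{Q\in\mathfrak{P}(f_t(\omega)):\int y\,Q(dy)=\omega_t\}$: given a near-optimal $Q=\sum_{i=1}^{d+2}\lambda_i\delta_{y_i}$ at $\omega$, one has to construct $Q^n$ on $f_t(\omega^n)$ with mean $\omega^n_t$ converging weakly to $Q$. Hausdorff continuity of $f_t$ yields $y_i^n\in f_t(\omega^n)$ with $y_i^n\to y_i$, and the barycentric constraint $\sum\lambda_i^n y_i^n=\omega^n_t$ can be met by a small perturbation of the weights provided $\omega^n_t$ lies in the relative interior of $\mathrm{conv}(\{y_i^n\})$. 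This is exactly where no-arbitrage enters: by \cite[Thm.~3.3]{jw}, NA$(\Pc_t(\omega))$ reads $0\in\mathrm{ri}(f_t(\omega)-\omega_t)$, and, after enlarging the initial support $\{y_i\}$ if necessary so that $\omega_t\in\mathrm{ri}\,\mathrm{conv}(\{y_i\})$, this openness property persists for $n$ large by Hausdorff continuity, permitting the required weight adjustment.
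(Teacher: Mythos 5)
Your reduction of \eqref{eq. continuous} to the pointwise inequality on $f_t(\omega)$ (Diracs for the quasi-sure side, closedness of the exception set plus $\mathrm{supp}(P_t(\omega))=f_t(\omega)$ for the $P$-a.s.\ side, using continuity of $\pi_{t+1}(\xi)$ from the induction) is exactly the paper's argument and is fine. The continuity part, however, is where your route diverges and where it has a genuine gap: the lower hemicontinuity of $\omega\mapsto\{Q\in\mathfrak{P}(f_t(\omega)):\int y\,Q(dy)=\omega_t\}$, which you yourself identify as the main obstacle, is not established by the argument you sketch. The claim that ``$\omega_t\in\mathrm{ri}\,\mathrm{conv}(\{y_i\})$ persists for $n$ large by Hausdorff continuity'' is false in general when $\mathrm{conv}(f_t(\omega))$ is not full-dimensional in $\R^d$: relative interiority is relative to the affine hull, and under a Hausdorff-small perturbation of the support together with the movement of the spot $\omega^n_t$, the spot can leave (or the affine hull can tilt away from) $\mathrm{aff}(\{y^n_i\})$, so no choice of weights on the perturbed points reproduces the barycentre; worse, the exact barycentre constraint can force a macroscopic reallocation of mass onto newly appearing support directions, so $F(\omega^n)$ need not approximate a given $Q\in F(\omega)$ at all. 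Your argument is sound in the nondegenerate case $\omega_t\in\mathrm{int}\,\mathrm{conv}(f_t(\omega))$, but degenerate supports (redundant assets, lower-dimensional one-step supports) are not excluded by the hypotheses. A second, related soft spot: Berge's theorem gives continuity of the dual value function, so to conclude for $\pi_t(\xi)$ you need strong duality (hence $\omega'_t\in\mathrm{ri}\,\mathrm{conv}(f_t(\omega'))$, i.e.\ one-step no-arbitrage) at all $\omega'$ in a neighbourhood, not just at $\omega$; the stated hypothesis NA$(\Pc)$ only gives the one-step condition quasi-surely, so this needs to be addressed (the paper's own proof shares this implicit requirement, but your dual route leans on it twice).

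For comparison, the paper avoids duality and lower hemicontinuity altogether and argues on the primal side: it transports an $\epsilon$-optimal superhedge at a nearby node $\tilde\omega$ to $\omega$ (and vice versa), using uniform continuity of $\xi$ (inductively of $\pi_{t+1}(\xi)$) on a compact neighbourhood of $\{\omega\}\times f_{T-1}(\omega)$ together with Hausdorff continuity of $f_{T-1}$; the only delicate step is a uniform bound on the hedge vectors $H_T(\tilde\omega)$, obtained by restricting them to $\mathrm{lin}(f_{T-1}(\tilde\omega)-S_{T-1}(\tilde\omega))$ and running a normalisation/compactness contradiction argument that uses NA$(\Pc_{T-1}(\omega))$ only at the point $\omega$. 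If you want to keep your dual formulation, the natural repair is a hybrid: upper semicontinuity of $\pi_t(\xi)$ from the dual side (upper hemicontinuity plus weak compactness of the feasible measures, which is robust), and lower semicontinuity from the primal side via the paper's bounded-hedge transport argument — this removes the need for lower hemicontinuity, which in the degenerate configurations may simply fail even though the conclusion remains true.
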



The proof of the above result is relegated to Appendix \ref{appendix:A}. \\
We now apply this result to a one-dimensional case of particular interest, as in \cite{CV17}, where it is easy to explicitly compute the minimal superhedging prices:
\begin{proposition}
\label{pg1} Assume that for all $0\leq t\leq T-1$,  $d_{t+1}< 1 <  u_{t+1}$ and that  the (random) sets $\Pc_{t}$ are given by
\begin{align*}
\Pc_t(\omega)=\left\{P \in \mathfrak{P}(\R)\ |\ \mbox{supp}(P) \subset [\o_t d_{t+1},\o_t  u_{t+1}]  \right\},
\end{align*}
where $\omega=(\omega_1, \dots, \omega_t)\in \Omega^t$. Then NA$(\Pc)$ holds. Let $\xi: \R^T \to \R$ be convex.
 Then
\begin{align}
\label{fgp}
 \XT  & =  \xi \nonumber\\
\Xt (\omega) & =  \alpha_{t+1} \pi_{t+1}(\xi)(\omega, \omega_{t} u_{t+1}) +  (1-\alpha_{t+1}) \pi_{t+1}(\xi)(\omega, \o_{t}
d_{t+1}),
\end{align}
where $\alpha_{t} := \frac{1-d_{t}}{u_{t}-d_{t}}$, $1\le t \le T$.
\end{proposition}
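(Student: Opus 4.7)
The strategy is to verify the hypotheses of \cref{eu}, reduce the proposition to a statement about concave envelopes via \eqref{eq:sh-cncenv}, and then carry out a backward induction using the fact that the concave envelope of a convex function on an interval is the chord through its endpoints.

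\textbf{Step 1: Verification of hypotheses.} First I would check NA$(\Pc)$: for any $\omega\in\Omega^t$ with $\omega_t>0$, the set $f_t(\omega)-S_t(\omega)=[\omega_t(d_{t+1}-1),\omega_t(u_{t+1}-1)]$ contains $0$ in its relative interior since $d_{t+1}<1<u_{t+1}$. The one-step condition NA$(\Pc_t(\omega))$ then follows (the case $\omega_t=0$ is trivial since $\Delta S_{t+1}\equiv 0$), and by \cite[Theorem 4.5]{BN} this yields NA$(\Pc)$. Next, $\xi$ is convex on $\R^T$, hence continuous, hence Borel measurable and in particular upper semianalytic. Finally, all paths in the support of any $P\in\Pc$ lie in the compact tree $\prod_t[s_0\prod_{i\leq t}d_i,\,s_0\prod_{i\leq t}u_i]$; since $\xi$ is continuous it is bounded on this compact set, so $\sup_{Q\in\Qc}\E_Q[\xi^-]<\infty$.

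\textbf{Step 2: Reduction to concave envelopes.} By \cref{eu} and \eqref{eq:sh-cncenv}, $\pi_t(\xi)(\omega)=\widehat{\pi_{t+1}(\xi)}(\omega,S_t(\omega))=\widehat{\pi_{t+1}(\xi)}(\omega,\omega_t)$, where the concave envelope uses the $\Pc_t(\omega)$-q.s.\ ordering. Because every Dirac measure $\delta_x$ with $x\in[\omega_t d_{t+1},\omega_t u_{t+1}]$ belongs to $\Pc_t(\omega)$, a set is $\Pc_t(\omega)$-polar if and only if it is disjoint from this interval. Consequently $\widehat{f}(\omega,s)$ coincides with the concave envelope of $f(\omega,\cdot)$ restricted to the interval $[\omega_t d_{t+1},\omega_t u_{t+1}]$, evaluated at $s$ (any such envelope on the interval can be extended linearly beyond the endpoints to a finite concave function on $\R_+$).

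\textbf{Step 3: Backward induction for joint convexity.} I would prove by backward induction on $t$ that $\pi_t(\xi):\Omega^t\to\R$ is convex \emph{jointly} in $(\omega_1,\ldots,\omega_t)$. The base case $t=T$ is the assumption on $\xi$. For the inductive step, fix $\omega=(\omega_1,\ldots,\omega_t)$. By the inductive hypothesis, $x\mapsto \pi_{t+1}(\xi)(\omega,x)$ is convex on $\R_+$; hence its concave envelope on $[\omega_t d_{t+1},\omega_t u_{t+1}]$ is the chord through the two endpoint values, and evaluating at $s=\omega_t$ yields
\begin{align*}
\pi_t(\xi)(\omega)=\frac{\omega_t u_{t+1}-\omega_t}{\omega_t u_{t+1}-\omega_t d_{t+1}}\,\pi_{t+1}(\xi)(\omega,\omega_t d_{t+1})+\frac{\omega_t-\omega_t d_{t+1}}{\omega_t u_{t+1}-\omega_t d_{t+1}}\,\pi_{t+1}(\xi)(\omega,\omega_t u_{t+1}),
\end{align*}
and the weights simplify to $1-\alpha_{t+1}$ and $\alpha_{t+1}$ respectively. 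This establishes \eqref{fgp}. For the next induction step, note that the map $(\omega_1,\ldots,\omega_t)\mapsto(\omega_1,\ldots,\omega_t,\omega_t u_{t+1})$ is linear, so composing the jointly convex $\pi_{t+1}(\xi)$ with it gives a jointly convex function of $(\omega_1,\ldots,\omega_t)$; likewise for the $d_{t+1}$ term. A nonnegative convex combination of two jointly convex functions is jointly convex, closing the induction.

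\textbf{Anticipated obstacle.} The only delicate point is justifying that the abstractly defined $\widehat{f}$ really coincides with the concave envelope on the interval (so that the chord formula applies verbatim); this hinges on the observation that $\Pc_t(\omega)$ contains all Dirac measures on the interval and that any concave function on $[\omega_t d_{t+1},\omega_t u_{t+1}]$ can be extended linearly outside to a finite concave function on $\R_+$. Everything else reduces to the standard chord formula for concave envelopes of convex functions plus convexity-preserving composition with an affine map.
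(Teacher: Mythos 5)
Your proposal is correct and follows essentially the same route as the paper: reduce the $\Pc_t(\omega)$-q.s.\ constraint to pointwise domination on $[\omega_t d_{t+1},\omega_t u_{t+1}]$ via the Dirac measures, identify $\pi_t(\xi)(\omega)$ with the chord of the convex function $\pi_{t+1}(\xi)(\omega,\cdot)$ evaluated at the barycentre $\omega_t$, and close a backward induction by observing that the resulting nonnegative combination of $\pi_{t+1}(\xi)$ composed with linear maps is again (jointly) convex. The only minor differences are that your appeal to \cref{eu} (and hence the integrability check) is superfluous, since the recursion \eqref{defminr} together with the envelope identity \eqref{eq:sh-cncenv} already suffices, and that, as in the paper, one should note via \cref{lem. graph} that the correspondences $f_t(\omega)=[\omega_t d_{t+1},\omega_t u_{t+1}]$ are uniformly continuous and compact-valued, so that $\Pc_t$ has Borel graph, \cref{Qanalytic} holds, and \cite[Theorem 4.5]{BN} can be invoked to pass from the one-step conditions to NA$(\Pc)$.
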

\begin{proof}
Noting that $f_t(\omega)=[\omega_t d_{t+1}, \omega_t u_{t+1}]$ is a uniformly continuous compact-valued correspondence, the graph of $\Pc_t$ is clearly non-empty, convex and Borel measurable for $0\le t \le T-1$ by \cref{lem. graph}. As $0 \in \text{ri}(f_t(\omega)-S_t(\omega))=\text{ri}([-\omega_t(1-d_{t+1}),\omega_t(u_{t+1}-1))$, NA$(\Pc)$ holds. 
We prove by induction that $\pi_t(\xi)$ satisfies \eqref{fgp} and is convex: This is clear for $t=T$. Now we assume that for some $0\le t \le T-1$, $\pi_{t+1}(\xi)$ is convex. 
As $\Pc_t(\omega)$ contains the Dirac measures on $[\omega_td_{t+1}, \omega_t u_{t+1}]$ we conclude that
\begin{align*}
\pi_{t}(\xi)(\omega)&= \inf \{ x \in \R \ | \ \exists H \text{ s. t. } x+H\Delta S_{t+1}(\omega, \cdot) \ge \pi_{t+1}(\xi)(\omega, \cdot) \text{ on } [\omega_{t} d_{t+1}, \omega_{t} u_{t+1}]\}.
\end{align*}
As $\pi_{t}(\xi)(\omega)$ is the pointwise concave envelope of the convex function $\pi_{t+1}(\xi)(\omega,\cdot)$, it can be written as the unique convex combination of the extreme points of $\pi_{t+1}(\xi)(\omega,\cdot)$ on the interval $[\omega_{t} d_{t+1}, \omega_{t} u_{t+1}]$, which conserves the barycentre $\omega_t$. Thus, we obtain \eqref{fgp} for $t$. Clearly $\Xt:\R^{t} \to \R$ is then a linear combination of convex functions (with non-negative coefficients) and thus also a convex function.
\end{proof}

It is insightful to observe that the above superreplication price corresponds to the actual replication price in a Cox-Ross-Rubinstein model of \cite{CRR79} where the stock price evolves on a binomial tree with $S_{t+1}\in \{d_{t+1}S_t,u_{t+1}S_t\}$. 
%
%
%
\section{Maximising expected utility of consumption in $\Sp(\xi)$}\label{sec:utility}

\subsection{Main results}
In \cref{eu} above, we characterised the superhedging prices $\pi_t(\xi)$ and introduced ways for computing minimal superhedging strategies. However, these are typically non-unique. Indeed, as we see from \eqref{eq:sh-cncenv}, if the concave envelope $\widehat{f(\omega,\cdot)}$ of a function $f: \O^{t+1}\to \R$ is not differentiable at $\omega_t$, every point $H\in \R^d$ in its superdifferential constitutes a minimal superhedging strategy, see also \cref{Ex 1} below. 
To select the ``best" among minimal superhedging strategies we propose a secondary optimisation problem of robust maximisation of expected utility with intermediate consumption, given by
\begin{align} \label{eq. optpro}
\sup_{(H,C)\in \mathcal{A}_x} \inf_{P \in \Pc^u} \E_{P}\left[\sum_{s=1}^T U(s,\Delta C_s)\right],
\end{align}
where $\mathcal{A}_x$ is the set of investment-consumption strategies which superhedge $\xi:\Omega^T \to \R$, i.e.
\begin{align*}
\mathcal{A}_x :=\{ (H,C) \in \mathcal{H}(\mathbb{F}^{\mathcal{U}}) \times \mathcal{C}\ | \  V_T^{x,H,C}\ge \xi \ \Pc \text{-q.s.}\}
\end{align*}
and the set $\Pc^{u} \subseteq \mathfrak{P}(\Omega^T)$ fulfils the following condition:
\begin{assumption}\label{Ass 1b}
$\Pc^{u}$ satisfies (APS) and $\Pc^{u} \subseteq \mathcal{P}$.
\end{assumption}
The set $\Pc^{u}$ represents the subjective views of an investor. While superhedging with respect to $\Pc$ reflects the necessity to satisfy certain regulatory and risk requirements, $\Pc^u$ is used to express individual preferences for the optimisation problem \eqref{eq. optpro} and does not need to satisfy any further requirements than those of \cref{Ass 1b}, e.g. NA$(\Pc^{u})$ can fail. 
In \cref{thm. exis_new} and \cref{thm. unique_new} below, we show that \eqref{eq. optpro} is well posed and admits an optimiser which, under suitable assumptions, is unique.

The assumptions imposed on the utility functions $U(t, \cdot, \cdot)$ are in line with those in \cite{Nutz}:
\begin{assumption} \label{Ass 1a}
For $t=1, \dots, T$ the utility function $U(t,\cdot, \cdot): \Omega^t \times [0,\infty) \to \R$ is lower semianalytic and bounded from above. Furthermore 
\begin{enumerate}
\item $\omega \mapsto U(t,\omega, x)$ is bounded from below for each $x>0$.
\item $x \mapsto U(t,\omega,x)$ is non-decreasing, concave and continuous for each $\omega \in \Omega^t$.
\end{enumerate}
\end{assumption}
We believe that boundedness assumptions on utility functions which we make here could be weakened, similarly to \cite{BC16}. However, due to the overall length and already technical character of proofs, we decided to leave this extension for further research. \\
We remark that by 2. in \cref{Ass 1a} it is sufficient to consider investment-consumption strategies which hedge $\xi$, i.e. for which $V_T^{x,H,C}=\xi$, since the superhedging surplus can be consumed at terminal time.\\
Note that by \cref{Ass 1a} and standard results on Carath\'{e}odory functions (see \cite[Lemma 4.51, p. 153]{Hitch}) we conclude that $U(t, \cdot, \cdot)$ is $\mathcal{F}^{\mathcal{U}}_t\otimes \mathcal{B}(\R_+)$-measurable. We set $U(t,x,\omega)= -\infty$ for $x<0$ and often write $U(t,x)$ instead of $U(t,x,\omega)$.

\begin{theorem}\label{thm. exis_new} Let $U(t,\cdot,\cdot)$ be given for $1\le t \le T$ and let NA$(\Pc)$, \cref{Qanalytic}, \cref{Ass 1b} and \cref{Ass 1a} hold. Then for any Borel $\xi: \O^T\to \R$ such that $\sup_{Q\in\mathcal{Q}} \EE_Q[\xi^- ] < \infty$ there exists $(\hat{H}, \hat{C}) \in \mathcal{A}_{\pi}$ such that 
\begin{align*}
\inf_{P \in \Pc^u} \E_{P}\left[\sum_{s=1}^T U(s,\Delta\hat{C}_s)\right]=\sup_{(H,C)\in \mathcal{A}_{\pi}} \inf_{P \in \Pc^u} \E_{P}\left[\sum_{s=1}^T U(s,\Delta C_s)\right],
\end{align*}
where $\pi=\pi(\xi)$ is the $\Pc$-q.s. superhedging price of $\xi$.
\end{theorem}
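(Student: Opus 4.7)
My strategy is backward dynamic programming: solve \eqref{eq. optpro} as a sequence of one-step sub-problems parametrised by the wealth, then paste the one-step optimisers together via measurable selection. The key observation from Theorem \ref{eu} is that $(H,C)\in\mathcal{A}_\pi$ if and only if $V_t^{\pi,H,C}\ge \pi_t(\xi)$ $\Pc$-q.s.\ for every $0\le t\le T$, so the superhedging constraint decomposes into one-step constraints and the wealth $V_t$ plays the role of a one-dimensional state variable.

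First, I would define backward value functions $J_t:\Omega^t\times\R\to[-\infty,\infty)$ by $J_T\equiv 0$ and, for $0\le t\le T-1$,
\[
J_t(\omega,w)=\sup_{H,c}\,\inf_{Q\in\Pc^u_t(\omega)}\E_Q\!\left[U(t+1,c(\cdot))+J_{t+1}\bigl((\omega,\cdot),\,w+H\Delta S_{t+1}(\omega,\cdot)-c(\cdot)\bigr)\right],
\]
where the sup ranges over $H\in\R^d$ and universally measurable $c:\Omega\to[0,\infty)$ satisfying $w+H\Delta S_{t+1}(\omega,\cdot)-c(\cdot)\ge \pi_{t+1}(\xi)(\omega,\cdot)$ $\Pc_t(\omega)$-q.s. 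Consumption can be eliminated by a pointwise maximisation: setting
\[
\psi_t(\omega,\omega',H,w)=\sup_{0\le c\le w+H\Delta S_{t+1}(\omega,\omega')-\pi_{t+1}(\xi)(\omega,\omega')}\!\bigl\{U(t+1,c)+J_{t+1}\bigl((\omega,\omega'),w+H\Delta S_{t+1}(\omega,\omega')-c\bigr)\bigr\},
\]
the inner sup is attained by continuity and concavity of $U$ in $c$ and, inductively, of $w\mapsto J_{t+1}$. Upper semianalyticity of $\psi_t$ is propagated through the backward recursion using the stability toolbox of \cite[Ch.\,7]{bs}.

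Next, I would address the one-step sup-inf $J_t(\omega,w)=\sup_H\inf_Q\E_Q[\psi_t(\omega,\cdot,H,w)]$. Upper semianalyticity of $(\omega,w,H)\mapsto\inf_{Q\in\Pc^u_t(\omega)}\E_Q[\psi_t]$ would follow from the APS property of $\Pc^u$ (\cref{Ass 1b}) and an adaptation of \cite[Lemma 4.10]{BN} to an $\inf$ over kernels. To attain the outer supremum I would compactify $H$: under NA$(\Pc)$, the arguments of \cite[Lemma 2.6]{BN} produce, on a $\Pc$-full-measure set, an $\omega$-dependent compact bound on admissible $H$ (the inclusion $\Pc^u\subseteq\Pc$ guarantees the NA-based bounds also control the $\Pc^u$-q.s.\ constraint). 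Combined with upper semi-continuity of $H\mapsto\inf_Q\E_Q[\psi_t]$ this yields a maximiser $\hat H_{t+1}(\omega,w)$; Jankov--von Neumann then delivers universally measurable selectors $\hat H_{t+1}$ and $\hat c_{t+1}$ jointly in $(\omega,w)$.

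Finally, paste the selectors along the resulting wealth trajectory: starting from $V_0=\pi$, iteratively set $\hat H_{t+1}=\hat H_{t+1}(\cdot,V_t)$, $\Delta\hat C_{t+1}=\hat c_{t+1}(\cdot,V_t)$, and $V_{t+1}=V_t+\hat H_{t+1}\Delta S_{t+1}-\Delta\hat C_{t+1}$. A standard DPP verification via Fubini over the APS decomposition $P=P_0\otimes\cdots\otimes P_{T-1}$ of $P\in\Pc^u$ then gives
\[
\inf_{P\in\Pc^u}\E_P\!\left[\sum_{s=1}^T U(s,\Delta\hat C_s)\right]=J_0(\omega_0,\pi)\ge \inf_{P\in\Pc^u}\E_P\!\left[\sum_{s=1}^T U(s,\Delta C_s)\right]
\]
for every $(H,C)\in\mathcal{A}_\pi$, which is the claimed optimality. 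The hard part will be the one-step sup-inf: reconciling upper semianalyticity of the inner infimum over $\Pc^u_t(\omega)$ with attainment of the outer supremum in $H$. Since $\Pc^u$ need not satisfy NA$(\Pc^u)$, the compactification of $H$ must be imported from NA$(\Pc)$; this interplay between the constraint-prior-set $\Pc$ and the distinct objective-prior-set $\Pc^u\subseteq\Pc$ is what drives the technical length of the argument.
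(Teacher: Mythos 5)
Your overall architecture is the same as the paper's: a backward dynamic programming recursion with the wealth as a one-dimensional state, one-step attainment obtained from NA$(\Pc)$-based compactness of the hedging directions (after quotienting out the orthogonal complement of the span $L$ of the supports) together with Fatou/upper semicontinuity, measurable selection of one-step optimisers, pasting along the wealth trajectory and a Fubini verification over the APS decomposition; handling consumption by pointwise elimination instead of keeping it as a finite-dimensional component of the one-step control is an equivalent variant. The genuine gap is in your measurability bookkeeping, which as stated would fail. The infimum over the analytic-graph family $\Pc^u_t(\omega)$ does \emph{not} preserve upper semianalyticity: what \cite[Prop.~7.47--7.48]{bs} gives is that for a \emph{lower} semianalytic integrand the map $\omega \mapsto \inf_{P\in\Pc^u_t(\omega)}\E_P[\,\cdot\,]$ is lower semianalytic. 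This is exactly why \cref{Ass 1a} takes $U(t,\cdot,\cdot)$ lower semianalytic and bounded from above, and why the paper's proof (\cref{lem usa}) constructs lower semianalytic versions $\tilde U_t$ of the value functions; your claim that $\psi_t$ and $(\omega,w,H)\mapsto \inf_Q \E_Q[\psi_t]$ are upper semianalytic ``by an adaptation of \cite[Lemma 4.10]{BN} to an inf over kernels'' has the semianalyticity class backwards, and the subsequent Jankov--von Neumann selection of $(\hat H_{t+1},\hat c_{t+1})$ cannot be run as described. A related omission: a value function defined through a supremum over all $H\in\R^d$ (and over measurable consumptions) need not itself belong to any semianalytic class; the paper obtains a usable version only by reducing to rational controls via an interior-point argument plus a $1/n$-shift, and this version coincides with $U_t$ only outside a $\Pc^u$-polar set, while the joint measurability in $(\omega,x,H,c)$ needed to define the optimiser correspondence comes from concavity and upper semicontinuity in the real variables together with \cite[Lemma A.35]{BC16}, the selector then being produced by the von Neumann--Aumann theorem (\cref{lem meas_selec}) rather than plain Jankov--von Neumann.

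A second, smaller gap is the final verification. Fubini over $P=P_0\otimes\cdots\otimes P_{T-1}$ yields the inequality for the pasted strategy $(\hat H,\hat C)$, i.e.\ $\inf_{P\in\Pc^u}\E_P[\sum_s U(s,\Delta\hat C_s)]\ge J_0$. But the reverse comparison $J_0\ge \inf_{P\in\Pc^u}\E_P[\sum_s U(s,\Delta C_s)]$ for an \emph{arbitrary} $(H,C)\in\mathcal{A}_\pi$ is not a ``standard Fubini'' step, because the infimum over $\Pc^u$ does not factorise along an arbitrary adapted strategy: one needs universally measurable $\epsilon$-optimal kernels $P_t^\epsilon$ for the inner one-step infimum (again a Jankov--von Neumann argument applied to a lower semianalytic map), concatenated with a given $P\in\Pc^u$, which is precisely the second half of the paper's proof of \cref{thm. exis_new}. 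Both gaps are fixable, and the fixes are exactly what \cref{lem usa}, \cref{lem meas_selec} and the $\epsilon$-optimal selector argument supply, but your plan as written omits them.
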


In order to obtain uniqueness of the above maximiser $(\hat{H}, \hat{C})$, we again switch to the canonical setup $\Omega^T= (\R_+^d)^T$, $S_t(\omega)=\omega_t$. In line with \cite{denis2007utility} we strengthen assumptions on the utility functions $U(t, \cdot, \cdot)$ and also assume weak compactness of the set $\Pc^u$. This enables us to show existence of a ``worst-case" measure $\hat{P}\in\Pc^u$, in analogy to the argumentation in \cite{schied2005duality}. In fact, \cref{ex. non-unique} below shows, that one cannot expect uniqueness of maximizers in general, if $\Pc^u$ is not weakly closed.
\begin{assumption}\label{Ass 3} For $t=1, \dots, T$  the non-random utility functions $U(t, \cdot)$ satisfy \cref{Ass 1a} and are bounded. The mapping $x \mapsto U(t,x)$ is strictly concave, non-decreasing and continuous. Furthermore, for $t=0, \dots T-1$ and $\Pc^u$-q.e $\omega \in \Omega^t$ the set $\Pc^u_t(\omega)$ is weakly compact and  the sets $\Pc$ and $\Pc^u$ fulfil the following continuity criteria:
\begin{enumerate}
\item If $\omega, \tilde{\omega} \in \Omega^t$ and $\epsilon>0$, then there exists $\delta >0$ such that for $|\omega-\tilde{\omega}|\le \delta$ and for every $P \in \Pc^u_t(\omega)$ there exists $\tilde{P} \in \Pc^u_t(\tilde{\omega})$ such that $d_L(P, \tilde{P}) \le \epsilon$, where $$d_L(P, \tilde{P})=\inf\{\epsilon \ge 0 \ | \ P(A) \le \tilde{P}(A^\epsilon) + \epsilon \text{ for all }A \in \mathcal{B}(\Omega)\}$$ denotes the Levy metric on $\mathfrak{P}(\Omega)$ and $A^\epsilon=\{\omega \in \Omega \ | \ \exists \tilde{\omega}\in A \text{ such that }|\omega-\tilde{\omega}|< \epsilon \}$.
\item 
The map $f_t(\omega):=\text{supp}(\Pc_t(\omega))$ is uniformly continuous in the sense of \cref{def. corunifcont}, where
\begin{align*}
\text{supp}(\Pc_t(\omega))= \bigcap \{A \subseteq \Omega \text{ closed} \ \big| \  P(A)=1 \text{ for all } P \in \Pc_t(\omega)\}
\end{align*}
is the quasisure support of $\Pc_t(\omega)$ for $\omega \in \Omega^t$.
\end{enumerate}
\end{assumption}

\begin{theorem}\label{thm. unique_new}
In the setup of \cref{thm. exis_new} assume further that \cref{Ass 3} holds and that the functions $\pi_t(\xi): \Omega^t\to \R$ are continuous on $\{(\omega, v)\in  \Omega^t \ | \ v\in f_{t-1}(\omega)\}$ for all $1\le t\le T$. Then there exists a probability measure $\hat{P} \in \Pc^u$ such that 
\begin{align*}
\sup_{(H,C)\in \mathcal{A}_{\pi}} \inf_{P \in \Pc^u} \E_{P}\left[\sum_{s=1}^T U(s,\Delta C_s)\right]=\sup_{(H,C)\in \mathcal{A}_{\pi}} \E_{\hat{P}}\left[\sum_{s=1}^T U(s,\Delta C_s)\right].
\end{align*}
Furthermore, the maximising strategy $(\hat{H}, \hat{C}) \in \mathcal{A}_{\pi}$ is unique in the following sense: for any two maximising strategies $(H^1,  C^1), (H^2, C^2) \in \mathcal{A}_{\pi}$ and for $1\le t\le T$ we have $ C^1_t= C^2_t$ and $H_t^1 \Delta S_t= H_t^2 \Delta S_t$ $\hat{P}$-a.s.
\end{theorem}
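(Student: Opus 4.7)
The proof splits into two parts: (a) existence of a worst-case measure $\hat{P}\in\Pc^u$ that realises the max-min as a max under a single measure, and (b) uniqueness of the maximising strategy. The optimal $(\hat H,\hat C)\in\mathcal A_\pi$ is already provided by Theorem~\ref{thm. exis_new}, so my plan is to construct $\hat P$ by backward dynamic programming in the spirit of \cite{denis2007utility,schied2005duality} and then deduce uniqueness via an averaging-in-strategy argument and a Bellman-type analysis of the wealth dynamics.

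For the construction of $\hat P$, introduce the backward value functions
\begin{align*}
v_T(x,\omega)&=\begin{cases}0,& x\geq \xi(\omega),\\ -\infty,& x<\xi(\omega),\end{cases}\\
v_t(x,\omega)&=\sup_{(H,c(\cdot))}\,\inf_{P\in\Pc^u_t(\omega)}\E_P\!\left[U(t+1,c(\cdot))+v_{t+1}\!\left(x+H\Delta S_{t+1}(\omega,\cdot)-c(\cdot),(\omega,\cdot)\right)\right],
\end{align*}
where the outer sup is over $H\in\R^d$ and measurable $c:\Omega\to[0,\infty)$ satisfying the one-step superhedging constraint on the quasisure support of $\Pc_t(\omega)$. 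The key observation is that at every step the one-step saddle point is attained: weak compactness of $\Pc^u_t(\omega)$ together with lower semicontinuity in $P$ of the integrand (propagated inductively from continuity of $\pi_{t+1}(\xi)$ on $\mathrm{graph}(f_t)$ and from uniform continuity of the quasisure support correspondence in Assumption~\ref{Ass 3}(ii)) guarantees attainment of the inner infimum. Condition~(i) of Assumption~\ref{Ass 3}, the L\'evy continuity of $\omega\mapsto\Pc^u_t(\omega)$, combined with Jankov--von Neumann selection, produces a universally measurable selector $\hat P_t:\Omega^t\to\mathfrak{P}(\Omega)$; the APS property of $\Pc^u$ then assembles these kernels into $\hat P:=\hat P_0\otimes\cdots\otimes\hat P_{T-1}\in\Pc^u$ via Fubini. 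By construction $(\hat H,\hat C,\hat P)$ is a global saddle point of the max-min, yielding the identity $\sup\inf=\sup\E_{\hat P}=V^\ast$.

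For uniqueness, let $(H^i,C^i)\in\mathcal A_\pi$, $i=1,2$, both attain $V^\ast$. Since $\hat P$ realises both the inner infimum along the optimum and the outer supremum under $\hat P$, each strategy satisfies $\E_{\hat P}[\sum_s U(s,\Delta C^i_s)]=V^\ast$. I first observe that any such optimiser automatically obeys $V^{(i)}_T=\xi$ $\hat P$-a.s.: otherwise absorbing the surplus into $\Delta C^i_T$ would, by strict monotonicity of $U(T,\cdot)$ (a consequence of strict concavity plus non-decrease, which rules out constant intervals), strictly raise $\E_{\hat P}[\sum_s U]$ above $V^\ast$, contradicting $\sup_{\mathcal A_\pi}\E_{\hat P}=V^\ast$. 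Next, the midpoint strategy lies in $\mathcal A_\pi$ by convexity and concavity of $U$ gives
\begin{align*}
\E_{\hat P}\!\left[\sum_{s=1}^T U\!\left(s,\tfrac{\Delta C^1_s+\Delta C^2_s}{2}\right)\right]\geq V^\ast,
\end{align*}
with strict inequality on $\{\Delta C^1_s\neq\Delta C^2_s\}$ by strict concavity of $U(s,\cdot)$; hence $\Delta C^1_s=\Delta C^2_s$ and $C^1_t=C^2_t$ $\hat P$-a.s.\ for all $t$. For the equality $H^1_t\Delta S_t=H^2_t\Delta S_t$ $\hat P$-a.s., I argue by backward induction that $v_t$ is strictly concave in $x$: at $t=T-1$ the constraint-maximisation reduces to $v_{T-1}(x,\omega)=U(T,x-\pi_{T-1}(\xi)(\omega))$, strictly concave in $x$, and strict concavity propagates through the Bellman recursion using strict concavity of $U(t,\cdot)$ in $c$ together with concavity preservation of the inner infimum and the linear superhedging constraint. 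Strict concavity of $v_{t+1}$ in wealth then forces uniqueness of the optimal continuation wealth $V_{t+1}$ $\hat P$-a.s.; combined with $C^1=C^2$ $\hat P$-a.s., differencing the wealth dynamics $V_t^{(i)}-V_{t-1}^{(i)}=H^i_t\Delta S_t-\Delta C^i_t$ at each $t$ produces $H^1_t\Delta S_t=H^2_t\Delta S_t$ $\hat P$-a.s.

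The main technical difficulty lies in carefully propagating the required regularity -- weak lower semicontinuity in $P$, measurable dependence on $\omega$, and strict concavity in $x$ -- through the backward recursion while respecting the distinct roles of $\Pc$ (hard superhedging constraint) and $\Pc^u$ (utility evaluation). This interplay is precisely what motivates the technical Assumption~\ref{Ass 3} and the continuity of $\pi_t(\xi)$; the failure of weak closedness of $\Pc^u$ illustrated in Example~\ref{ex. non-unique} suggests that some such hypothesis cannot be dispensed with.
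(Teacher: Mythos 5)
Your outline reproduces the paper's architecture — backward value functions, a worst-case kernel $\hat P_t(\omega)$ extracted by weak compactness of $\Pc^u_t(\omega)$, measurable selection and Fubini assembly into $\hat P\in\Pc^u$, then midpoint averaging plus strict concavity for uniqueness — but it asserts, rather than proves, the step on which everything hinges. The claim that ``weak compactness together with lower semicontinuity in $P$ of the integrand guarantees attainment of the inner infimum'' is exactly what the bulk of the paper's proof is devoted to establishing, and it does not follow from continuity of $\pi_{t+1}(\xi)$ alone. The obstruction is that $\tilde U_{t+1}(\omega',y)$ drops to $-\infty$ at $y=\pi_{t+1}(\xi)(\omega')$, so the integrand $\omega'\mapsto \tilde U_{t+1}((\omega,\omega'),x+H\Delta S_{t+1}(\omega,\omega')-c)$ is in general neither continuous nor bounded, and Portmanteau cannot be applied along the weakly convergent subsequence $P^{n_k}_t(\omega)\to\hat P_t(\omega)$. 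The paper gets around this by introducing the shifted utilities $U^{1/n}(T,x)=U(T,x+1/n)$, proving by backward induction that $(\omega,x)\mapsto\tilde U^{1/n}_t(\omega,x+1/n)$ is jointly continuous on $\{x\ge\pi_t(\xi)(\omega)\}$ — this is where \cref{Ass 3}(1) enters via Strassen's theorem and a L\'evy-metric coupling, where \cref{Ass 3}(2) and the continuity hypothesis on $\pi_t(\xi)$ enter via the boundedness-of-$H$ argument of \cref{prop:pathwiseex}, and where the uniform bound on superhedging strategies is needed — and only then letting $n\to\infty$ to conclude $\tilde U_t(\omega,x)=\E_{\hat P_t(\omega)}[\tilde U_{t+1}(\cdots)]+U(t,\tilde c)$ along every maximiser. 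Without this (or an equivalent) argument you have not shown that $\hat P$ is a worst-case measure, and the nontrivial direction ``$\sup_{(H,C)}\E_{\hat P}\le\sup_{(H,C)}\inf_P$'' of the first assertion is unsupported.

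There is also a concrete error in your uniqueness argument: the base case $v_{T-1}(x,\omega)=U(T,x-\pi_{T-1}(\xi)(\omega))$ is false. With your conventions $v_{T-1}(x,\omega)=\sup_{H}\inf_{P\in\Pc^u_{T-1}(\omega)}\E_P[U(T,x+H\Delta S_T(\omega,\cdot)-\xi(\omega,\cdot))]$; the terminal consumption is random and the worst-case measure need not collapse it to the deterministic surplus $x-\pi_{T-1}(\xi)(\omega)$. Moreover, strict concavity does not in general survive an infimum over a family in which it is not attained, so your induction on strict concavity of $v_t$ in $x$ again presupposes the unproved attainment. The paper avoids the global value-function route: once $\hat P_t(\omega)$ is known to attain the infimum for every maximiser, the map $(H,c)\mapsto\E_{\hat P_t(\omega)}[\tilde U_{t+1}((\omega,\cdot),x+H\Delta S_{t+1}-c)]+U(t,c)$ is concave and strictly concave in $c$, which forces $c^1=c^2$ and, via equality at the midpoint and concavity of $\tilde U_{t+1}$ in the wealth variable, $H^1\Delta S_{t+1}=H^2\Delta S_{t+1}$ $\hat P_t(\omega)$-a.s. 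Your consumption argument (midpoint plus strict concavity of $U(s,\cdot)$, and $V_T=\xi$ via strict monotonicity) is correct and coincides with the paper's.
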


The proofs of \cref{thm. exis_new} and \cref{thm. unique_new} are given in \cref{sec:app_utility}. We first establish \cref{thm. exis_new} in the one-period case ($T=1$) and then extend it to the general multi-step setting and consider the uniqueness.

\subsection{Examples and comments}
To illustrate the above results, we discuss several examples. We start with a simple example for non-uniqueness of minimal superhedging strategies.

\begin{example}[Non-Uniqueness of minimal superhedging strategies and maximizers] \label{Ex 1}
We take $\Omega=\R_+$, where $d=1$ and $T=2$ as well as $s_0=2$. Furthermore $S_t(\omega)=\omega_t$ for $t=1,2$ and 
\begin{align*}
\Pc_t(\omega)=\{P \in \mathfrak{P}(\R_+) \}, \quad t=0,1.
\end{align*}
We want to superhedge the running minimum at time 2, i.e. $\xi(\omega)=\underline{S}_2(\omega)$. Clearly $\Qc_t(\omega)=\{ Q \in \mathfrak{P}(\R_+) \ | \ \E_Q[\Delta S_{t+1}(\omega, \cdot)]=0 \}$ for all $\omega \in \Omega^t$ and $t=0,1$. Besides it is easy to see that
\begin{align*}
\sup_{Q \in \Qc} \E_{Q}[\xi]=s_0=2,
\end{align*}
so we have some degree of freedom to choose our superhedging strategy $H \in \Tr$. As it turns out we can choose any $H_1 \in [0,1]$, which gives a wealth of $2+H_1(S_1-2)$ at time 1.
 For time 2 we have
\begin{align*}
H_2(\omega) \in  \left\{
			\begin{array}{ll}
					[0,H_1] &\text{ if }S_1(\omega)\ge 2, \\
					\left[0, \frac{2}{S_1(\omega)}+\frac{H_1}{S_1(\omega)}(S_1(\omega)-2)\right] &\text{ if } S_1(\omega)<2.
			\end{array}
				\right.
\end{align*}
Note also that the superhedging cost at time 1 is given by
\begin{align*}
\pi_1(\xi)(\omega)=\sup_{Q \in \Qc_1(\omega)} \E_{Q}[\xi(\omega, \cdot)]=\left\{
			\begin{array}{ll}
					2 &\text{ if }S_1(\omega)\ge 2, \\
					S_1(\omega) &\text{ if } S_1(\omega)<2.
			\end{array}
				\right. 
\end{align*}
So according to \eqref{hj} and \eqref{defminr} we can consume 
\begin{align*}
C_1(\omega)\in \left\{\begin{array}{ll}
					[0,H_1(S_1(\omega)-2)] &\text{ if }S_1(\omega)\ge 2, \\
					\left[0,(H_1(\omega)-1)(S_1(\omega)-2)\right] &\text{ if } S_1(\omega)<2
			\end{array}
				\right. 
\end{align*}
at time 1. \\
We now show that if \cref{Ass 3} is not satisfied (namely $\Pc^u$ does not fulfil \cref{Ass 3}.1.), then \cref{thm. unique_new} is not true in general. For this we specify the set $\Pc^u$ and iteratively solve the optimization problem \eqref{eq. optpro}: We set $U(2,\omega,x)=U(1,\omega,x)=U(x)$ for some bounded concave, non-decreasing and continuous function $U: \R_+ \to \R_+$ as well as $\Pc^u_1(S_1)=\{\delta_{S_1}\}$ for $S_1>2$ and $\Pc^u_1(S_1)=\{\delta_{S_1+1}\}$ for $S_1\le 2$. Note that $\Pc^u_1$ obviously violates \cref{Ass 3}.1. We obtain the following optimal one-step prices, where we use notation from \cref{Sec Mulit}: For $S_1 >2$ and $x \ge 2$ we find
\begin{align*}
U_1(S_1,x)&= \sup_{(H,c) \in \mathcal{A}_{1,x}(S_1)}\left( \E_{\delta_{ S_1}}[U(x+H(S_2-S_1)-\underline{S_2}-c)]+U(c)\right)\\
&=\sup_{(H,c) \in \mathcal{A}_{1,x}(S_1)} \left(U(x-2-c)+U(c)\right)=2U\left(\frac{x-2}{2}\right)
\end{align*}
with $c=(x-2)/2$ and some $0 \le H\le \min(\frac{x/2+1}{S_1},\frac{x/2-1}{S_1-2})$. 
For $S_1 \le 2$ and $x \ge S_1$ we have
\begin{align*}
U_1(S_1,x)&= \sup_{(H,c) \in \mathcal{A}_{1,x}(S_1)} \left(\E_{\delta_{ S_1+1}}[U(x+H(S_2-S_1)-\underline{S_2}-c)]+U(c)\right)\\
&=\sup_{(H,c) \in \mathcal{A}_{1,x}(S_1)} (U(x+H-S_1-c)+U(c))\ge U(0)+U(1)
\end{align*}
with $H =x/S_1$ and $c=0$. 
 Setting $\Pc^u_0=\{\delta_x \ | \ x \in \R_+\}$ we obtain
\begin{align*}
U_0(2)&=\sup_{H \in \mathcal{A}_{0,2}} \inf_{P \in \Pc^u_0} \E_{P}[U_1(S_1, 2+H(S_1-2))]\\
&=\sup_{H \in \mathcal{A}_{0,2}} \inf_{P \in \Pc^u_0} \E_{P}\bigg[\mathds{1}_{\{S_1>2 \}} 2U\left(\frac{2+H(S_1-2)-2}{2} \right)\\
&+ \mathds{1}_{\{S_1 \le 2\}} U_1(2+H(S_1-2))\bigg]\\
&=2U\left(0 \right).
\end{align*}
Note that by the proof of \cref{thm. unique_new} under \cref{Ass 3} there would exist $\hat{P}\in \mathcal{P}_0^u$ such that
\begin{align*}
U_0(2)=\sup_{H \in \mathcal{A}_{0,2}} \E_{\hat{P}} [U_{1}(S_1,x+H \Delta S_{1})].
\end{align*}
On the contrary, in our case there exists no $\hat{P}\in \mathcal{P}_0^u$ such that
\begin{align*}
U_0(2)=2U(0)=  \E_{\hat{P}}\bigg[\mathds{1}_{\{S_1 >2\}} 2U\left(\frac{S_1-2}{2} \right)+ \mathds{1}_{\{S_1 <2\}} U_1(S_1, 2)\bigg]
\end{align*}
as the RHS is strictly greater than $2U(0)$ for all $\hat{P}\in \mathcal{P}_0^u$: Thus \cref{thm. unique_new} does not hold.
\end{example}

The next example shows that we cannot expect to have uniqueness of maximizers without assuming some closedness property of $\Pc^{u}$.
\begin{example}[Non-uniqueness of maximisers for non-closed $\Pc^{u}$]\label{ex. non-unique}
Let $T=1$, $d=2$, $\Omega =\R^2$, $\Pc=\mathfrak{P}(\R_+^2)$, $S_t(\omega)=\omega_t$ and $S_0=(1,1)$. Consider $\xi=\min(S^1_1, S^2_1)$. Then $\pi(\xi)=1$ and $H_1$ is of the form
\begin{align*}
H_1= \left(
			\begin{array}{c}
					\lambda\\
					1-\lambda
			\end{array}
				\right),
\end{align*}
where $\lambda \in [0,1].$ Take 
\begin{align*}
\Pc^u=\{ P_n\}_{n=1}^{\infty} \hspace{0.5cm} \text{where } P_n=\frac{\delta_{\left\{S_1^1=n-\frac{1}{n}, \ S^2_1=n+\frac{1}{n} \right\}}}{2}+\frac{\delta_{\left\{S_1^1=0, \ S^2_1=0 \right\}}}{2}.
\end{align*}
Then clearly $\Pc^u$ is not closed. We note that for $H \in \mathcal{A}_1$
\begin{align*}
\E_{P_n}\left[U\left(1+H \Delta S_1-\xi\right)\right]&= 
\frac{1}{2}U\left(\lambda\left(n-\frac{1}{n}\right)+(1-\lambda)\left(n+\frac{1}{n}\right)-\left(n-\frac{1}{n}\right)\right)+\frac{1}{2}U(0)\\
&=\frac{1}{2}U\left(\left(1-\lambda\right)\frac{2}{n} \right)+\frac{1}{2}U(0)\downarrow U(0), \ n\to \infty.
\end{align*}
Thus we conclude
\begin{align*}
\sup_{H \in \mathcal{A}_1} \inf_{P \in \Pc^u} \E_{P}[U(1+ H \Delta S_1-\xi)]=U(0),
\end{align*}
in particular
\begin{align*}
H \mapsto \inf_{P \in \Pc^u} \E_{P}[U(1+ H \Delta S_1-\xi)]=U(0)
\end{align*}
is constant and thus the maximizer is not unique.
\end{example}
Finally, we illustrate that even with a compact $\Pc^{u}$ we can not strengthen the sense in which the optimisers are unique in \cref{thm. unique_new}.
\begin{example}[On uniqueness property of maximisers]
We consider a one-step version of \cref{Ex 1}: $T=1$, $d=1$, $\Omega=\R_+$, $S_t(\omega)=\omega_t$, $s_0=2$, $\xi(S)=\underline{S_1}$, $\Pc=\mathfrak{P}(\R_+)$. We have $\pi(\xi)=2$. We also set $\Pc^u=\{\delta_2 \}$, where $\delta_2$ is defined by $$\delta_2( S_t=2 \ \text{for all } t=0, 1)=1.$$  Furthermore let $U(\cdot)=U(1, \cdot, \cdot)$ such that the conditions of \cref{thm. unique_new} are satisfied. The optimisers are then non-unique in the sense that \eqref{eq. optpro} is equal to $U(0)$ and is attained for every $H \in [0,1]$ but are unique in the sense of \cref{thm. unique_new} since $H\Delta S_1=0$ $\delta_2$-a.s.\ for all $H \in \R$.
\end{example}

\newpage
\begin{appendices}
\appendixpage
We now provide the proofs of \cref{prop:pathwiseex}, \cref{eu}, \cref{thm. exis_new} and of \cref{thm. unique_new}. These proofs require a number of technical lemmata which are established alongside the main proofs.
\section{Proof of \cref{prop:pathwiseex}}\label{appendix:A}

\begin{proof}
Fix $\omega \in \Omega^{T-1}$ and $\epsilon>0$. Recall that $\xi$ is continuous and $\{f_t\}_{0\le t\le T-1}$ are compact-valued. Note that the set $$B:=\{(\tilde{\omega},\tilde{v})\in \Omega^{T-1}\times \R^d \ | \  \text{dist}((\omega,f_{T-1}(\omega)),(\tilde{\omega},\tilde{v}))\le 1\}$$ is compact, thus $\xi$ is uniformly continuous on $B$, i.e. there exists $\delta\in (0,1)$ such that $|\xi(\omega, v)- \xi(\tilde{\omega}, \tilde{v})|\le \epsilon/3$ for $|(\omega,v)-(\tilde{\omega},\tilde{v})|\le \delta$ for $v \in f_{T-1}(\omega)$, $(\tilde{\omega}, \tilde{v}) \in B$. This implies $\sup_{\{\tilde{\omega}| \ |\omega-\tilde{\omega}|\le 1\}} \pi_{T-1}(\xi)(\tilde{\omega})<\infty$ and that for all $\tilde{\omega}\in \Omega^{T-1}$ with $|\omega-\tilde{\omega}|\le 1$ there exists $H_{T}(\tilde{\omega}) \in \R^d$ such that
\begin{align}\label{eq. superhedgingl}
\epsilon/3+\pi_{T-1}(\xi)(\tilde{\omega})+H_{T}(\tilde{\omega}) \Delta S_{T}(\tilde{\omega},\cdot) \ge \xi(\tilde{\omega}, \cdot) \quad \text{on } f_{T-1}(\tilde{\omega})
\end{align}
or equivalently the inequality \eqref{eq. superhedgingl} holds $ \Pc_{T-1}(\tilde{\omega})\text{-q.s}$.\\
Note that by the uniform continuity of the correspondence $f_{T-1}$  for any $\tilde{\omega}$ close to $\omega$ and for any   $v \in f_{T-1}(\omega)$ there exists $\tilde{v} \in f_{T-1}(\tilde{\omega})$ which is close to $v$, thus $|(\omega,v)-(\tilde{\omega},\tilde{v})|$ is small. Furthermore we show below that $H_T(\tilde{\omega})$ can be chosen bounded uniformly in $\tilde{\omega}$ for all $\tilde{\omega}$ close to $\omega$. Thus, for some $\delta_1$ determined below, $|\omega-\tilde{\omega}|\le \delta_1$ implies
\begin{align}\label{eq. epsilon}
\epsilon+\pi_{T-1}(\xi)(\tilde{\omega})+H_{T}(\tilde{\omega}) \Delta S_{T}(\omega,v) &\ge 
\epsilon+\pi_{T-1}(\xi)(\tilde{\omega})+H_{T}(\tilde{\omega}) \Delta S_{T}(\tilde{\omega},\tilde{v})-\epsilon/3\\
&\ge \epsilon/3+\xi(\tilde{\omega},\tilde{v}) \ge \xi(\omega,v),\nonumber
\end{align}
and thus $\pi_{T-1}(\xi)(\omega) \le \pi_{T-1}(\xi)(\tilde{\omega})+\epsilon$. Exchanging the roles of $\omega$ and $\tilde{\omega}$ concludes the proof of continuity of $\omega \mapsto \pi_{T-1}(\omega)$. \\
We now argue that there exists $\delta_0>0$ and $C>0$ such that $|H_{T}(\tilde{\omega})|<C$ for all $\tilde{\omega}\in \Omega^{T-1}$ with  $|\omega-\tilde{\omega}|\le \delta_0$ and $H_T(\tilde{\omega})\in \text{lin}(f_{T-1}(\tilde{\omega})-S_{T-1}(\tilde{\omega}))$. Assume towards a contradiction this is not the case, i.e. there exists a sequence $(\tilde{\omega}^N)_{N\in \N}$ with $|\omega-\tilde{\omega}^N|\le 1/N$, $H_T(\tilde{\omega}^N)\in \text{lin}(f_{T-1}(\tilde{\omega}^N)-S_{T-1}(\tilde{\omega}^N))$  for all $N\in \N$ and $\lim_{N\to \infty}|H_T(\tilde{\omega}^N)|=\infty$. After passing to a subsequence (without relabelling) $\tilde{H}^N:=H_T(\tilde{\omega}^N)/|H_T(\tilde{\omega}^N)|\to \tilde{H}$ with $|\tilde{H}|=1$. Note that as $f_{T-1}(\tilde{\omega}^N)$ converges in Hausdorff distance to $f_{T-1}(\omega)$ and as $f_{T-1}(\omega)$ is compact, it follows by the same arguments as above that $\sup_{f_{T-1}(\tilde{\omega}^N)}\xi(\tilde{\omega}^N,\cdot)$ and $\pi_{T-1}(\xi)(\tilde{\omega}^N)$ are bounded uniformly in $N\in \N$. Thus dividing \eqref{eq. superhedgingl} by $|H_T(\tilde{\omega}^N)|$ and taking limits we get
\begin{align*}
\tilde{H}\Delta S_{T}(\omega,\cdot)\ge 0 \quad\text{on }f_{T-1}(\omega).
\end{align*}
By NA$(\Pc_{T-1}(\omega))$ this yields $\tilde{H}\Delta S_T(\omega, \cdot)=0$ on $f_{T-1}(\omega)$. As $\tilde{H}\in \text{span}(f_{T-1}(\omega)-S_{T-1}(\omega))$, $\tilde{H}=0$ follows, a contradiction.\\
Now we choose $\delta_1\le \delta_0$ such that for $|\omega- \tilde{\omega}|\le \delta_1$ we have $$d_H((\omega,f_{T-1}(\omega)), (\tilde{\omega},f_{T-1}(\tilde{\omega})) \le \min(\delta, \epsilon/(3C))$$ and see that \eqref{eq. epsilon} holds. The proof of continuity of $\omega \mapsto \pi_t(\xi)(\omega)$ for $1\le t\le T-2$ follows by backward induction using dynamic programming principle and the same arguments as above.
Lastly, as for any $P \in \mathfrak{P}(\R^d)$ such that $\text{supp}(P)=f_{t-1}(\omega)$ 
\begin{align*}
\pi_{t-1}(\xi)(\omega)+H_{t}(\omega) \Delta S_{t}(\omega, \cdot) \ge \pi_t(\xi)(\omega,\cdot) \quad P\text{-a.s.}
\end{align*}
implies 
\begin{align*}
\pi_{t-1}(\xi)(\omega)+H_{t}(\omega) \Delta S_{t}(\omega, \cdot) \ge \pi_t(\xi)(\omega,\cdot) \quad \text{ on }f_{t-1}(\omega),
\end{align*}
the claim follows.
\end{proof}

\begin{remark}\label{rem:bounded}
Note that the proof of boundedness of $H_{T}(\tilde{\omega})$ above does not require that $f_{T-1}(\tilde{\omega})$ is compact-valued.
\end{remark}

\section{Proof of \cref{eu}}\label{appendix:laurence}
\begin{lemma}
\label{olala}
Let NA$(\Pc)$ hold. Assume that $\xi$ is upper semianalytic. Furthermore let \mbox{$\sup_{Q\in\Qc} \EE_Q [\xi^-  ]< \infty.$} 
Then $\Ec^{t}(\xi)$ is upper semianalytic and $\Ec^{t}(\xi^-)$ is lower semianalytic for all $0\le t\le T-1$. Furthermore
\begin{align*}
\sup_{Q \in \Qc}\EE_{Q}[\Ec^{t}(\xi^-)]<\infty
\end{align*}
and the analytic set $\O^t_{\xi}:=\{\Ec^{t}(\xi^-)<\infty\}$ is of full $\Pc$-measure. Let
\begin{align}
\label{pasmes}
\hat{\O}^t_{\xi}:=\{\omega \in \Omega^t\ | \ \Ec^{t+1}(\xi)(\omega,\cdot)>-\infty, \; \Pc_{t}(\omega)\mbox{-q.s.}\}.
\end{align}
Then ${\O}^t_{\xi}  \subset \hat{\O}^t_{\xi}$, in particular $\hat{\O}^t_{\xi}$ is a $\Pc$-full measure set.
\end{lemma}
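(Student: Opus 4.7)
I would prove the four assertions together by backward induction on $t$, from $t=T$ down to $t=0$. At $t=T$ everything is direct: $\Ec^T(\xi)=\xi$ is upper semianalytic by hypothesis, $\Ec^T(\xi^-)=\xi^-=(-\xi)\vee 0$ is lower semianalytic as the max of two lower semianalytic functions, and $\sup_{Q\in\Qc}\E_Q[\xi^-]<\infty$ is the standing assumption.

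For the inductive step, assume the four statements at $t+1$ and establish them at $t$. Upper semianalyticity of $\Ec^t(\xi)=\Ec_t\bigl(\Ec^{t+1}(\xi)\bigr)$ follows from the standard preservation result \cite[Lemma 4.10]{BN}, which combines analyticity of $\text{graph}(\Qc_t)$ (\cite[Lemma 4.8]{BN}) with Bertsekas-Shreve Prop.~7.47--7.48 and the Jankov-von Neumann selection theorem. The integrand is well-defined because $\E_Q[(\Ec^{t+1}(\xi))^-]\leq\E_Q[\Ec^{t+1}(\xi^-)]$ by Jensen applied to $x\mapsto x^-$, and the induction hypothesis makes the latter finite $\Qc$-a.e.

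The main obstacle is establishing lower semianalyticity of $\Ec^t(\xi^-)$. The natural attempt --- Bertsekas-Shreve Prop.~7.48 gives $(\omega,Q)\mapsto\E_Q[\Ec^{t+1}(\xi^-)(\omega,\cdot)]$ lower semianalytic --- does not suffice, since sup over the analytic-graph fibres $\Qc_t(\omega)$ of a lower semianalytic function is only upper semianalytic in general. The strategy I would pursue is to reduce the supremum to a countable one via a Castaing-type representation: applying Jankov-von Neumann to $\Qc_t\cap(\Omega^t\times B_n)$ for a countable base $\{B_n\}$ of the weak topology on $\mathfrak{P}(\Omega)$ produces countably many universally measurable selectors $\sigma_n:\Omega^t\to\mathfrak{P}(\Omega)$ whose values are dense in $\Qc_t(\omega)$. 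Combined with monotone truncation $f_k:=\Ec^{t+1}(\xi^-)\wedge k$ (so that $Q\mapsto\E_Q[f_k]$ is suitably well-behaved under weak convergence), one rewrites $\Ec^t(\xi^-)=\sup_{n,k}g_{n,k}$ where each $g_{n,k}(\omega)=\E_{\sigma_n(\omega)}[f_k(\omega,\cdot)]$ is lower semianalytic, and a countable sup preserves lower semianalyticity via $\{\sup_{n,k}g_{n,k}<c\}=\bigcap_{n,k}\{g_{n,k}<c\}$.

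Given both semianalyticity claims, the rest is standard. For the integrability bound, fix $\varepsilon>0$ and use Jankov-von Neumann on the lower semianalytic map $Q\mapsto\E_Q[\Ec^{s+1}(\xi^-)]$ to extract $\varepsilon$-optimal universally measurable selectors $\hat Q_s$ of $\Qc_s$ for $s\geq t$. Iterating Fubini, for any $Q=Q_0\otimes\cdots\otimes Q_{T-1}\in\Qc$,
\begin{align*}
\E_Q[\Ec^t(\xi^-)]\leq\E_{Q_0\otimes\cdots\otimes Q_{t-1}\otimes\hat Q_t\otimes\cdots\otimes\hat Q_{T-1}}[\xi^-]+(T-t)\varepsilon\leq\sup_{Q'\in\Qc}\E_{Q'}[\xi^-]+(T-t)\varepsilon,
\end{align*}
and $\varepsilon\to 0$ gives $\sup_Q\E_Q[\Ec^t(\xi^-)]<\infty$. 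Analyticity of $\O^t_\xi=\bigcup_n\{\Ec^t(\xi^-)<n\}$ is then immediate from lower semianalyticity; its $\Pc$-fullness uses \cite[Thm.~4.5]{BN}, since for any $P\in\Pc$ there is $Q\in\Qc$ with $P\ll Q$, and finiteness of $\E_Q[\Ec^t(\xi^-)]$ forces $\Ec^t(\xi^-)<\infty$ $Q$-a.s., hence $P$-a.s. Finally, for $\O^t_\xi\subset\hat\O^t_\xi$: if $\Ec^t(\xi^-)(\omega)<\infty$ then $\Ec^{t+1}(\xi^-)(\omega,\cdot)<\infty$ $Q$-a.s.\ for every $Q\in\Qc_t(\omega)$, and iterating $\xi\geq-\xi^-$ through the $\Ec_s$ yields $\Ec^{t+1}(\xi)\geq-\Ec^{t+1}(\xi^-)>-\infty$ on that $Q$-full set; NA$(\Pc_t(\omega))$, valid on the $\Pc$-full set provided by \cite[Thm.~4.5]{BN}, ensures every $P\in\Pc_t(\omega)$ is dominated by some such $Q$, so $\Ec^{t+1}(\xi)(\omega,\cdot)>-\infty$ $\Pc_t(\omega)$-q.s., i.e.\ $\omega\in\hat\O^t_\xi$.
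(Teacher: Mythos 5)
The main gap is your argument for lower semianalyticity of $\Ec^{t}(\xi^-)$. The paper does not reprove this point at all: it obtains both semianalyticity statements simply by applying \cite[Lemma 4.10]{BN} recursively (to $\xi$ and to $\xi^-$). Your replacement argument breaks at two places. First, the reduction of $\sup_{Q\in\Qc_t(\omega)}\E_Q[f_k(\omega,\cdot)]$ to a supremum over a countable, weakly dense family of selectors requires $Q\mapsto\E_Q[f_k(\omega,\cdot)]$ to be weakly lower semicontinuous; for an integrand that is merely lower semianalytic (indeed even Borel), this fails, and truncation only gives boundedness, not any regularity in $Q$. Concretely, take $f_k=\mathds{1}_{\{x_0\}}$ with $x_0$ irrational and a fibre $\Qc_t(\omega)$ containing $\delta_{x_0}$: a weakly dense countable family consisting of measures charging only rational points yields $\sup_n\E_{\sigma_n(\omega)}[f_k]=0<1=\sup_{Q\in\Qc_t(\omega)}\E_Q[f_k]$. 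Second, even granting the reduction, Jankov--von Neumann only provides universally (or analytically) measurable selectors $\sigma_n$, and $\omega\mapsto\E_{\sigma_n(\omega)}[f_k(\omega,\cdot)]$ is then only universally measurable: composition with a universally measurable kernel does not preserve lower semianalyticity (the joint-measurability result \cite[Prop. 7.48]{bs} requires a Borel kernel), so your countable supremum would not be lower semianalytic in any case. This step therefore has to be replaced, e.g.\ by the citation the paper relies on.

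A secondary, fixable issue concerns the order of your arguments for the integrability bound and the fullness of $\O^t_\xi$. Your chain $\E_Q[\Ec^t(\xi^-)]\le\E_{Q_0\otimes\cdots\otimes Q_{t-1}\otimes\hat Q_t\otimes\cdots\otimes\hat Q_{T-1}}[\xi^-]+(T-t)\varepsilon$ uses the $\varepsilon$-optimality inequality $\Ec^{s}(\xi^-)\le\E_{\hat Q_s}[\Ec^{s+1}(\xi^-)]+\varepsilon$ pointwise, but the selector only guarantees this on $\{\Ec^{s}(\xi^-)<\infty\}$ (elsewhere it only guarantees a value $\ge 1/\varepsilon$); since you deduce the $\Pc$-fullness of $\O^t_\xi$ \emph{from} the bound, the argument is circular as written. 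The paper avoids this by first proving fullness of $\O^t_\xi$ by contradiction, using precisely the $1/\varepsilon$ branch of the selector together with the fact that $\Pc$ and $\Qc$ share polar sets, and only then writing $\E_Q[\Ec^t(\xi^-)]=\E_Q[\1_{\O^t_\xi}\Ec^t(\xi^-)]$ before invoking $\varepsilon$-optimality. Apart from these two points, your structure — backward induction, Fubini with measurable $\varepsilon$-optimal selectors, the FTAP/same-polar-sets argument, and the inequality $\Ec^{t+1}(\xi)\ge-\Ec^{t+1}(\xi^-)$ used for $\O^t_\xi\subseteq\hat\O^t_\xi$ — matches the paper's proof.
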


\begin{proof}
Using  \cite[Lemma 4.10]{BN} recursively, $\Ec^{t}(\xi)$ is upper semianalytic and $\Ec^{t}(\xi^-)$ is lower semianalytic for all $0 \leq t \leq T$. 

As ${\Omega}_{\xi}^{t} = \{ \Ec^{t}(\xi^-) < \infty \}= \bigcup_{n \ge 1} \{ \Ec^{t}(\xi^-) \le n\}$, $\O^t_{\xi}$ is an analytic set.
We now prove by induction that ${\Omega}_{\xi}^{t}$ is a $\Pc$-full measure set and that $\sup_{Q \in \Qc}\EE_{Q}[\Ec^{t}(\xi^-)]<\infty$. For $t=T$, $\sup_{Q \in \Qc}\EE_{Q} [\xi^-]<\infty$ by assumption. If there exists some  $P\in \Pc$ such that $P({\Omega}_{\xi}^{T})<1$ then   $\sup_{Q \in \Qc}\EE_{Q}[\Ec^{T}(\xi^-)] =\infty$, as $\Pc$ and $\Qc$ have the same polar sets (see \cite[First Fundamental Theorem, p. 828]{BN}).

Assume for some $ t\leq T-1$ that ${\Omega}_{\xi}^{t+1}$ is a $\Pc$-full measure set and that\\
$\sup_{Q \in \Qc}\EE_{Q}[\Ec^{t+1}(\xi^-)]<\infty.$
Fix $\epsilon>0$. From  \cite[Proposition 7.50 p184]{bs} (recall that $\Qc_{t}$ has an analytic graph), there exists an $\Fut$-measurable function $Q_{\epsilon}: \Omega^t \to \mathfrak{P}(\Omega)$, such that  $Q_{\epsilon}(\omega) \in \Qc_{t}(\omega)$ for all $\omega \in \Omega^{t}$ and
\begin{align}
\label{eqjt}
\E_{Q_{\epsilon}} [\Ec^{t+1}(\xi^-(\omega,\cdot)] \ge \begin{cases} \Ec^t (\xi^-)(\omega)- \varepsilon &\mbox{if $\o \in {\Omega}_{\xi}^{t}$},\\
\frac{1}{\varepsilon} \; &\mbox{otherwise}.
\end{cases}
\end{align}
Assume that $\Omega_{\xi}^{t}$ is not a $\Pc$-full measure set. Then there exists some $P \in \Pc$ such that $P({\Omega}_{\xi}^{t} )<1$. As $\Pc$  and $\Qc$ have the same polar sets, we have that $Q({\Omega}_{\xi}^{t} )<1$ for some $Q \in \Qc$. We denote by $Q|_{\Fc_t^{\cal{U}}}$ the restriction of $Q$ to ${\Fc_t^{\cal{U}}}$ and set $Q^{*}:=Q|_{\Fc_t^{\cal{U}}}\otimes Q_{\varepsilon} $. Then $Q^{*}\in \Qc|_{\Fc_{t+1}^{\cal{U}}}$ (see \eqref{Mstar}) and
we have that
\begin{align*}
\sup_{Q \in \Qc} \EE_Q [\Ec^{t+1}(\xi^-)] \ge \EE_{Q^{*}} [\Ec^{t+1}(\xi^-)] \ge \frac{1}{\epsilon}  (1-Q^{*}({\Omega}_{\xi}^{t} )) - {\epsilon}Q^{*}({\Omega}_{\xi}^{t} ) .
\end{align*}
As the previous inequality holds for all $\epsilon>0$, letting $\epsilon$ go to $0$ we  obtain that $$\sup_{Q \in \Qc}\EE_{Q}[\Ec^{t+1}(\xi^-)]=\infty,$$ a contradiction. Thus $\Omega_{\xi}^{t}$ is a $\Pc$-full measure set.\\
Now,  for all  $Q \in \Qc$, we set $Q^*=Q|_{\Fc_t^{\cal{U}}}\otimes Q_{\epsilon} \in \Qc|_{\Fc_{t+1}^{\cal{U}}}$ (see \eqref{Mstar}).  Then, using  \eqref{eqjt} we see that
\begin{align*}
\EE_{Q} [\Ec^{t}(\xi^-)] -\varepsilon = \EE_{Q} [\mathds{1}_{\Omega^{t}_{\xi}}  \Ec^{t}(\xi^-)]  -\epsilon \le \EE_{Q^*}[ \Ec^{t+1}(\xi^-)] \le
 \sup_{Q \in \Qc}\EE_{Q}[\Ec^{t+1}(\xi^-)].
\end{align*}
Again, as this is true for all $\epsilon>0$ and all $Q \in \Qc$ we obtain that $ \sup_{Q \in \Qc}\EE_{Q}[\Ec^{t}(\xi^-)] \le \sup_{Q \in \Qc}\EE_{Q}[\Ec^{t+1}(\xi^-)]<\infty.$\\
Let $0\le t \le T-1$ and $\omega \in \Omega^t_{\xi}$. Then for all $Q \in \Qc_{t}(\omega)$, $\EE_Q[\Ec^{t+1}(\xi^-)(\omega, \cdot)]<\infty$, which implies that $\Ec^{t+1}(\xi^-)(\omega, \cdot)<\infty$ $Q$-a.s. and thus $\Ec^{t+1}(\xi^-)(\omega, \cdot)<\infty$  $\Pc_{t}(\omega)$-q.s. Assume for a moment that we have proved $\Ec^{t+1}(\xi) \geq -\Ec^{t+1}(\xi^-)$. Then  $-\Ec^{t+1}(\xi)(\omega, \cdot)<\infty$  $\Pc_{t}(\omega)$-q.s. and
$\omega \in \hat{\Omega}^t_{\xi}$. Thus ${\Omega}^t_{\xi}  \subseteq \hat{\Omega}^t_{\xi}$ and $\hat{\Omega}^t_{\xi}$ is a $\Pc$-full measure set.\\
Let $0\le t \le T-1$. We now prove that $\Ec^{t+1}(\xi) \geq -\Ec^{t+1}(\xi^-)$ by backward induction. The claim is clearly true for $t=T-1$. Assume that it is true for some $1\le t+1 \le T$. Then for $\omega \in \Omega^t$ we find
\begin{align*}
\Ec^{t}(\xi)(\omega) & =  \sup_{Q \in \Qc_{t}(\omega)}\EE_Q[\Ec^{t+1}(\xi)(\omega, \cdot)] \ge  \sup_{Q \in \Qc_{t}(\omega)}\EE_Q[-\Ec^{t+1}(\xi^-)(\omega, \cdot)] \\
&  \ge \inf_{Q \in \Qc_{t}(\omega)}\EE_Q[-\Ec^{t+1}(\xi^-)(\omega, \cdot)]= -\Ec^{t}(\xi^-)(\omega).
\end{align*}
This concludes the proof.
\end{proof}

\begin{remark}\label{rem. laurence}
Recall the set $\Omega_{\text{NA}}^t=\{\omega \in \Omega^t \ | \ \text{NA}(\Pc_t(\omega))\text{ holds} \}$,
which is universally measurable and of $\Pc$-full measure (see \cite[Lemma 4.6, p.842]{BN}). Let $\omega \in \Omega^t_{\text{NA}}$. From \cite[Lemma 4.1]{BN}, we know that
$\Ec^{t}(\xi)(\omega)=-\infty$ implies that $\{\Ec^{t+1}(\xi)(\omega,\cdot)=-\infty\}$ is not $\Pc_{t}(\omega)$-polar i.e. $\omega \notin \hat{\Omega}^t_{\xi}$. Thus
$${\O}^t_{\xi} \cap \O^t_{\text{NA}} \subseteq \hat{\O}^t_{\xi} \cap \O^t_{\text{NA}} \subseteq \{\o \in \Omega_{\text{NA}}^t\ | \ \Ec^{t}(\xi)(\omega)>-\infty\}.$$
\end{remark}

\begin{lemma} \label{lem::usa}
If $\xi: \Omega^T \to \R$ is upper semianalytic, then $\pi_t(\xi)$ is upper semianalytic for all $0 \le t \le T-1$. 
\end{lemma}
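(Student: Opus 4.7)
The plan is to prove the statement by backward induction on $t$. The base case $t=T$ is immediate, since $\pi_T(\xi)=\xi$ is upper semianalytic by assumption.

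For the inductive step, suppose $\pi_{t+1}(\xi)$ is upper semianalytic. The core step is to establish the one-step superhedging duality
\begin{equation*}
\pi_t(\xi)(\omega) \;=\; \sup_{Q\in\Qc_t(\omega)}\E_Q[\pi_{t+1}(\xi)(\omega,\cdot)] \;=\; \Ec_t\bigl(\pi_{t+1}(\xi)\bigr)(\omega)
\end{equation*}
for $\omega$ in the $\Pc$-full-measure set $\Omega^t_{\mathrm{NA}}\cap\Omega^t_\xi$. On this set, the fibrewise no-arbitrage condition $\mathrm{NA}(\Pc_t(\omega))$ holds by the characterisation in \cite[Thm.~4.5]{BN}, while $\Qc_t(\omega)$-integrability of $\pi_{t+1}(\xi)^-(\omega,\cdot)$ is supplied by \cref{olala}, so the one-period superhedging theorem of \cite{BN} yields the displayed equality. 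Upper semianalyticity of the dual object $\Ec_t(\pi_{t+1}(\xi))$ on all of $\Omega^t$ then follows from \cite[Lemma~4.10]{BN}, which applies because $\Qc_t$ has analytic graph (see \eqref{Mstar}) and $\pi_{t+1}(\xi)$ is upper semianalytic by induction.

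The main obstacle is to upgrade this quasi-sure identity into a pointwise statement, as upper semianalyticity is a pointwise property. Weak duality gives $\Ec_t(\pi_{t+1}(\xi))\le\pi_t(\xi)$ pointwise on all of $\Omega^t$ directly from the definition \eqref{defminr}, and so
\begin{equation*}
\{\omega:\pi_t(\xi)(\omega)>c\} \;=\; \{\Ec_t(\pi_{t+1}(\xi))>c\}\cup B_c,
\end{equation*}
where the first set is analytic and $B_c$ is contained in the $\Pc$-polar set on which the duality might fail. Showing that $B_c$ is itself analytic is the technical crux; this would proceed by a projection argument on the exceptional fibres, expressing the q.s.\ inequality in \eqref{defminr} as $\sup_{P\in\Pc_t(\omega)}\E_P[(\pi_{t+1}(\xi)(\omega,\cdot)-x-H\Delta S_{t+1}(\omega,\cdot))^+]=0$ and exploiting the analytic graph of $\Pc_t$ together with the preservation properties of upper semianalytic functions under integration and supremum established in \cite[Chapter~7]{bs}, in the spirit of the measurability arguments of \cite{BN}.
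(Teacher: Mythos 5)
There is a genuine gap: you correctly identify the quasi-sure-to-pointwise upgrade as the crux, but you do not carry it out, and the route you sketch would not close it. Your decomposition $\{\pi_t(\xi)>c\}=\{\Ec_t(\pi_{t+1}(\xi))>c\}\cup B_c$ reduces the problem to showing that $B_c=\{\pi_t(\xi)>c\}\cap\{\Ec_t(\pi_{t+1}(\xi))\le c\}$ is analytic, which is essentially the original problem again (the exceptional set on which strong duality fails is merely contained in a $\Pc$-polar set, and polar sets carry no descriptive-set-theoretic structure). The "projection argument" you then propose runs into a class mismatch: the set $\{(\omega,x,H)\ |\ \sup_{P\in\Pc_t(\omega)}\E_P[(\pi_{t+1}(\xi)(\omega,\cdot)-x-H\Delta S_{t+1}(\omega,\cdot))^+]=0\}$ is the zero-sublevel set of an \emph{upper} semianalytic function, hence coanalytic, and the projection (existential quantification) over the \emph{uncountable} parameter set $(x,H)\in\R\times\R^d$ of a coanalytic set need not be analytic or coanalytic. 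This is precisely the difficulty the paper's proof is designed to avoid: it never invokes duality, but instead shows directly that $\{\pi_t(\xi)<a\}$ is coanalytic by (i) proving that the existential quantifier over $(H,\epsilon)\in\R^d\times(0,\infty)$ may be replaced by one over the \emph{countable} set $\QQ^d\times\QQ_+$ --- a perturbation argument that adds a small nonnegative $H$ and uses $\Delta S_{t+1}\ge -S_t$ with $S\ge 0$ --- and (ii) applying the preservation of lower semianalyticity under countable suprema and under infima over the analytic-graph correspondence $\Pc_t$. Step (i) is the missing idea in your proposal.

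A secondary but concrete defect: your argument imports hypotheses the lemma does not have. The statement assumes only that $\xi$ is upper semianalytic, whereas your proof invokes NA$(\Pc)$, the sets $\Omega^t_{\mathrm{NA}}$ and $\Omega^t_\xi$, and the integrability conclusion of \cref{olala} (which itself requires $\sup_{Q\in\Qc}\E_Q[\xi^-]<\infty$). The paper's proof of \cref{lem::usa} uses none of these, and indeed the lemma must hold without them because it is needed as an ingredient (via measurable selection applied to $\pi_{t+1}(\xi)$) in the later proof of the dynamic programming identity $\pi_t(\xi)=\Ec^t(\xi)$ that your argument takes as its starting point.
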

\begin{proof}
We proceed by induction. As $\pi_T(\xi)=\xi$ the claim is true for $t=T$. Assume now the $\pi_{t+1}(\xi)$ is upper semianalytic for some $t\in \{0, \dots, T-1\}$. We show that the claim is true for $t$. Indeed for all $a \in \R$
\begin{align*}
&\{\omega \in \Omega^{t} \ | \ \pi_t(\xi) < a \}\\
=\  &\{\omega \in \Omega^t  \ | \ \exists H \in \R^d, \ \epsilon>0 \text{ s. t. }\forall P \in \mathcal{P}_t(\omega) \ P(a-\epsilon+H \Delta S_{t+1}(\omega, \cdot) \ge \pi_{t+1}(\xi)(\omega, \cdot))=1 \}\\
=\ &\{\omega \in \Omega^t \ | \ \sup_{\epsilon \in \QQ_+}\sup_{H \in \QQ^d} \inf_{P \in \mathcal{P}_t(\omega)} P(a-\epsilon+H \Delta S_{t+1}(\omega, \cdot) \ge \pi_{t+1}(\xi)(\omega, \cdot))\ge 1 \}
\end{align*}
As the function $(\omega, P,H,\epsilon) \mapsto \E_{P}\left[\mathds{1}_{\{a-\epsilon+H \Delta S_{t+1}(\omega, \cdot) \ge \pi_{t+1}(\xi)(\omega, \cdot)\}}\right]$ is lower semianalytic, the same holds true for $\omega \mapsto \sup_{\epsilon \in \QQ_+}\sup_{H \in \QQ^d} \inf_{P \in \mathcal{P}_t(\omega)} \E_{P}\left[\mathds{1}_{\{a-\epsilon+H \Delta S_{t+1}(\omega, \cdot) \ge \pi_{t+1}(\xi)(\omega, \cdot)\}}\right]$ (see \cite[Lemma 7.30, p.177, Prop. 7.47, p.180]{bs}), thus the set above is coanalytic. To complete the proof, we argue why 
\begin{align*}
&\{\omega \in \Omega^t \ | \ \exists H \in \R^d, \ \epsilon>0 \text{ such that }a-\epsilon+H\Delta S_{t+1}(\omega, \cdot) \ge \pi_{t+1}(\omega, \cdot) \ \Pc_t(\omega)\text{-q.s.}\} \\
\subseteq\ &\{\omega \in \Omega^t \ | \ \exists H \in \QQ^d,\ \epsilon\in \QQ_+ \text{ such that }a-\epsilon+H\Delta S_{t+1}(\omega, \cdot) \ge \pi_{t+1}(\omega, \cdot) \ \Pc_t(\omega)\text{-q.s.}\}:
\end{align*}
Fix $\omega \in \Omega^t$, $\tilde{H}\in \R^d$, $\epsilon>0$ such that $a-\epsilon+\tilde{H}\Delta S_{t+1}(\omega, \cdot) \ge \pi_{t+1}(\omega, \cdot) \ \Pc_t(\omega)\text{-q.s.}$ Take $\tilde{\epsilon}\in \QQ_+$ such that $0<\tilde{\epsilon}<\epsilon/2$ and $H \in [0, \infty)^d$ such that 
\begin{align*}
H^1+ \dots +H^d \le \frac{\epsilon/2}{\max_{1\le i \le d}S^i_t(\omega)}.
\end{align*}
It follows that for $\Pc_t(\omega)$-q.e. $\omega' \in \Omega$
\begin{align*}
a-\tilde{\epsilon}+(H+\tilde{H})\Delta S_{t+1}(\omega, \omega') &\ge a -\epsilon/2+ \tilde{H} \Delta S_{t+1}(\omega, \omega')
+H\Delta S_{t+1}(\omega, \omega')\\
&\ge \pi_{t+1}(\xi)(\omega, \omega')+\epsilon/2-H S_t(\omega) \\
&\ge \pi_{t+1}(\xi)(\omega, \omega').
\end{align*}
In particular the above inequality is valid for some $H$ such that $\tilde{H}+H \in \QQ^d$.
\end{proof}


\begin{proof}[of \cref{eu}]

Let
\begin{align*}
\Omega_{\text{NA}, \xi}:= \{\omega \in \Omega^T \ | \ \omega \in \Omega_{\text{NA}}^t \cap \Omega_{\xi}^t \text{ for all }0\le t \le T-1 \},
\end{align*} 
where the definition of $\Omega^t_{\xi}$ is given in \cref{olala} and the definiton of $\Omega_{\text{NA}}^t$ in \cref{rem. laurence}. Then by \cref{olala} and \cite[Lemma 4.6, p. 842]{BN} $\Omega_{\text{NA}, \xi}$ is universally measurable and of $\Pc$-full measure. Let $\omega \in \Omega_{\text{NA}, \xi}$. By  \cite[Lemma 4.10]{BN}, there exists a universally measurable function $\hat{H}_{t+1}$ such that
\begin{align}
\label{fantarec}
\mathcal{E}^{t}(\xi) (\omega) + \hat{H}_{t+1} (\omega) \Delta S_{t+1}(\omega, \cdot) \ge \mathcal{E}^{t+1}(\xi) (\omega,\cdot)  \quad  \Pc_{t}(\omega)\mbox{-q.s.}
\end{align}
To see that 
\begin{align}
\label{beindidonc}
\Xt = \Ec^t(\xi)  \quad \Pc\text{-q.s.}
\end{align}
for $0\le t \le T$ we argue by backwards induction. Indeed the claim is true by definition for $t=T$. Now we assume that the claim is true for $t+1 \in \{1, \dots, T\}$. By \cite[eq. (4.8) in Lemma 4.8, p.843]{BN} the correspondence 
\begin{align*}
\mathcal{H}_t(\omega)=\{  (Q, P) \in \mathfrak{P}(\Omega) &\times \mathfrak{P}(\Omega) \ | \ \E_{Q}[\Delta S_{t+1}(\omega, \cdot)]=0, \ P \in \mathcal{P}_t(\omega), \ Q \ll P \}
\end{align*}
has analytic graph. By \cite[Prop. 7.47, p. 179, Prop. 7.48, p. 180, Prop. 7.50, p.184]{bs} $(\omega,Q, P) \mapsto \E_{Q}[\mathcal{E}_{t+1}(\xi)(\omega,\cdot)]$ and $(\omega,Q, P) \mapsto \E_{Q}[\pi_{t+1}(\xi)(\omega, \cdot)]$ are upper seminanalytic functions and there exists sequences $(\hat{P}_n,\hat{Q}_n)_{n \in \N}$ and $(\bar{P}_n,\bar{Q}_n)_{n \in \N}$ of $\Fut$-measurable selectors of $\mathcal{H}_t$ such that 
\begin{align*}
\lim_{n \to \infty} \E_{\hat{Q}_n(\omega)}[\mathcal{E}^{t+1}(\xi)(\omega, \cdot)] &=\sup_{(Q,P)\in \mathcal{H}_t(\omega)}\E_{Q}[\mathcal{E}^{t+1}(\xi)(\omega, \cdot)]= \mathcal{E}^t(\xi)(\omega),\\
\lim_{n \to \infty} \E_{\bar{Q}_n(\omega)}[\pi_{t+1}(\xi)(\omega,\cdot)]&=\sup_{(Q,P)\in \mathcal{H}_t(\omega)}\E_{Q}[\pi_{t+1}(\xi)(\omega,\cdot)]= \mathcal{E}_t(\pi_{t+1}(\xi))(\omega).
\end{align*}
Define $P_n(\omega)=(\hat{P}_n(\omega)+\bar{P}_n(\omega))/2\in \Pc_t(\omega)$ and $\tilde{P}_t(\omega)= \sum_{n =1}^{\infty}2^{-n}P_n(\omega)$. Then $\tilde{P}_t(\omega) \in \mathfrak{P}(\Omega)$ for all $\omega \in \Omega^t$, $\omega \mapsto \tilde{P}_t(\omega)$ is $\Fut$-measurable and $\hat{P}_n(\omega), \bar{P}_n(\omega), P_n(\omega)$ are absolutely continuous with respect to $\tilde{P}_t(\omega)$. Furthermore for $\omega \in \Omega_{\text{NA}}^t$ 
\begin{align*}
\E_{\hat{Q}_n(\omega)}[\mathcal{E}^{t+1}(\xi)(\omega, \cdot)]&\le \sup_{Q \ll \tilde{P}_t(\omega), \ \E_{Q}[\Delta S_{t+1}(\omega, \cdot)]=0} \E_{Q}[\mathcal{E}^{t+1}(\xi)(\omega, \cdot)] \\&\le \inf\{x \in \R \ | \ \exists H\in \R^d \text{ such that }x+H\Delta S_{t+1}(\omega, \cdot) \ge  \mathcal{E}^{t+1}(\xi)(\omega, \cdot) \ \tilde{P}_t(\omega)\text{-a.s.}\}\\
&\le \pi_{t}(\mathcal{E}^{t+1}(\xi))(\omega)=\mathcal{E}_{t}(\mathcal{E}^{t+1}(\xi))(\omega)=\mathcal{E}^t(\xi)(\omega),
\end{align*}
where the third inequality follows from the fact that $P_n(\omega) \in \Pc_t(\omega)$ for $n \in \N$ and the first equality follows from \cite[Theorem 3.4]{BN} as $\omega \in \Omega_{\text{NA}}^t$. Letting $n \to \infty$ we conclude
\begin{align*}
\sup_{Q \ll \tilde{P}_t(\omega), \ \E_{Q}[\Delta S_{t+1}(\omega, \cdot)]=0} \E_{Q}[\mathcal{E}^{t+1}(\xi)(\omega, \cdot)]&= \mathcal{E}^{t}(\xi)(\omega),\\
\sup_{Q \ll \tilde{P}_t(\omega), \ \E_{Q}[\Delta S_{t+1}(\omega, \cdot)]=0} \E_{Q}[\pi_{t+1}(\xi)(\omega,\cdot)]&= \mathcal{E}_t(\pi_{t+1}(\xi)(\omega,\cdot)),
\end{align*}
Fix now $P \in \mathcal{P}$ and define $\tilde{P}=P|_{\Fut}\otimes \tilde{P}_t$. Then as $P_n(\omega)\in \mathcal{P}_t(\omega)$ the induction assumption implies that $\mathcal{E}^{t+1}(\xi)=\pi_{t+1}(\xi)$ holds $\tilde{P}$-a.s. and thus for $\tilde{P}$-a.e. $\omega \in \Omega^t$ we have
\begin{align*}
\mathcal{E}^{t}(\xi)(\omega)&=\sup_{Q \ll \tilde{P}_t(\omega), \ \E_{Q}[\Delta S_{t+1}(\omega, \cdot)]=0} \E_{Q}[\mathcal{E}^{t+1}(\xi)(\omega, \cdot)]\\
&=\sup_{Q \ll \tilde{P}_t(\omega), \ \E_{Q}[\Delta S_{t+1}(\omega, \cdot)]=0} \E_{Q}[\pi_{t+1}(\xi)(\omega, \cdot)]
\\&=\mathcal{E}_t(\pi_{t+1}(\xi))(\omega)=\pi_t(\xi)(\omega),
\end{align*}
where the last equality again follows from \cite[Theorem 3.4]{BN} if $\omega \in \Omega_{\text{NA}}^t$. This concludes the proof of \eqref{beindidonc}.\\
Let $(x,H,C) \in \Sp(\xi)$. Now we show that
\begin{align}
\label{eeu2} V_t^{x,H,C} & \geq  \Xt  \quad \Pc\mbox{-q.s.}
\end{align}
This is clearly true at $t=T$. Fix some $1\le t\le T$ and assume that \eqref{eeu2} holds true for $t$. Then
\begin{align*}
V_{t-1}^{x,H,C} + H_t \Delta S_t \geq V_{t}^{x,H,C} \ge \Xt\quad\Pc\mbox{-q.s.}
\end{align*}
Noting that $V_{t-1}^{x,H,C}$ is $\mathcal{F}^{\mathcal{U}}_{t-1}$-measurable and $\pi_t(\xi)$ is upper seminanalytic and using the same reasoning as in \cite[proof of Lemma 4.10, pp.846-848]{BN} we conclude that for $\omega\in \Omega^{t-1}$ in a $\Pc$ full-measure set
\begin{align}\label{eq. bn_lem4.10}
V_{t-1}^{x,H,C} (\omega) + H_t (\omega)\Delta S_t(\omega, \cdot) \geq \Xt (\omega, \cdot)\quad \Pc_{t-1}(\o)\text{-q.s.}
\end{align}
Thus $V_{t-1}^{x,H,C} (\omega) \geq \pi_{t-1}(\xi)(\omega)$ by \eqref{defminr} and \eqref{eeu2} is proved for $t-1$. 
Next we define the consumption process $\hat{C}$. Let $P=P_{0}\otimes P_{1} \otimes \dots \otimes P_{T-1} \in  \mathcal{P}$, where $P_{t} \in \mathcal{P}_{t}(\omega)$ for all $0\le t \le T-1$. Then using \cref{fantarec} and Fubini's Theorem (recall \cite[Proposition 7.45 p175]{bs}), we get that
\begin{align}
\label{fantarectot}
\Ec^{t-1}(\xi)  + \hat{H}_{t}  \Delta S_{t} \ge \Ec^{t}(\xi)  \quad  \Pc\mbox{-q.s.}
\end{align}
for a universally measurable function $\hat{H}_t: \O^t \to \R^d$. Using \eqref{fantarectot} recursively,
\begin{align}
\label{fantacool}
\Ec^0(\xi) + \sum_{u=1}^t \hat{H}_u \Delta S_u \ge \Ec^t(\xi)\quad  \Pc\mbox{-q.s.}
\end{align}
follows.
Now we set $\hat{C}_t=\Ec^0(\xi) + \sum_{u=1}^t \hat{H}_u \Delta S_u - \Ec^t(\xi)$. Then
$\hat{C}_{t}(\omega, \cdot) -\hat{C}_{t-1}(\omega)=\Ec^{t-1}(\xi)(\omega) - \Ec^{t} (\xi)(\omega, \cdot) + \hat{H}_{t}(\omega) \Delta S_{t}(\omega, \cdot) \geq 0$ $\Pc_{t-1}(\omega)$-q.s. and using again Fubini's Theorem $\hat{C}_{t} -\hat{C}_{t-1}\geq 0$ $\Pc\mbox{-q.s.}$ Thus $\hat{C}=(\hat{C}_t)_{0\leq t \leq T}$ is a cumulative consumption process.\\
Now we prove that $\pi(\xi)=\pi_0(\xi)$. Let $(x,H) \in \Sp(\xi)$. Then as $V_{T-1}^{x,H}+H_{T} \Delta S_{T} \geq \xi \ \Pc_{T-1}\mbox{-q.s.}$ it follows as in \eqref{eq. bn_lem4.10}
\begin{align*}
V_{T-1}^{x,H}(\omega)+H_{T}(\omega) \Delta S_{T}(\omega, \cdot) \geq \xi(\omega, \cdot) \; \Pc_{T-1}(\omega)\mbox{-q.s.}
\end{align*}
for all for $\omega\in \Omega^{T-1}$ in an $\Fc^{\Uc}_{T-1}$-measurable and $\Pc$-full measure set. From \eqref{defminr}, we conclude that $\pi_{T-1}(\xi)(\omega) \le V_{T-1}^{x,H}(\omega).$
By induction we see that $\pi_0(\xi) \le x$ and thus $\pi_0(\xi) \le \pi(\xi)$. Conversely, using \eqref{fantacool} and \eqref{beindidonc}
\begin{align*}
V_{T}^{\pi_0,\hat{H}} =\pi_0(\xi) + \sum_{t=1}^T \hat{H}_t \Delta S_t \geq \Ec^T(\xi)=\xi \quad  \Pc\mbox{-q.s.}
\end{align*}
and therefore $\pi_0(\xi)  \geq \pi(\xi)$. Thus $\Ec^{0}(\xi)=\pi_0(\xi)=\pi(\xi)$ by \eqref{beindidonc} and we obtain (recall \eqref{fantacool} and the definition of $\hat{C}$) that
\begin{align*}
V_t^{\pi(\xi),\hat{H},\hat{C}}=\Ec^t(\xi)=\Xt \quad  \Pc\mbox{-q.s.}
\end{align*}
Since $V_T^{\pi(\xi),\hat{H},\hat{C}}=\Ec^T (\xi)=\xi \quad \Pc\mbox{-q.s.}$,
$(\pi(\xi),\hat{H},\hat{C})$ is a superhedging strategy and it is also minimal. Indeed let $(x,H,C)\in \Sp(\xi)$ then $V_T^{x,H,C} \ge \xi$ $\Pc$-q.s. From \eqref{eeu2},
$V_t^{x,H,C} \ge \Xt=V_t^{\pi(\xi),\hat{H},\hat{C}}$ $\Pc$-q.s.
This concludes the proof.
\end{proof}
\newpage
\section{Proofs of \cref{thm. exis_new} and \cref{thm. unique_new} }
\label{sec:app_utility}

\subsection{Proof of \cref{thm. exis_new}: The one-period case}\label{Sec. one-per}
We now prove \cref{thm. exis_new} in the case $T=1$, where we  follow  arguments given in \cite{Nutz}. Let $\xi:\Omega^T \to \R$ be Borel. In preparation for the multi-period case we define the set
\begin{align*}
\mathcal{A}_{0,x}= \{(H,c) \in \R^d\times \R_+ \ |\  x-c+H\Delta S_1 \ge \pi_1(\xi)\  \Pc\text{-q.s.} \}.
\end{align*}
Recall definition $\pi_t(\xi)$ given in \eqref{defminr} for $t=0,1$ and note that if $(H,c)\in \mathcal{A}_{0,x}$ then also $(H,0)\in \mathcal{A}_{0,x}$. We thus often write $H\in \mathcal{A}_{0,x}$ instead of $(H,c)\in \mathcal{A}_{0,x}$. Let $U(1,\cdot, \cdot):\Omega \times [0,\infty) \to \R$ be bounded from above and $\mathcal{F}^{\mathcal{U}}_1$-measurable. Besides let us assume that 
$x \mapsto U(1,\omega,x)$ is non-decreasing, concave and continuous for each $\omega \in \Omega$. Furthermore let the deterministic function $U(0, \cdot):[0,\infty) \to \R$ be non-decreasing and continuous. As usual we set $U(t,\omega, x)=-\infty$ for $x<0$ and $t=0,1$. Let us now state the main theorem for $T=1$:
\begin{proposition} \label{Thm Nutz1}
Let NA$(\Pc)$ hold and $x \ge \pi_0(\xi)$. 
Then
\begin{align*}
u(x) := \sup_{(H,c) \in \mathcal{A}_{0,x}} \left(\inf_{P \in \Pc^u} \E_{P}[U(1,x-c+H\Delta S_1-\pi_1(\xi))]+U(0,c)\right)< \infty
\end{align*}
and there exists $(\hat{H},\hat{c}) \in \mathcal{A}_{0,x}$ such that $\inf_{P \in \Pc^u} \E_{P}[U(1,x-\hat{c}+\hat{H}\Delta S_1-\pi_1(\xi))]+U(0,\hat{c})=u(x).$
\end{proposition}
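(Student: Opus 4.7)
The plan is to prove finiteness and attainment of $u(x)$ by a compactness argument that exploits $\mathrm{NA}(\Pc)$, in the spirit of the one-period reduction of \cite{Nutz}.

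First, I will bound the admissible set. For any $(H,c)\in\mathcal{A}_{0,x}$ and any $Q\in\Qc$, taking $Q$-expectations in the constraint $x - c + H\Delta S_1 \ge \pi_1(\xi)$ (with $\E_Q[H\Delta S_1]=0$) gives $c \le x - \E_Q[\pi_1(\xi)]$; supremising over $Q$ and invoking $\pi_0(\xi) = \sup_{Q\in\Qc}\E_Q[\pi_1(\xi)]$ from \cref{eu} yields $c \in [0, x - \pi_0(\xi)]$, a compact interval. Hence $U(0,c)$ is bounded on the admissible range by continuity, and since $U(1,\cdot,\cdot)$ is bounded above, $u(x) < \infty$. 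For $u(x) > -\infty$ I exhibit a concrete admissible strategy using \cref{eu}: take $c=0$ and let $\hat H$ be the time-$1$ component of a minimal superhedging strategy for $\xi$, so that at initial capital $x \ge \pi_0(\xi)$ one has $x + \hat H\Delta S_1 - \pi_1(\xi) \ge 0$ $\Pc$-q.s.

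Next, I would extract a convergent maximising sequence. Let $(H^n,c^n) \in \mathcal{A}_{0,x}$ with $J(H^n,c^n) \to u(x)$, where $J(H,c) := \inf_{P\in\Pc^u}\E_P[U(1,x-c+H\Delta S_1-\pi_1(\xi))] + U(0,c)$. By the above bound, along a subsequence $c^n \to \hat c \in [0, x-\pi_0(\xi)]$. Set $L := \{H\in\R^d : H\Delta S_1 = 0 \ \Pc\text{-q.s.}\}$; since both constraint and objective depend on $H$ only through $H\Delta S_1$, I may assume $H^n \in L^{\perp}$. If $|H^n|\to\infty$, divide the constraint by $|H^n|$ and pass to a subsequential limit $\tilde H = \lim H^n/|H^n| \in L^\perp$ with $|\tilde H|=1$; using $\Pc$-q.s.\ finiteness of $\pi_1(\xi)$ from \cref{olala} and boundedness of $c^n$, I obtain $\tilde H\Delta S_1 \ge 0$ $\Pc$-q.s., hence $= 0$ $\Pc$-q.s.\ by $\mathrm{NA}(\Pc)$, i.e.\ $\tilde H \in L\cap L^\perp = \{0\}$, a contradiction. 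Thus $(H^n)$ is bounded and admits a further subsequence $H^n \to \hat H$.

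Finally, I verify optimality of $(\hat H,\hat c)$. Feasibility follows by taking $\Pc$-q.s.\ limits in the constraint. Continuity of $x \mapsto U(1,\omega,x)$ on $[0,\infty)$ yields pointwise $\Pc^u$-q.s.\ convergence of the integrands, and the upper bound on $U(1,\cdot,\cdot)$ justifies reverse Fatou under each $P\in\Pc^u$:
\begin{equation*}
\limsup_n \E_P[U(1,x-c^n+H^n\Delta S_1-\pi_1(\xi))] \le \E_P[U(1,x-\hat c+\hat H\Delta S_1-\pi_1(\xi))].
\end{equation*}
Since $\inf_{P}\E_P[\cdot] \le \E_P[\cdot]$ for each $P$, applying $\limsup$ and then $\inf_P$ on the right gives upper semicontinuity of the first term of $J$ along the sequence; combined with $U(0,c^n)\to U(0,\hat c)$ this yields $u(x) \le J(\hat H,\hat c)$, while the reverse inequality is trivial. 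The main obstacle is the $L^\perp$-boundedness step: it relies both on $\mathrm{NA}(\Pc)$ and on the $\Pc$-q.s.\ finiteness of $\pi_1(\xi)$ from \cref{olala}, i.e.\ on the measurability machinery already developed in the paper.
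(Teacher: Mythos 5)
Your proposal is correct and follows essentially the same route as the paper's proof: bound $c$ by $[0,x-\pi_0(\xi)]$, reduce to strategies orthogonal to the null directions $\{H:\,H\Delta S_1=0\ \Pc\text{-q.s.}\}$, obtain boundedness via the normalisation-and-NA$(\Pc)$ contradiction, and conclude by upper semicontinuity of $(H,c)\mapsto\inf_{P\in\Pc^u}\E_P[\cdot]+U(0,c)$ from reverse Fatou with the bounded-above utility. The only differences are cosmetic: you work along a maximising sequence instead of first proving compactness of $K_x$ as a separate lemma, and you derive the bound on $c$ via the duality of \cref{eu} where the paper reads it off directly from the definition \eqref{defminr} of $\pi_0(\xi)$.
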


We prove the result via a lemma. Here we denote 
\begin{align*}
L=\text{span }(\{\text{supp}(P \circ (\Delta S_1)^{-1} \ | \ P \in \mathcal{P}\})\subseteq \R^d
\end{align*} and the orthogonal complement
\begin{align*}
L^{\perp}=\{H \in \R^d\ |\ H V=0 \text{ for all } V \in L\}.
\end{align*}

\begin{lemma}
Assume $x \ge \pi_0(\xi)$. Under NA$(\Pc)$ the set $K_{x}=\mathcal{A}_{0,x} \cap (L \times \R_+) \subseteq \R^{d+1}$ is non-empty, convex and compact.
\end{lemma}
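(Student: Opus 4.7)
The plan is to verify the four defining properties of $K_x$ separately: non-emptiness, convexity, closedness, and boundedness. The first three are essentially routine once one unpacks the definitions; the real work lies in boundedness, where NA$(\Pc)$ and the decomposition $\R^d = L \oplus L^\perp$ play their decisive roles.

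For non-emptiness, I would invoke \cref{eu}(i): since $\pi_0(\xi) \le x$, a minimal superhedging strategy exists, delivering some $\hat H \in \R^d$ with $\pi_0(\xi) + \hat H \Delta S_1 \ge \pi_1(\xi)$ $\Pc$-q.s. Decomposing $\hat H = \hat H_L + \hat H_\perp$ with $\hat H_L \in L$ and $\hat H_\perp \in L^\perp$, the summand $\hat H_\perp \Delta S_1$ vanishes $\Pc$-q.s.\ by the very definition of $L$, so the inequality persists for $\hat H_L$; setting $c := x - \pi_0(\xi) \ge 0$ then places $(\hat H_L, c)$ in $K_x$. Convexity is immediate because $\mathcal{A}_{0,x}$ is cut out by a quasi-sure family of affine constraints in $(H,c)$ and $L \times \R_+$ is itself a convex cone. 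For closedness, given $(H^n, c^n) \to (H,c)$ in $K_x$, one picks for each $n$ a $\Pc$-polar set $N_n$ off which the constraint holds and observes that $N := \bigcup_{n} N_n$ is still $\Pc$-polar; passing to pointwise limits outside $N$ — legitimate since $\pi_1(\xi)$ is $\Pc$-q.s.\ real-valued by \cref{olala} — yields $x - c + H \Delta S_1 \ge \pi_1(\xi)$ $\Pc$-q.s., while $L$ being a closed linear subspace takes care of the requirement $H \in L$.

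The main obstacle is boundedness, which I would address by the standard normalization trick. Assume for contradiction that there is a sequence $(H^n, c^n) \in K_x$ with $\lambda_n := |(H^n, c^n)| \to \infty$. After passing to a subsequence, $(H^n, c^n)/\lambda_n$ converges to some $(\tilde H, \tilde c) \in L \times \R_+$ with $|(\tilde H, \tilde c)| = 1$. Dividing the quasi-sure inequality $x - c^n + H^n \Delta S_1 \ge \pi_1(\xi)$ by $\lambda_n$ and sending $n \to \infty$ — once again using that $\pi_1(\xi)$ is $\Pc$-q.s.\ finite, so that $\pi_1(\xi)/\lambda_n \to 0$ $\Pc$-q.s. — produces
\begin{equation*}
\tilde H \Delta S_1 \ge \tilde c \ge 0 \quad \Pc\text{-q.s.}
\end{equation*}
NA$(\Pc)$ then forces $\tilde H \Delta S_1 = 0$ $\Pc$-q.s., i.e.\ $\tilde H \in L^\perp$; combined with $\tilde H \in L$ this gives $\tilde H = 0$, whence $\tilde c = 0$ as well, contradicting $|(\tilde H, \tilde c)| = 1$. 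Together with closedness, this establishes compactness and completes the proof.
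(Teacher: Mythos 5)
Your proof is correct and follows the same basic skeleton as the paper's: convexity and closedness are routine, and boundedness is obtained by assuming an unbounded sequence, normalising, passing to a quasi-sure limit inequality, and invoking NA$(\Pc)$ to land in $L^\perp$. The small divergences are all improvements in bookkeeping rather than a different idea. The paper first bounds the consumption coordinate deterministically by $c\in[0,x-\pi_0(\xi)]$ (using only the definition of $\pi_0(\xi)$), then normalises $H$ alone, after a translation by a fixed $H_0\in K_x$ so that the origin lies in the set, and reaches the contradiction via Cauchy--Schwarz applied to $|H|^2=-H_0H$. You normalise the pair $(H^n,c^n)$ jointly, which removes the need for both the translation and the Cauchy--Schwarz step: the limiting direction $(\tilde H,\tilde c)$ lands in $L\times\R_+$, NA forces $\tilde H\in L^\perp$, so $\tilde H\in L\cap L^\perp=\{0\}$, and then $0=\tilde H\Delta S_1\ge\tilde c\ge 0$ kills $\tilde c$ as well, contradicting $|(\tilde H,\tilde c)|=1$. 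You also supply the non-emptiness argument (decomposing a minimal superhedging strategy's $\hat H$ along $L\oplus L^\perp$ and discarding the $L^\perp$ part), which the paper's proof silently omits, and you make explicit the use of the countable union of polar sets and the quasi-sure finiteness of $\pi_1(\xi)$ in both the closedness and boundedness steps, which the paper leaves implicit. Both proofs are sound; yours is marginally more self-contained and avoids two unnecessary manoeuvres.
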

\begin{proof}
Clearly $K_{x}$ is convex and closed. It remains to show that $K_{x}$ is bounded: As by definition of $\pi_0(\xi)$ clearly $c \in [0,x-\pi_0(\xi)]$ for all $c \in \mathcal{A}_{0,x}$ we only need to show that $H \in K_x$ is bounded. Note that after a translation by $(H_0,0)\in K_{x}$ we have $0 \in \tilde{K}_{x}:= K_{x}-(H_0,0)$. Now we assume towards a contradiction that there exist $H_n \in \tilde{K}_{x}$ such that $|H_n| \to \infty$. We define $\delta=|H_0|+1$. We can extract a subsequence $\delta H_n/|H_n|$ that converges to a limit $H \in \R^d$, so $|H|=\delta$. As $\tilde{K}_{x}$ is convex and contains the origin we have for $n$ large enough $\delta H_n/|H_n|\in \tilde{K}_{x}$.
It follows $H \in \tilde{K}_{x}$, since $\tilde{K}_{x}$ is closed. Furthermore
\begin{align*}
H\Delta S_1 \ge \liminf_{n \to \infty} \frac{\pi_1(\xi)-x-H_0\Delta S_1}{|H_n|/\delta}=0 \hspace{0.5cm} \ \Pc\text{-q.s.}
\end{align*}
By NA$(\Pc)$ this implies $H \Delta S_1=0$ $\Pc$-q.s. and thus $H \in L^{\perp}$ by use of \cite[Lemma 2.6]{Nutz}. As $H\in \tilde{K}_{x}$ this implies $H_0+H \in K_{x} \subseteq L$, which means $|H|^2=-H_0H$. This contradicts $|H|=\delta$ by Cauchy-Schwarz inequality. 
\end{proof}

\begin{proof}[of \cref{Thm Nutz1}]
Fatou's lemma 
implies that for all $P \in \Pc^u$ the function $(H,c) \mapsto \E_{P}[U(1,x-c+H\Delta S_1-\pi_1(\xi))]+U(0,c)$ is upper semicontinuous on $\mathcal{A}_{0,x}$. It follows that $(H,c) \mapsto \inf_{P \in \Pc^u} \E_{P}[U(1,x-c+H\Delta S_1-\pi_1(\xi))]+U(0,c)$ is upper semicontinuous and thus attains its supremum on the compact set $K_{x}$. Finally again using \cite[Lemma 2.6]{Nutz} and recalling that $\Pc^u \subseteq \Pc$
\begin{align*}
&\sup_{(H,c) \in {\mathcal{A}}_{0,x}} \left(\inf_{P \in \Pc^u} \E_{P}[U(1,x-c+H\Delta S_1-\pi_1(\xi))]+U(0,c)\right)\\
=\ &\sup_{(H,c) \in K_{x}} \left(\inf_{P \in \Pc^u} \E_{P}[ U(1,x-c+H\Delta S_1-\pi_1(\xi))]+U(0,c)\right).
\end{align*}
\end{proof}

\begin{corollary} \label{cor. minim}
Under the conditions of \cref{Thm Nutz1} we have
\begin{align*}
&\sup_{(H,c) \in \mathcal{A}_{0,x}} \left(\inf_{P \in \Pc^u} \E_{P}[U(1,x-c+H\Delta S_1-\pi_1(\xi))]+U(0,c)\right)\\
=\ &\inf_{P \in \Pc^u}\left( \sup_{(H,c) \in \mathcal{A}_{0,x}} \left(\E_{P}[U(1,x-c+H\Delta S_1-\pi_1(\xi))]+U(0,c)\right)\right).
\end{align*}
\end{corollary}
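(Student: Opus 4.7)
The plan is to establish the equality as an instance of a classical minimax theorem. The inequality
\begin{align*}
&\sup_{(H,c) \in \mathcal{A}_{0,x}} \inf_{P \in \Pc^u} \bigl(\E_{P}[U(1,x-c+H\Delta S_1-\pi_1(\xi))]+U(0,c)\bigr) \\
&\le \inf_{P \in \Pc^u} \sup_{(H,c) \in \mathcal{A}_{0,x}} \bigl(\E_{P}[U(1,x-c+H\Delta S_1-\pi_1(\xi))]+U(0,c)\bigr)
\end{align*}
is weak duality and requires no argument. For the reverse inequality I would first reduce both sides to a supremum over the compact convex set $K_x = \mathcal{A}_{0,x} \cap (L \times \R_+)$ identified in the preceding lemma. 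On the left, this reduction has already been carried out in the proof of \cref{Thm Nutz1} via \cite[Lemma 2.6]{Nutz}. The same argument applies on the right pointwise in $P$: given $(H_0,c) \in \mathcal{A}_{0,x}$, decompose $H_0 = H_L + H^{\perp}$ with $H_L \in L$ and $H^{\perp} \in L^{\perp}$; since $H^{\perp} \Delta S_1 = 0$ $\Pc$-q.s.\ by \cite[Lemma 2.6]{Nutz} and $\Pc^u \subseteq \Pc$, the pair $(H_L,c)$ still lies in $\mathcal{A}_{0,x}$ (hence in $K_x$) and yields the same value of the objective under any $P \in \Pc^u$.

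Next I would apply Fan's minimax theorem to
\[
f(H,c,P) = \E_P\bigl[U(1, x-c+H\Delta S_1 -\pi_1(\xi))\bigr] + U(0,c)
\]
on $K_x \times \Pc^u$. The hypotheses to check are: (i) $K_x$ is compact and convex in $\R^{d+1}$, by the preceding lemma; (ii) $\Pc^u = \Pc^u_0$ is convex by \cref{Ass 1b} together with the convexity clause of \cref{Qanalytic}; (iii) for each fixed $P \in \Pc^u$, the map $(H,c) \mapsto f(H,c,P)$ is concave (utility is concave, the argument is affine in $(H,c)$) and upper semicontinuous on $K_x$ (by Fatou's lemma, since $U$ is bounded from above, exactly as already exploited in the proof of \cref{Thm Nutz1}); (iv) for each fixed $(H,c) \in K_x$, the map $P \mapsto f(H,c,P)$ is affine on $\Pc^u$, hence in particular convex. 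Fan's theorem then yields the required equality.

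The main obstacle is the choice of the minimax theorem. Sion's classical version would additionally require lower semicontinuity of $P \mapsto \E_P[U(1,\cdot)]$ in some topology making $\Pc^u$ a subset of a topological vector space --- a property that is not available here because $U$ is merely lower semianalytic and no compactness or tightness of $\Pc^u$ is assumed at this stage. Fan's version, which requires no topology on the second factor but instead exploits affineness of $f$ in $P$ together with convexity of $\Pc^u$, is precisely tailored to the present setting and delivers the conclusion with the tools already in hand.
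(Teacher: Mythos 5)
Your proof is correct and follows essentially the same route as the paper: both reduce to the compact convex set $K_x$, verify upper semicontinuity and concavity in $(H,c)$ via Fatou, exploit affineness in $P$ and convexity of $\Pc^u$, and invoke a minimax theorem that places no topological requirement on the second factor. The paper cites Corollary 2 of Terkelsen rather than Fan's 1953 theorem, but these are interchangeable here, and your remark on why Sion's theorem would not directly apply explains precisely why this class of minimax results is the right tool.
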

\begin{proof}
Note that $K_x$ is compact, convex and $\Pc^u$ is convex. Define
\begin{align*}
f: K_x \times \mathfrak{P}(\Omega) \to \R \hspace{0.5cm} (H,c,P) \mapsto  \E_{P}[U(1,x-c+H\Delta S_1-\pi_1(\xi))]+U(0,c)
\end{align*}
and note that $(H,c) \mapsto f(H,c,P)$ is upper semicontinuous and concave. Furthermore $P \mapsto f(H,c, P)$ is convex on $\Pc^u$. The claim follows from Corollary 2 in \cite{terk}.
\end{proof}

\begin{remark}
The boundedness from above of $U(1,\cdot, \cdot)$ can be replaced by a weaker condition: Indeed it is sufficient to assume there exists a constant $a>0$ such that $\omega \mapsto U(1,\omega, a/2)$ is bounded from below and 
\begin{align*}
\E_{P}[U^+(1,x+H\Delta S_1-\pi_1(\xi))] < \infty \ \text{ for all }H \in \mathcal{A}_{0,x} \ \text{and } P \in \Pc^u
\end{align*}
as well as
\begin{align*}
\E_{P}[U^+(1,a)] <\infty \quad \text{for all }P \in \Pc^u.
\end{align*}
The proof of \cref{Thm Nutz1} then follows along the lines of  \cite[Lemma 1]{RS06} and \cite[Lemma 2.8]{Nutz} after a translation by $H_0\in \text{ri}(K_{x})$.
\end{remark}

\subsection{Proof of \cref{thm. exis_new}: The multi-period case}\label{Sec Mulit}

For the rest of this section we assume NA$(\Pc)$ and that $\xi$ is Borel measurable. 
Furthermore we often abbreviate $\Xt$ by $\pi_t$. To simplify notation we assume $U(0,\cdot,0)=0$. We give the following definition:

\begin{definition} \label{Def U_t}
We define $U_T(\omega,x)=U(T,\omega, x)$ and for $0\le t \le T-1$
\begin{align*}
U_{t}(\omega, x) &:= \sup_{(H,c) \in \mathcal{A}_{t,x}(\omega)}\bigg( \inf_{P \in \Pc_{t}^u(\omega)} \E_{P}[U_{t+1}((\omega,\cdot), x+H\Delta S_{t+1}(\omega, \cdot)-c-\mathds{1}_{\{t=T-1\}}\xi(\omega, \cdot))]\\
&\ +U(t,\omega,c)\bigg),\quad  x\ge \pi_{t}(\omega) 
\end{align*}
and $U_t(\omega,x)=-\infty$ otherwise, where for $x \in \R$ we set
\begin{align*}
\mathcal{A}_{0,x}(\omega) &:= \{ (H,c) \in \R^d\times \{0\} \ | \ x+H\Delta S_{1}(\omega,\cdot) \ge \pi_{1}(\omega, \cdot) \ \Pc_0(\omega)\text{-q.s.} \} \\
\mathcal{A}_{t,x}(\omega) &:= \{ (H,c) \in \R^d\times\R_+ \ | \ x+H\Delta S_{t+1}(\omega,\cdot)-c \ge \pi_{t+1}(\omega, \cdot) \ \Pc_t(\omega)\text{-q.s.} \}, \quad t \ge 1.
\end{align*}
\end{definition}

We recall from \cref{lem::usa} that $\pi_t(\xi)$ is upper semianalytic. This means in particular that $$\{(\omega,x) \ | \ x < \pi_t(\xi)(\omega) \}=\bigcup_{q\in \QQ} \pi_t^{-1}((q,\infty))\times (-\infty,q)$$ is analytic. Next we show by backwards induction, that if \cref{Ass 1a} is satisfied, then $U_t$ has $\Pc^u$-q.s. the following properties:
\begin{condition} \label{Cond 2}
Let $0\le t \le T-1$. The function $U_t: \Omega^t \times \R \to [-\infty, \infty)$ is lower semianalytic and bounded from above. Furthermore the following properties hold:
\begin{enumerate}
\item $\omega \mapsto U_t(\omega, x(\omega))$ is bounded from below for $x(\omega): = \pi_{t}(\omega)+\epsilon$ and each $\epsilon >0$.
\item $x \mapsto U_t(\omega,x)$ is non-decreasing, concave and continuous on $[\pi_t(\omega),\infty)$ for each $\omega \in \Omega^t$.
\end{enumerate}
\end{condition}

\begin{lemma}\label{lem usa}
Let NA$(\Pc)$ and \cref{Qanalytic}, \cref{Ass 1b} and \cref{Ass 1a}  hold for $U (t, \cdot, \cdot)$, $0 \le t\le T$. Then there exist functions $\tilde{U}_t: \Omega^t \times (-\infty, \infty) \to [-\infty, \infty)$, which satisfy \cref{Cond 2}, such that $\tilde{U}_t= U_t$ $\Pc^u$-q.s.
\end{lemma}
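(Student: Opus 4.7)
I would proceed by backwards induction on $t$, going from $t=T-1$ down to $t=0$. The base case $t=T-1$ follows from Assumption 1a applied to $U(T,\cdot,\cdot)$ together with the one-step existence result in Proposition 3.5, which guarantees that the supremum in the definition of $U_{T-1}$ is attained and hence well-defined. For the inductive step, assume $\tilde U_{t+1}$ exists, satisfies Condition 2, and equals $U_{t+1}$ $\Pc^u$-q.s. Construct $\tilde U_t$ by plugging $\tilde U_{t+1}$ into Definition 4.5, possibly after redefinition on a $\Pc^u$-polar set.

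The four items of Condition 2 would be checked as follows. Boundedness from above is immediate from $U(t,\cdot,\cdot)$ and $\tilde U_{t+1}$ being bounded above. For the lower bound at $x(\omega)=\pi_t(\omega)+\epsilon$, take a minimal superhedging strategy $\hat H$ of $\pi_t$ (existence via Theorem 3.1) and consume $\epsilon/2$ at time $t$ (or $0$ if $t=0$, in which case we use the $\epsilon$ surplus at $t+1$ instead); this leaves wealth at $t+1$ bounded below by $\pi_{t+1}+\epsilon/2$, and the inductive hypothesis together with Assumption 1a.1 applied to $U(t,\omega,\epsilon/2)$ yields the desired bound. Concavity, monotonicity and continuity in $x$ on $[\pi_t(\omega),\infty)$ follow by standard convex-analysis arguments: $\mathcal{A}_{t,x}(\omega)$ is convex jointly in $(x,H,c)$ and monotone non-decreasing in $x$, concavity of $\tilde U_{t+1}$ and $U(t,\omega,\cdot)$ propagates through the recursion, and a finite concave function on an open interval is automatically continuous.

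The main obstacle is establishing lower semianalyticity of $\tilde U_t$. By \cite[Prop. 7.48]{bs} the integrand $(\omega,P,H,c,x)\mapsto\E_P[\tilde U_{t+1}(\omega,\cdot,x+H\Delta S_{t+1}-c-\mathds{1}_{\{t=T-1\}}\xi)]$ is lower semianalytic, and passing to $\inf_{P\in\Pc^u_t(\omega)}$ over an analytic-graph correspondence preserves lower semianalyticity (by duality with \cite[Prop. 7.47]{bs} applied to the negative). The difficulty is that the outer $\sup_{(H,c)\in\mathcal{A}_{t,x}(\omega)}$ does \emph{not} in general preserve lower semianalyticity, so the dynamic programming step cannot be closed off directly.

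The plan to overcome this is to use the existence of optimisers from Proposition 3.5 together with the compactness of the effective domain $K_x(\omega)$ established in its proof, and then apply the measurable selection theorem \cite[Prop. 7.50]{bs} to produce, for every rational $\epsilon>0$, a universally measurable $\epsilon$-optimal selector $(\hat H^\epsilon,\hat c^\epsilon)$ of $(\omega,x)$. The minimax identity from Corollary 3.6 can be used in parallel to rewrite the objective in a form compatible with \cite[Prop. 7.47, 7.50]{bs}. One then expresses $\tilde U_t$ as a countable supremum (over rational $\epsilon\downarrow 0$) of lower semianalytic quantities obtained from these selectors; since countable suprema preserve lower semianalyticity, the conclusion follows. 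After this construction, $\tilde U_t$ coincides with $U_t$ outside a $\Pc^u$-polar set, completing the induction. The heart of the proof is thus the careful combination of Proposition 3.5, Corollary 3.6 and the Jankov--von Neumann / Bertsekas--Shreve selection machinery that restores the measurability that the outer supremum threatens to destroy.
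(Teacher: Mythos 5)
Your overall architecture (backwards induction, checking the four items of \cref{Cond 2}, and isolating lower semianalyticity of the outer supremum as the crux) matches the paper, and your treatment of boundedness, monotonicity and concavity is essentially the paper's. However, your repair of the measurability problem has a genuine gap. You propose to produce, via \cite[Prop.~7.50]{bs}, universally measurable $\epsilon$-optimal selectors $(\hat H^\epsilon,\hat c^\epsilon)$ of $(\omega,x)$ and to write $\tilde U_t$ as a countable supremum over rational $\epsilon$ of the objective evaluated along these selectors. This fails twice. First, \cite[Prop.~7.50]{bs} delivers $\epsilon$-optimal selectors for the \emph{supremum of an upper semianalytic} function (equivalently the infimum of a lower semianalytic one) over an analytic-graph correspondence; here the objective $\phi$ is \emph{lower} semianalytic and the problematic operation is a supremum, which is exactly the direction the theorem does not cover. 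The minimax identity of \cref{cor. minim} does not help: after swapping, the inner $\sup_{(H,c)}$ of a lower semianalytic function reappears unchanged. Second, even if such universally measurable selectors existed, the composition $(\omega,x)\mapsto\phi(\omega,x,\hat H^\epsilon(\omega,x),\hat c^\epsilon(\omega,x))$ would only be universally measurable, not lower semianalytic (semianalyticity survives composition with Borel maps, not with universally measurable ones). That is fatal for the induction: the next step must integrate $\tilde U_t$ against kernels and then take $\inf_{P\in\Pc^u_{t-1}(\omega)}$, and those operations stay within a tractable class only for semianalytic integrands (\cite[Prop.~7.47, 7.48]{bs}).

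The paper's device is different and you should adopt it: define
$\tilde U_t(\omega,x)=\lim_{n\to\infty}\sup_{(H,c)\in\QQ^d\times\QQ_+}\tilde\phi(\omega,x+\tfrac1n,H,c)$,
where $\tilde\phi$ equals $\phi$ on $\{(H,c)\in\mathcal A_{t,x}(\omega),\,x\ge\pi_t(\xi)(\omega)\}$ and $-\infty$ otherwise (one checks directly that $\tilde\phi$ is lower semianalytic by decomposing its strict sublevel sets into analytic pieces). A \emph{countable} supremum and a pointwise limit preserve lower semianalyticity, so measurability is free. The price is that one must then prove, for $\Pc^u$-q.e.\ $\omega\in\Omega^t_{\mathrm{NA}}$ and $x>\pi_t(\xi)(\omega)$, that the rational supremum coincides with the supremum over all of $\mathcal A_{t,x}(\omega)$; this uses the superhedging theorem to exhibit strategies with strict slack, showing the affine hull of $\mathcal A_{t,x}(\omega)$ is all of $\R^{d+1}$, combined with concavity and upper semicontinuity of $(H,c)\mapsto\phi(\omega,x,H,c)$. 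Finally, the boundary point $x=\pi_t(\xi)(\omega)$ (which your "concave functions are continuous on open intervals" remark does not reach) is handled by right-continuity of both $U_t$ and $\tilde U_t$ there, obtained from compactness of $\mathcal A_{t,\pi_t(\omega)+1}(\omega)$ intersected with the relevant span (as in the proof of \cref{Thm Nutz1}) together with Fatou's lemma; this is precisely what the $x+1/n$ limit in the definition of $\tilde U_t$ is designed to accommodate.
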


\begin{proof}
We prove the claim by induction. Recall that $U_T$ satisfies \cref{Ass 1a}. We now show the induction step from $t+1$ to $t$ and therefore first fix $\omega \in \Omega^t$. For simplicity of presentation we assume $t \le T-2$.\\
We first state some results regarding lower semianalyticity, which lead to the definition of $\tilde{U}_t$: Using \cite[Lemma 7.30, p.177, Prop. 7.47, p.179, Prop. 7.48, p.180]{bs}, \cref{Ass 1a} and the analytic graph of $\Pc^u_t$ we see that
$\phi: \Omega^t \times (-\infty, \infty) \times \R^d \times \R\to \overline{\R}$
\begin{align*}
\phi(\omega,x,H,c)= \inf_{P \in \Pc^u_t(\omega)} \E_{P}[U_{t+1}((\omega, \cdot), x+H\Delta S_{t+1}(\omega, \cdot)-c)]+U(t,\omega,c)
\end{align*}
is lower semianalytic as $\Delta S_{t+1}(\omega, \cdot)$ is a Borel measurable functions (and also $\xi(\omega, \cdot)$ for $t=T-1$). Now we define the function $\tilde{\phi}: \Omega^t\times \R \times \R^d \times \R\to \overline{\R}$ 
\begin{align*}
\widetilde{\phi}(\omega,x,H,c)=
\left\{\begin{array}{ll}
					-\infty &\text{if }(H,c) \notin \mathcal{A}_{t,x} \text{ or }x<\pi_t(\xi)(\omega)\\
					\phi(\omega,x,H,c) &\text{otherwise.} 
			\end{array}
				\right. 
\end{align*}
We show that $\tilde{\phi}$ is lower semianalytic. Fix $a \in \R$. Then 
\begin{align*}
\left\{\widetilde{\phi}<a\right\}&= \{(\omega,x,H,c) \ | \ \phi(\omega,x,H,c) < a, \ (H,c) \in \mathcal{A}_{t,x}(\omega), x\ge \pi_t(\xi)(\omega)\} \\
&\cup \left\{(\omega,x,H,c) \ | \ (H,c) \notin \mathcal{A}_{t,x}(\omega) \text{ or }x< \pi_t(\xi)(\omega)\right\} \\
&= \{\phi < a \} \cup \left\{(\omega,x,H,c) \ | \ (H,c) \notin \mathcal{A}_{t,x}(\omega)\right\} \\
&\cup \left\{(\omega,x,H,c) \ | \ x < \pi_t(\xi)(\omega)\right\}. 
\end{align*}
By the same arguments as for the lower seminanalyticity of $\phi$ we see that
\begin{align*}
&\left\{(\omega,x,H,c) \ | \ (H,c) \notin \mathcal{A}_{t,x}(\omega)\right\}
\\
=\ &\Bigg\{(\omega,x,H,c)\ \bigg| \ \sup_{P \in \Pc^u_t(\omega)}  \E_{P}[x+ H\Delta S_{t+1}(\omega, \cdot)-c-\pi_{t+1}(\omega, \cdot)]^- > 0\Bigg\}
\end{align*}
is analytic and the sets
\begin{align*}
\{ \phi < a \} \qquad \text{and} \qquad
\{ (\omega,x,H,c) \in \Omega^t \times\R \times \R^d \times \R\ | \ x < \pi_t(\xi)(\omega) \} 
\end{align*}
are analytic, so $\widetilde{\phi}$ is lower semianalytic. 
Similarly to \cite[Proposition 3.27]{BC16} we define
\begin{align*}
\tilde{U}_t(\omega,x)=\lim_{n \to \infty}\sup_{(H,c) \in \QQ^d\times \QQ_+} \widetilde{\phi}\left(\omega,x+\frac{1}{n},H,c\right).
\end{align*}
As the limits and countable supremum of lower semianalytic functions is lower semianalytic, we conclude that $\tilde{U}_t$ is lower semianalytic. \\
From the definition it is clear that $\tilde{U}_t(\omega, \cdot)$ is non-decreasing and bounded from above. Next we argue that $\tilde{U}_t(\omega,\cdot)$ is concave. As the infimum of concave functions is concave, it is enough to argue that $x \mapsto \sup_{(H,c) \in \QQ^d\times \QQ_+} \widetilde{\phi}\left(\omega,x,H,c\right)$ is concave. This follows very similarly to  \cite{RS06}[proof of Prop. 2, p.5]: Indeed, it is enough to show midpoint-concavity of $\sup_{(H,c) \in \QQ^d\times \QQ_+} \widetilde{\phi}\left(\omega,\cdot,H,c\right)$, which is immediate by use of triangle inequality. Concavity implies that $ \tilde{U}_t(\omega, \cdot)$ is continuous on $(\pi_t(\omega), \infty)$. By the definition of $\tilde{U}_t$ concavity and continuity extend to $[\pi_t(\omega), \infty)$. \\
By definition we clearly have
\begin{align*}
\sup_{(H,c) \in \mathbb{Q}^d\times \QQ_+} \widetilde{\phi}\left(\omega,x,H,c\right)\le \sup_{(H,c) \in \Ac_{t,x}(\omega)} \phi(\omega,x,H,c).
\end{align*} 
We now show equality of $U_t(\omega,x)$ and $\tilde{U}_t(\omega,x)$ for $\Pc^u$-q.e. $\omega \in \Omega^t$. 
Let us therefore fix $x > \pi_t(\omega)$ and $\omega \in \Omega_{\text{NA}}^t$. Using \cite[Theorem 3.4]{BN} and $\Pc^u_t(\omega) \subseteq \Pc_t(\omega)$ there exists $\tilde{H} \in \R^d$ such that
\begin{align*}
\pi_t(\omega)+ \tilde{H}\Delta S_{t+1}(\omega,\omega') \ge \pi_{t+1}(\omega, \omega')\quad \text{ for }\Pc_t^u(\omega)\text{-q.e. }\omega' \in \Omega.
\end{align*}
Take $c<x-\pi_t(\omega)$ and $H \in [0, \infty)^d$ such that 
\begin{align*}
H^1+ \dots +H^d \le \frac{x-\pi_t(\omega)-c}{\max_{1 \le i \le d}S^i_t(\omega)}.
\end{align*}
It follows for $\Pc^u_t(\omega)$-q.e. $\omega'\in \Omega$ that
\begin{align*}
x+(H+\tilde{H})\Delta S_{t+1}(\omega, \omega')-c &=x -\pi_t(\omega)+ H \Delta S_{t+1}(\omega, \omega')+ \pi_t(\omega)
+\tilde{H}\Delta S_{t+1}(\omega, \omega')-c\\
&\ge x-\pi_t(\omega) -H S_t(\omega)
+\pi_{t+1}(\omega, \omega')-c \\&\ge \pi_{t+1}(\omega, \omega').
\end{align*}
Thus the affine hull of $\mathcal{A}_{t,x}(\omega)$ is $\R^{d+1}$ and consequently $\text{Ri}(\mathcal{A}_{t,x}(\omega))$ is an open set in $\R^{d+1}$. This implies
\begin{align*}
\sup_{(H,c) \in \mathbb{Q}^d\times \QQ_+} \widetilde{\phi}\left(\omega,x,H,c\right)= \sup_{(H,c) \in \Ac_{t,x}(\omega)} \phi(\omega,x,H,c).
\end{align*} 
for $x > \pi_t(\omega)$. Equality in $x=\pi_t(\omega)$ follows by right-continuity of $U_t$ and $\tilde{U}_t$. Indeed, right-continuity of $U_t(x,\omega)$ in $x=\pi_{t}(\omega)$ follows by compactness of $\mathcal{A}_{t,\pi_t(\omega)+1}(\omega)\cap \text{span}(\text{supp}(\{P \circ (\Delta S_{t+1}(\omega, \cdot)^{-1} \ | \ P \in \mathcal{P}_t(\omega)\}))$ and Fatou's Lemma.\\
Lastly we show boundedness of $\tilde{U}_t$ from below: Let $x(\omega)=\pi_t(\omega)+\epsilon$ for some $\epsilon>0$. By the above arguments there exists $\hat{H}\in \QQ^d$ such that $\pi_t(\omega)+\epsilon/3+\hat{H}\Delta S_{t+1}(\omega, \omega')\ge \pi_{t+1}(\omega, \omega')$ $\mathcal{P}^u_t(\omega)$-a.s. Thus
\begin{align*}
U_{t}(\omega,x(\omega)) &\ge \inf_{P \in \mathcal{P}_t^u(\omega)} \E_{P}[U_{t+1}((\omega, \cdot), x(\omega)+\hat{H} \Delta S_{t+1}(\omega, \cdot)-\epsilon/3)] +U(t,\omega,\epsilon/3)\\
&\ge \inf_{P \in \mathcal{P}_t^u(\omega)} \E_{P}[U_{t+1}((\omega, \cdot), \pi_{t+1}(\omega, \cdot)+\epsilon/3)]+U(t, \omega,\epsilon/3)
\end{align*}
is bounded from below by the induction hypothesis and \cref{Ass 1a}. This shows the claim.\\
\end{proof}

%
%

\begin{lemma}\label{lem meas_selec}
Let NA$(\Pc)$ and \cref{Qanalytic}, \cref{Ass 1b} and \cref{Ass 1a}  hold for $U (t, \cdot, \cdot)$, $0 \le t\le T$. Let $t \in \{0, \dots, T-1\}$ and $(H,C) \in \mathcal{A}_{\pi_0}$. There exist universally measurable mappings $\hat{H}_{t+1},\hat{c}_t$ such that $\hat{c}_t$ is non-negative,
\begin{align*}
V^{\pi_0,H,C}_{t-1}(\omega)+H_t(\omega)\Delta S_t(\omega)+ \hat{H}_{t+1}(\omega)\Delta S_{t+1}(\omega,\cdot)-\hat{c}_t(\omega) \ge \pi_{t+1}(\omega,\cdot)\ \Pc_t(\omega)\text{-q.s.}
\end{align*}
and
\begin{align*}
\inf_{P \in \Pc_t^u(\omega)} &\E_P\left[U_{t+1}\left((\omega, \cdot), V^{\pi_0,H,C}_{t-1}(\omega)+H_t(\omega)\Delta S_t(\omega) + \hat{H}_{t+1}(\omega) \Delta S_{t+1}(\omega, \cdot)-\hat{c}_t(\omega)-\mathds{1}_{\{t=T-1\}}\xi(\omega, \cdot)\right)\right]\\&+U(t,\omega,\hat{c}_t(\omega))
= U_t\left(\omega, V^{\pi_0,H,C}_{t-1}(\omega)+H_t(\omega)\Delta S_t(\omega)\right)
\end{align*}
for $\Pc^u$-a.e. $\omega \in \Omega^t$.
\end{lemma}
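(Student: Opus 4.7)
The plan is to cast the problem as a parametric measurable selection, in the spirit of \cite[Lemma 4.10]{BN}, but with a utility objective in place of the superhedging constraint. Set $x(\omega) := V^{\pi_0,H,C}_{t-1}(\omega) + H_t(\omega)\Delta S_t(\omega)$, which is $\mathcal{F}^{\mathcal{U}}_t$-measurable. Because $(H,C) \in \mathcal{A}_{\pi_0}$ superreplicates $\xi$, the iteration used in the proof of \cref{eu} gives $x(\omega) = V_t^{\pi_0,H,C}(\omega) + \Delta C_t(\omega) \ge \pi_t(\omega)$ $\Pc$-q.s., hence $\Pc^u$-q.s.\ by \cref{Ass 1b}, so $U_t(\omega, x(\omega))$ is well-defined on a $\Pc^u$-full measure set. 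From \cref{lem usa} the function
\begin{align*}
\widetilde\phi(\omega, x, H, c) := \inf_{P \in \Pc^u_t(\omega)} \E_P\bigl[\widetilde U_{t+1}((\omega,\cdot), x + H\Delta S_{t+1}(\omega,\cdot) - c - \mathds{1}_{\{t=T-1\}}\xi(\omega,\cdot))\bigr] + U(t,\omega,c),
\end{align*}
extended by $-\infty$ outside $\mathcal{A}_{t,x}(\omega)$ and for $x<\pi_t(\omega)$, is lower semianalytic on $\Omega^t \times \R \times \R^{d+1}$, and the target is to select $(\hat H_{t+1}(\omega), \hat c_t(\omega))$ realizing the pointwise sup in $(H,c)$ of $\widetilde\phi(\omega, x(\omega), \cdot, \cdot)$.

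The first step is pointwise attainment. Setting $L_t(\omega) := \mathrm{span}(\mathrm{supp}(\{P\circ (\Delta S_{t+1}(\omega,\cdot))^{-1}: P \in \Pc_t(\omega)\}))$, the one-period argument of \cref{Thm Nutz1} applied $\omega$-wise on the $\Pc$-full measure set $\Omega^t_{\mathrm{NA}}$ shows that $K(\omega) := \mathcal{A}_{t,x(\omega)}(\omega) \cap (L_t(\omega) \times \R_+)$ is non-empty, convex and compact, and that the value of $\widetilde\phi(\omega, x(\omega), \cdot, \cdot)$ is unchanged under adding an $L_t(\omega)^\perp$-component to $H$ (such a component contributes zero to $\Delta S_{t+1}(\omega,\cdot)$ $\Pc_t(\omega)$-q.s.). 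Concavity of $\widetilde U_{t+1}(\omega',\cdot)$ on $[\pi_{t+1}(\omega'),\infty)$, boundedness from above of $\widetilde U_{t+1}$ and Fatou's lemma yield upper semicontinuity of $(H,c) \mapsto \widetilde\phi(\omega, x(\omega), H, c)$ on $K(\omega)$, whence the supremum is attained.

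For the measurable selection step, the graph of $(\omega, x) \twoheadrightarrow \mathcal{A}_{t,x}(\omega)$ is analytic (since $\Pc_t$ has analytic graph, \cite[Lemma 4.8]{BN}) and $\widetilde\phi$ is lower semianalytic, so \cite[Prop.~7.50]{bs} applied to $-\widetilde\phi$ produces, for every $\epsilon>0$, universally measurable $\epsilon$-optimizers $(\hat H^\epsilon(\omega,x), \hat c^\epsilon(\omega, x))$. Composing with $\omega \mapsto (\omega, x(\omega))$, restricting to the compact $K(\omega)$, and then selecting a measurable element from the closed-valued, universally measurable maximizer correspondence
\begin{align*}
M(\omega) := \bigl\{(H,c) \in K(\omega) \ \bigl| \ \widetilde\phi(\omega, x(\omega), H, c) = U_t(\omega, x(\omega))\bigr\},
\end{align*}
which is non-empty by the above attainment, via Kuratowski--Ryll-Nardzewski, delivers universally measurable $(\hat H_{t+1}(\omega), \hat c_t(\omega))$ that realize $U_t(\omega, x(\omega))$ on a $\Pc^u$-full set. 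The superhedging constraint follows automatically from $(\hat H_{t+1}(\omega), \hat c_t(\omega)) \in \mathcal{A}_{t, x(\omega)}(\omega)$.

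The main obstacle will be the usual analyticity-versus-universal-measurability tension: the appearance of $x(\omega)$, a merely universally measurable object, inside $\widetilde\phi$ must be handled so that the semianalytic-set framework of \cite{bs} remains applicable. The cleanest route, parallel to \cite[proof of Lemma 4.10]{BN}, is to first perform the selection jointly in $(\omega, x)$ on the analytic set $\{(\omega,x) : x \ge \pi_t(\omega)\}$ and only then compose with $x(\omega)$; verifying that the universal measurability of the graph of $M$ is sufficient for Kuratowski--Ryll-Nardzewski in this setting, or alternatively extracting a cluster point of the $\epsilon$-optimizers as $\epsilon\downarrow 0$ in a measurable way, is the remaining technical checkpoint.
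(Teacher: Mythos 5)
Your overall architecture matches the paper's: pointwise attainment of the supremum at $x(\omega)=V^{\pi_0,H,C}_{t-1}(\omega)+H_t(\omega)\Delta S_t(\omega)$ via the one-period result (\cref{Thm Nutz1}, on $\Omega^t_{\mathrm{NA}}\cap\Omega^t_{\xi}$, using \cref{olala} and \cref{rem. laurence}), followed by a measurable selection from the argmax correspondence. The attainment half is fine. The gap is precisely the step you yourself flag as a ``remaining technical checkpoint'', and it is the heart of this lemma, so it cannot be left open. Two things go wrong in your selection argument. First, after composing $\widetilde\phi$ with the merely universally measurable map $\omega\mapsto x(\omega)$, lower semianalyticity is lost, so the Jankov--von Neumann machinery of \cite[Prop.~7.50]{bs} only delivers $\epsilon$-optimizers jointly in $(\omega,x)$; it does not by itself produce an exact maximizer after substitution. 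Second, Kuratowski--Ryll-Nardzewski is not the right tool to close this: it requires weak measurability of the closed-valued correspondence $M$, and weak measurability does not follow from the fact that $\mathrm{graph}(M)$ lies in $\mathcal{F}^{\mathcal{U}}_t\otimes\mathcal{B}(\R^d)\otimes\mathcal{B}(\R)$. What does work with graph measurability is the von Neumann--Aumann selection theorem (the paper cites \cite[Cor.~1, p.~120]{bv}), which applies because the universal $\sigma$-algebra is closed under the Suslin operation.

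To make the graph-measurability hypothesis available in the first place, the paper does not argue through semianalyticity at all at this stage: it upgrades $\widetilde U_{t+1}$ and then $\phi(\omega,x,H,c)$ to joint $\mathcal{F}^{\mathcal{U}}_t\otimes\mathcal{B}(\R)\otimes\mathcal{B}(\R^d)\otimes\mathcal{B}(\R)$-measurability by a Carath\'eodory-type argument, \cite[Lemma A.35]{BC16}, exploiting exactly the structure you already have (universal measurability in $\omega$, concavity and upper semicontinuity in $(x,H,c)$ via Fatou and boundedness from above). Joint product-measurability is stable under plugging in the universally measurable $x(\omega)$, so the graph of the argmax correspondence $\Phi$ is product-measurable; intersecting with the correspondence $\omega\mapsto\mathcal{A}_{t,x(\omega)}(\omega)$, whose graph is product-measurable by a variant of \cite[proof of Lemma 4.10]{BN}, and applying von Neumann--Aumann on $\{\tilde\Phi\neq\emptyset\}$ yields the universally measurable $(\hat H_{t+1},\hat c_t)$. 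Your alternative suggestion of measurably extracting a cluster point of the $\epsilon$-optimizers could in principle be made to work, but it also requires a measurable-limit argument you have not supplied; as written, the proof is incomplete at its decisive step.
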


\begin{proof}
We show that $\tilde{U}_t$ is $\mathcal{F}^{\mathcal{U}}_t \otimes \mathcal{B}(\R)$-measurable: Indeed, we know that $\omega \mapsto \tilde{U}_t(\omega,x)$ is lower seminanalytic and in particular universally measurable. Also $x \mapsto \tilde{U}_t(\omega,x)$ is continuous on $[\pi_t(\omega), \infty)$, bounded from above and $\tilde{U}_t(\omega,x)=-\infty$ for $x< \pi_t(\omega)$. Thus is it concave and upper semicontinuous on $\R$ and the claim follows from \cite[Lemma A.35, p. 1889]{BC16}.
Next we show that the function
\begin{align*}
\phi(\omega,x,H,c)= \inf_{P \in \Pc^u_t(\omega)} \E_{P}[\tilde{U}_{t+1}((\omega, \cdot), x+H\Delta S_{t+1}(\omega, \cdot)-c)]+U(t,\omega,c)
\end{align*}
is $\mathcal{F}^{\mathcal{U}}_t \otimes \mathcal{B}(\R) \otimes \mathcal{B}(\R^d)\otimes \mathcal{B}(\R)$-measurable: As we have argued in \cref{lem usa} $\omega \mapsto \phi(\omega,x,H,c)$ is lower semianalytic and in particular universally measurable. On the other hand, $x \mapsto \tilde{U}_{t+1}(\omega, x)$ is upper semicontinuous and concave for any $\omega \in \Omega^t$. Since $\tilde{U}_{t+1}$ is bounded from above, an application of Fatou's lemma yields that $(x,H,c) \mapsto \phi(\omega,x,H,c)$ is upper semicontinuous and concave for each $\omega \in \Omega^t$. Again by \cite[Lemma A.35, page 1889]{BC16} it follows that $\phi$ is $\mathcal{F}^{\mathcal{U}}_t \otimes \mathcal{B}(\R) \otimes \mathcal{B}(\R^d)\otimes \mathcal{B}(\R)$-measurable.
Now we define the correspondence
\begin{align*}
\Phi(\omega) :&= \{ (H',c') \in \R^d\times \R_+ \ | \ \phi(\omega, V^{\pi_0,H,C}_{t-1}(\omega)+H_t(\omega) \Delta S_{t}(\omega)-\mathds{1}_{\{t=T-1\}}\xi(\omega, \cdot), H',c')\\
&\qquad = \tilde{U}_t(\omega,V^{\pi_0,H,C}_{t-1}(\omega)+H_{t}(\omega) \Delta S_t (\omega)) \} , \quad \omega \in \Omega^t.
\end{align*}
Then its graph is in $\Fut \otimes \mathcal{B}(\R^d)\otimes\mathcal{B}(\R)$. Next we define the function 
$$\Upsilon :\omega \mapsto \mathcal{A}_{t,V_{t-1}^{\pi_0,H,C}(\omega)+H_t(\omega) \Delta S_t(\omega)}(\omega).$$ 
By a slight variation of the arguments given in \cite{BN}[proof of Lemma 4.10, pp.846-848] the graph of $\Upsilon$ is $\Fc^{\mathcal{U}}_t \otimes \mathcal{B}(\R^d)\otimes \mathcal{B}(\R)$-measurable and thus $
\text{graph}(\Upsilon) \cap ((\Omega_{NA}^t \cap \Omega_{\xi}^t)\times \R^d \times \R)\in\mathcal{F}_{t}^{\mathcal{U}} \otimes \Bc(\R^d) \otimes \Bc(\R)$. Then also the graph of 
\begin{align*}
\tilde{\Phi}(\omega)=\begin{cases} \mathcal{A}_{t,V_{t-1}^{\pi_0,H,C}(\omega)+H_t(\omega) \Delta S_t(\omega)}(\omega) \cap \Phi(\omega) & \omega \in \Omega_{NA}^t \cap \Omega_{\xi}^t\\
\emptyset & \text{otherwise}
\end{cases}
\end{align*}
is in $\mathcal{F}_{t}^{\mathcal{U}} \otimes \Bc(\R^d) \otimes \Bc(\R)$ and $\tilde{\Phi}$ admits an $\mathcal{F}_t^{\mathcal{U}}$-measurable selector $(\hat{H}_{t+1},\hat{c}_t)$ on the universally measurable set $\{ \tilde{\Phi} \neq \emptyset \}\in \mathcal{F}_t^{\Uc}$ by the Neumann-Aumann theorem (\cite[Cor.1, p.120]{bv}). We extend $(\hat{H}_{t+1}, \hat{c}_t)$ by setting $\hat{H}_{t+1}= \hat{c}_t=0$ on $\{\tilde{\Phi} \neq \emptyset \}$. Moreover the one-period case given in \cref{Thm Nutz1} applied with $x=V_{t-1}^{\pi_0,H,C}(\omega)+H_t(\omega) \Delta S_t(\omega)$ , \cref{olala}, \cref{rem. laurence} as well as existence of superhedging strategies as stated in \cite[Theorem 3.4]{BN} show that $\tilde{\Phi}(\omega) \neq \emptyset$ for $\Pc^u$-q.e. $\omega \in \Omega^t$. This shows the claim as $U_t= \tilde{U}_t$ $\Pc^u$-q.s.
\end{proof}

\begin{proof}[of \cref{thm. exis_new}]
Let $(\hat{H}_1,0)$ be an optimal strategy for $$
\inf_{P \in \Pc^u_0} \E_P(U_{1}[ \pi_0+ H_{1} \Delta S_{1})]$$ as in \cref{lem meas_selec}. Proceeding recursively, we use \cref{lem meas_selec} to define the strategy $\omega \mapsto (\hat{H}_{t+1},  \hat{c}_t)(\omega)$ for 
\begin{align*}
\inf_{P \in \Pc^u_t(\omega)} &\E_{P}[U_{t+1}((\omega, \cdot), V^{\pi_0,\hat{H}, \hat{C}}_{t-1}(\omega)+\hat{H}_{t}(\omega)\Delta S_t(\omega)+ H_{t+1}(\omega) \Delta S_{t+1}(\omega, \cdot)-c_t(\omega) -\mathds{1}_{\{t=T-1\}}\xi(\omega, \cdot))]\\
&+U(t, c_t(\omega))
\end{align*}
where $1\le t \le T-1$ and define $\hat{C}_t=\sum_{s=1}^t \hat{c}_s$ as well as $\Delta\hat{C}_T=V_{T-1}^{\pi_0,\hat{H}, \hat{C}}+\hat{H}_T \Delta S_T-\xi$. By construction we then have $(\hat{H}, \hat{C}) \in \mathcal{A}_{\pi_0}$. To establish that $(\hat{H}, \hat{C})$ is optimal we first show that
\begin{align}\label{eq own2}
\inf_{P \in \Pc^u} \E_{P}\left[\sum_{s=1}^T U(s,\Delta\hat{C}_s)\right] \ge U_0(\pi_0).
\end{align}
Let $0\le t \le T-1.$ By definition of $(\hat{H}, \hat{C})$ we have
\begin{align*}
&\inf_{P' \in \Pc^u_t(\omega)} \E_{P'}[U_{t+1}((\omega, \cdot),V^{\pi_0,\hat{H}, \hat{C}}_{t}(\omega)+\hat{H}_{t+1}(\omega)\Delta S_{t+1}(\omega, \cdot)-\mathds{1}_{\{t=T-1\}}\xi(\omega, \cdot))]\\
+\ &U(t,\omega, \Delta \hat{C}_t(\omega)) =U_t(\omega, V^{\pi_0,\hat{H}, \hat{C}}_{t-1}(\omega)+\hat{H}_t(\omega) \Delta S_t(\omega))
\end{align*}
for all $\omega \in \Omega^t$ outside a $\Pc^u$-polar set. Let $P \in \mathfrak{P}$, then $P= P_0 \otimes \cdots \otimes P_{T-1}$ for some selectors $P_t$ of $\Pc^u_t$, $t=0, \dots, T-1$ and we conclude via Fubini's theorem that
\begin{align*}
&\E_{P}\left[U_{t+1}\left(V^{\pi_0,\hat{H}, \hat{C}}_{t}+\hat{H}_{t+1}\Delta S_{t+1}-\mathds{1}_{\{t=T-1\}}\xi\right)+ \sum_{s=1}^t U(s, \Delta \hat{C}_s)\right]\\
=\ &\E_{(P_0 \otimes \cdots P_{t-1})(d\omega)}\bigg( \E_{P_t(\omega)}\left[U_{t+1}\left((\omega, \cdot), V^{\pi_0,\hat{H}, \hat{C}}_{t}(\omega)+\hat{H}_{t+1}(\omega)\Delta S_{t+1}(\omega, \cdot)-\mathds{1}_{\{t=T-1\}}\xi(\omega, \cdot)\right)\right]\\
+\ &\sum_{s=1}^t U\left(s,\omega, \Delta \hat{C}_s(\omega)\right)\bigg)\\
 \ge \ &\E_{P_0 \otimes \cdots \otimes P_{t-1}}\bigg[U_{t}\left(V^{\pi_0,\hat{H}, \hat{C}}_{t-1}+\hat{H}_t \Delta S_t\right)+ \sum_{s=1}^{t-1} U\left(s,\Delta \hat{C}_s\right) \bigg]\\
=\ &\E_{P}\bigg[U_t\left(V^{\pi_0,\hat{H}, \hat{C}}_{t-1}+\hat{H}_t \Delta S_t\right)+ \sum_{s=1}^{t-1} U\left(s, \Delta \hat{C}_s\right)\bigg].
\end{align*}
A repeated application of this inequality shows \eqref{eq own2}. To conclude that $(\hat{H}, \hat{C})$ is optimal, it remains to prove that
\begin{align*}
U_0(\pi_0) \ge \sup_{(H,C) \in \mathcal{A}_{\pi_0}} \inf_{P \in \Pc^u} \E_{P}\left[\sum_{s=1}^T U(s,\Delta C_s)\right]  =:v(\pi_0).
\end{align*}
To this end we fix an arbitrary $(H,C) \in \mathcal{A}_{\pi_0}$ and first show that
\begin{align}\label{eq max}
&\inf_{P \in \Pc^u} \E_{P}\left[U_t\left(V^{\pi_0,H,C}_{t-1}+H_t \Delta S_t \right) +\sum_{s=1}^{t-1}U(s, \Delta C_s)\right]\\\nonumber
\ge\ &\inf_{P \in \Pc^u} \E_{P}\left[U_{t+1}\left(V^{\pi_0,H,C}_{t}+H_{t+1} \Delta S_{t+1}-\mathds{1}_{\{t=T-1\}}\xi\right)+ \sum_{s=1}^t U(s, \Delta C_s)\right], \quad  t=1, \dots, T-1.
\end{align}
Let $\epsilon>0$. As in the proof of \cref{lem usa}
\begin{align*}
(\omega, P) \mapsto \E_{P}\left[U_{t+1}((\omega, \cdot), V^{\pi_0,H,C}_{t}(\omega)+H_{t+1}\Delta S_{t+1}-\mathds{1}_{\{t=T-1\}}\xi(\omega, \cdot))\right]+\sum_{s=1}^{t} U(s,\omega,\Delta C_s(\omega)),
\end{align*}
is lower semianalytic. Using \cite[Prop. 7.50, p. 184]{bs} and \cite[Prop. 7.44, p.172]{bs} for $\omega \in \Omega^t$ outside a $\Pc^u$-polar set we have for some universally measurable $\epsilon$-optimal selector $P_t^{\epsilon}$ that
\begin{align*}
&\E_{P_t^{\epsilon}(\omega)} \bigg[U_{t+1}\left((\omega, \cdot), V^{\pi_0,H,C }_{t}(\omega)+H_{t+1}(\omega)\Delta S_{t+1}(\omega, \cdot)-\mathds{1}_{\{t=T-1\}}\xi(\omega,\cdot)\right)\bigg]+\sum_{s=1}^{t} U(s,\omega,\Delta C_s(\omega))-\epsilon\\
&\le \ (-\epsilon)^{-1} \vee \bigg(\inf_{P \in \Pc^u_t(\omega)} \E_{P}\left[U_{t+1}\left((\omega, \cdot), V^{\pi_0,H,C}_{t}(\omega)+H_{t+1}(\omega)\Delta S_{t+1}(\omega, \cdot)-\mathds{1}_{\{t=T-1\}}\xi(\omega,\cdot)\right)\right]\\
&+\sum_{s=1}^{t} U(s,\omega,\Delta C_s(\omega))\bigg)\\
&\le \ (-\epsilon)^{-1} \vee \bigg(\sup_{(H',c') \in \mathcal{A}_{t,V_{t-1}^{\pi_0,H,C}(\omega)+H_t(\omega)\Delta S_t(\omega)}(\omega)}\inf_{P \in \Pc^u_t(\omega)} \E_{P}\bigg[U_{t+1}\big((\omega, \cdot), V^{\pi_0,H,C}_{t-1}(\omega)\\
&+H_t(\omega)\Delta S_t(\omega)-c'
+H'\Delta S_{t+1}(\omega,\cdot)-\mathds{1}_{\{t=T-1\}}\xi(\omega, \cdot)\big)\bigg]+\sum_{s=1}^{t-1} U(s,\omega,\Delta C_s(\omega))+U(t,c')\bigg)\\
&= \ (-\epsilon)^{-1} \vee \bigg( U_{t}(\omega, V^{\pi_0,H,C}_{t-1}(\omega)+H_t(\omega)\Delta S_t(\omega))+\sum_{s=1}^{t-1}U(s,\omega, \Delta C_s(\omega)) \bigg).
\end{align*}
Given $P \in \Pc^u$ we thus have
\begin{align*}
&\E_{P}\bigg[(-\epsilon)^{-1} \vee \bigg(U_{t}(V^{\pi_0,H,C}_{t-1}+H_t\Delta S_t)+\sum_{s=1}^{t-1}U(s,\Delta C_s)\bigg)\bigg] \\
\ge\ &\E_{P \otimes P_t^{\epsilon}}\bigg[U_{t+1}(V^{\pi_0,H,C}_{t}+H_{t+1}\Delta S_{t+1}-\mathds{1}_{\{t=T-1\}}\xi)+\sum_{s=1}^{t} U(s,\Delta C_s)\bigg]-\epsilon \\
\ge &\inf_{P' \in \Pc^u} \E_{P'}\bigg[U_{t+1}(V^{\pi_0,H,C}_{t}+H_{t+1}\Delta S_{t+1}-\mathds{1}_{\{t=T-1\}}\xi)+\sum_{s=1}^{t} U(s,\Delta C_s)\bigg]-\epsilon.
\end{align*}
As $\epsilon>0$ and $P \in \Pc^u$ were arbitrary \eqref{eq max} follows. Noting that $U_0(\pi_0)=\inf_{P \in \Pc^u}\E_{P}[U_0(V^{\pi_0,H,C}_0)]$ a repeated application of \eqref{eq max} yields
\begin{align*}
U_0(\pi_0) \ge \inf_{P \in \Pc^u} \E_{P}[ U_1(\pi_0+H_1 \Delta S_1)] \ge \dots &\ge \inf_{P \in \Pc^u} \E_{P}\bigg[U_T(V^{\pi_0,H,C}_{T-1}+H_T \Delta S_T-\xi)+\sum_{s=1}^{T-1} U(s, \Delta C_s)\bigg]\\
&=\inf_{P \in \Pc^u} \E_{P}\bigg[\sum_{s=1}^{T} U(s, \Delta C_s)\bigg].
\end{align*}
As $(H,C) \in \mathcal{A}_{\pi_0}$ was arbitrary, it follows that $U_0(\pi_0) \ge v(\pi_0)$. This concludes the proof, since $\pi_0=\pi(\xi)$.
\end{proof}

\subsection{Proof of \cref{thm. unique_new}}
\begin{proof}
Existence of an optimal investment consumption strategy follows from \cref{thm. exis_new}. We now show uniqueness of optimisers. We fix $0\le t \le T-1$ and recall the definition of $\tilde{U}_t$ given in \cref{lem usa}. Note that one can show that the function $$(\omega,P) \mapsto \sup_{(H,c) \in \mathcal{A}_{t,x}(\omega)} \E_{P} [\tilde{U}_{t+1}((\omega, \cdot), x+H \Delta S_{t+1}(\omega,\cdot)-c-\mathds{1}_{\{t=T-1\}}\xi(\omega, \cdot))]
+U(t, c)$$ is lower semianalytic by reducing the above expression to a supremum over a countable set as in the proof of \cref{lem usa}. Recall that again by \cref{lem usa} there exists a set of full $\Pc^u$ measure on which $\tilde{U}_t=U_t$ for all $0\le t \le T$. For the rest of the proof we take $\omega$ in the intersection of this set with $\Omega_{\text{NA}}^t$. Using the same Jankov-von-Neumann argument as in the proof of \cref{thm. exis_new} and \cref{cor. minim} we conclude that for each $t=0, \dots, T-1$ there exists a sequence $P^n_{t}:\Omega^{t}\to \mathfrak{P}(\Omega)$ of universally measurable kernels such that $P_{t}^n(\omega) \in \Pc^u_{t}(\omega)$ and for $x\ge \pi_t(\xi)(\omega)$
\begin{align*}
\sup_{(H,c) \in \mathcal{A}_{t,x}(\omega)} \E_{P^n_{t}(\omega)} [\tilde{U}_{t+1}((\omega, \cdot), x+H \Delta S_{t+1}(\omega,\cdot)-c-\mathds{1}_{\{t=T-1\}}\xi(\omega, \cdot))]
+U(t, c) \downarrow \tilde{U}_{t}(\omega,x). 
\end{align*}
Since $\Pc^u_t(\omega)$ is compact, there exists a probability measure $\hat{P}_t(\omega) \in \Pc^u_t(\omega)$ and a subsequence $\{n_k(\omega)\}_{k \in \N}$ such that $\lim_{k \to \infty} P^{n_k(\omega)}_t(\omega)=\hat{P}_t(\omega)$.
We now show, that for $\Pc^u$-q.e. $\omega \in \Omega^{t}$ and $x\ge\pi_{t}(\omega)$ the functions
\begin{align*}
U_{t}(\omega, x) &= \sup_{(H,c) \in \mathcal{A}_{t,x}(\omega)} \inf_{P \in \Pc^u_{t}(\omega)} \E_{P}[U_{t+1}((\omega, \cdot), x+H\Delta S_{t+1}(\omega, \cdot)-c-\mathds{1}_{\{t=T-1\}}\xi(\omega,\cdot))]\\&+ U(t, \omega,c)
\end{align*}
have a unique optimizer $(H,c)\in \mathcal{A}_{t,x}(\omega)$. For notational convenience we assume that $0\le t \le T-2$. We note that by concavity of $\tilde{U}_{t+1}$ and $U(t, \cdot)$ the function 
\begin{align*}
(H,c) \mapsto \inf_{P \in \Pc^u_{t}(\omega)}\ E_{P}\left(\tilde{U}_{t+1}\left((\omega, \cdot), y+H\Delta S_{t+1}(\omega,\cdot)-c\right)\right)+U(t,c)
\end{align*}
is concave. Now assume that there are $(H^1,c^1)$, $(H^2,c^2) \in \mathcal{A}_{t, x}(\omega)$ such that
\begin{align*}
&\inf_{P \in \Pc^u_{t}(\omega)} E_{P}\left(\tilde{U}_{t+1}\left((\omega, \cdot),x+H^1\Delta S_{t}(\omega,\cdot)-c^1\right)\right)+U(t,c^1)\\
=\ &\inf_{P \in \Pc^u_{t}(\omega)} E_{P}\left(\tilde{U}_{t+1} \left((\omega, \cdot), x+H^2\Delta S_{t}(\omega,\cdot)-c^2\right)\right)+U(t, c^2)\\=\ &\tilde{U}_{t}(\omega,x).
\end{align*}
Note that for the strategy $(H^3,c^3):=((H^1+H^2)/2,(c^1+c^2)/2) \in \mathcal{A}_{t, x}(\omega)$ we have by concavity
\begin{align*}
&\inf_{P \in \Pc^u_{t}(\omega)} E_{P}\left(\tilde{U}_{t+1}\left((\omega, \cdot),x+H^3\Delta S_{t}(\omega,\cdot)-c^3\right)\right)+U(t,c^3)\\
\ge\ &\frac{1}{2}\bigg(\inf_{P \in \Pc^u_{t}(\omega)} E_{P}\left(\tilde{U}_{t+1}((\omega, \cdot),\left(x+H^1\Delta S_{t}(\omega,\cdot)-c^1\right)\right)+U(t,c^1)\\
+\ &\inf_{P \in \Pc^u_{t}(\omega)} E_{P}\left(\tilde{U}_{t+1}\left((\omega, \cdot),y+H^2\Delta S_{t+1}(\omega,\cdot)-c^2\right)\right)+U(t,c^2)\bigg)=\tilde{U}_{t}(\omega,x).
\end{align*}
We thus conclude
\begin{align*}
\inf_{P \in \Pc^u_{t}(\omega)} E_{P}\left(\tilde{U}_{t+1}\left((\omega, \cdot),x+H^3\Delta S_{t}(\omega,\cdot)-c^3\right)\right)+U(t,c^3)=\tilde{U}_{t}(\omega,x).
\end{align*}
Furthermore, for any $x\ge \pi_t(\omega)$ and any maximizer $(\tilde{H}, \tilde{c}) \in \mathcal{A}_{t,x}(\omega)$ of $\tilde{U}_{t}(\omega,x)$ we have
\begin{align} \label{eq. limit1}
&\sup_{(H,c) \in \mathcal{A}_{t, x}(\omega)}\left( \E_{P^{n_k(\omega)}_{t}(\omega)} [\tilde{U}_{t+1}((\omega, \cdot), x+H \Delta S_{t+1}(\omega,\cdot)-c)]+U(t, c)\right)\nonumber \\
\ge\ &\E_{P^{n_k(\omega)}_{t}(\omega)}[\tilde{U}_{t+1}((\omega, \cdot),x+\tilde{H} \Delta S_{t+1}(\omega,\cdot)-\tilde{c})]+U(t, \tilde{c}) \\ 
\ge\ &\inf_{P \in \Pc^u_{t}(\omega)} \E_{P}[\tilde{U}_{t+1}((\omega, \cdot),x+\tilde{H}\Delta S_{t+1}(\omega,\cdot)-\tilde{c})]+U(t, \tilde{c})=\tilde{U}_{t}(\omega,x)\nonumber,
\end{align}
so taking limits in \eqref{eq. limit1} we find
\begin{align*}
\lim_{k \to \infty} \E_{P^{n_k(\omega)}_{t}(\omega)} [\tilde{U}_{t+1}((\omega, \cdot),x+\tilde{H} \Delta S_{t+1}(\omega,\cdot)-\tilde{c})]+U(t, \tilde{c})=\tilde{U}_{t}(\omega,x).
\end{align*}
Furthermore we note that by assumption and \cref{lem usa}  $\tilde{U}_{t}(\omega,y)$ is bounded by some $C$ on $\{(\omega,x) \in \Omega^t \times \R \ | \ x \ge \pi_t(\xi)(\omega)\}$, non-decreasing as well as continuous in $y$ and $\xi$ is continuous. Note the superhedging prices $\omega \mapsto \pi_t(\xi)(\omega)$ are continuous on $\{(\omega, v)\in  \Omega^t \ | \ v\in f_{t-1}(\omega)\}$ by assumption.\\
For $n\in \N_+$  we define the shifted utility function $$U^{1/n}(T,x):=U(T,x+1/n).$$
Furthermore we inductively define the corresponding one-step versions for the multiperiod case $U^{1/n}_T(\omega,x):= U^{1/n}(T,x)$ and $$U^{1/n}_t(\omega,x):= \sup_{(H,c) \in \mathcal{A}_{t,x}(\omega)} \inf_{P \in \Pc^u_{t}(\omega)} \E_{P}[U^{1/n}_{t+1}((\omega, \cdot), x+1/n+H\Delta S_{t+1}(\omega, \cdot)-c)]+ U(t,c)$$ for $1\le t\le T-1$. Note that in particular $U^{1/n}(t,x)$ fulfils \cref{Ass 3} for all $n\in \N$ and $1\le t\le T$. Denote their lower semianalytic versions $\tilde{U}^{1/n}_t(\omega,x)$. Again by \cref{lem usa} there exists a set of full $\Pc^u$-measure, such that $\tilde{U}^{1/n}_t(\omega,x)=U^{1/n}_t(\omega,x)$ for all $n\in \N$ and $1\le t\le T$ and we fix $\omega$ in this set from now on. We now show by backwards induction that for all $n\in \N$ the function $(\omega,x) \mapsto \tilde{U}^{1/n}_{t}(\omega,x+1/n)$ is continuous in every point of the set $\{(\omega,x)\in \Omega^t\times \R \ | \ x\ge \pi_{t}(\xi)(\omega) \}$: Let us assume the hypothesis is true for $t+1$ and fix $n\in \N$, $x \ge \pi_{t}(\xi)(\omega)$. For any $(\tilde{\omega},\tilde{x}) \in \Omega^t\times \R$ we have
\begin{align*}
\left|\tilde{U}^{1/n}_{t}(\omega,x+1/n)-\tilde{U}^{1/n}_{t}(\tilde{\omega},\tilde{x}+1/n)\right|&\le \left|\tilde{U}_{t}^{1/n}(\omega,x+1/n)-\tilde{U}_{t}^{1/n}(\omega,\tilde{x}+1/n)\right|\\
&+\left|\tilde{U}_{t}^{1/n}(\omega,\tilde{x}+1/n)-\tilde{U}_{t}^{1/n}(\tilde{\omega},\tilde{x}+1/n)\right|.
\end{align*}
As $x \mapsto \tilde{U}^{1/n}_{t}(\omega,x+1/n)$ is continuous on $[\pi_t(\xi)(\omega)-1/n,\infty)$, there exists $\delta>0$ such that the first summand can be bounded by $\epsilon/2$ if $|x- \tilde{x}|\le \delta$. Thus it is sufficient to show that there exists $\tilde{\delta} >0$ such that for all $|\tilde{\omega}-\omega| \le \tilde{\delta}$ we have 
\begin{align*}
\left|\tilde{U}_{t}^{1/n}(\omega,\tilde{x}+1/n)-\tilde{U}_{t}^{1/n}(\tilde{\omega},\tilde{x}+1/n)\right| \le \epsilon/2.
\end{align*}
Indeed, note first that by \cref{rem:bounded} and the same contradiction argument as in the proof of \cref{prop:pathwiseex} choosing $\tilde{\delta}>0$ small enough we can assume that for any superhedging strategy $(H,c)\in \mathcal{A}_{t,\pi_{t}(\xi)(\tilde{\omega})}(\tilde{\omega})$ we have $|(H,c)| \le \tilde{C}$ for some $\tilde{C}>0$ independent of $\tilde{\omega}$. Furthermore we can choose $\tilde{\delta}>0$, such that $|\pi_t(\xi)(\omega)-\pi_t(\xi)(\tilde{\omega})|\le 1/n$.\\
Next we make the following observation: As $\Pc^u_t(\omega)$ is weakly compact by assumption, there exists a compact set $[0,K]^d \subseteq \Omega$, such that $P(([0,K]^d)^c) \le \epsilon/(48C)$ for all $P \in \Pc^u_t(\omega)$. By the induction hypothesis $(v,y)\mapsto \tilde{U}^{1/n}_{t+1}(v,y+1/n)$ is continuous in every point of the set  $\{(v,y)\in \Omega^{t+1}\times \R \ | \ y\ge \pi_{t+1}(\xi)(v)\}$ and thus uniformly continuous on a compact subset. There exists $1/n>\delta_0>0$ such that for $v, \tilde{v}\in B_{1}(\omega)\times \{u\in \Omega \ | \ \inf_{\tilde{u}\in [0,K]^d}|u-\tilde{u}|\le \delta_0\}$, $y\in [\pi_{t+1}(\xi)(v), 2CK]$ and $|(v,y) -(\tilde{v},\tilde{y})|\le \delta_0$ we have
\begin{align}\label{eq. estimate2}
 \left|\tilde{U}^{1/n}_{t+1}(v,y+1/n)-\tilde{U}^{1/n}_{t+1}(\tilde{v},\tilde{y}+1/n)\right|\le \epsilon/24.
\end{align}
By \cref{Ass 3}.(1) and by adapting $\tilde{\delta}$ accordingly, for all $\tilde{\omega}\in \Omega^{t}$ such that $|\omega-\tilde{\omega}|<\tilde{\delta}$ and for all $P \in \Pc^u_{t}(\omega)$, there exists $\tilde{P} \in \Pc^u_{t}(\tilde{\omega})$ such that $d_L(P, \tilde{P}) \le \tilde{\epsilon}:=\delta_0/(2\tilde{C}) \wedge \epsilon/(48C)$.
It follows by Strassen's theorem that there exists a measure $\pi \in \mathfrak{P}(\R^d \times \R^d)$ and two random variables $X\sim P\circ ( S_{t+1})^{-1}(\tilde{\omega},\cdot)$ and $\tilde{X}\sim \tilde{P}\circ ( S_{t+1})^{-1}(\tilde{\omega},\cdot)$ such that $\pi(|X-\tilde{X}|\ge \tilde{\epsilon}) \le \tilde{\epsilon}$. Thus we conclude that for $y,\tilde{y}:\Omega\to \R$ with $|y(x) -\tilde{y}(\tilde{x})|\le \delta_0$ whenever $\pi_{t+1}(\tilde{\omega})\le \tilde{y}(\tilde{x})\le 2CK$  and $|x-\tilde{x}|\le \tilde{\epsilon}$ 
\begin{align}\label{eq. estimate}
&\left|\E_{P}\left[\tilde{U}^{1/n}_{t+1}((\tilde{\omega}, \cdot),1/n+y(\cdot))\right]- \E_{\tilde{P}}\left[\tilde{U}^{1/n}_{t+1}((\tilde{\omega}, \cdot),1/n+\tilde{y}(\cdot))\right]\right| \\
\ =\ &\left| \E_{\pi}\left[\tilde{U}^{1/n}_{t+1}((\tilde{\omega}, X),1/n+y(X))-\tilde{U}_{t+1}^{1/n}((\tilde{\omega}, \tilde{X}),1/n+\tilde{y}(\tilde{X}))\right]\right|\nonumber\\
\ \le&\ \ \E_{\pi}\bigg[\bigg|\tilde{U}^{1/n}_{t+1}((\tilde{\omega}, X),1/n+y(X))-\tilde{U}^{1/n}_{t+1}((\tilde{\omega}, \tilde{X}),1/n+\tilde{y}(\tilde{X}))\bigg| \mathds{1}_{\{X\in [0,K]^d, \ |X-\tilde{X}|\le \tilde{\epsilon}\}}\bigg]+\frac{C\epsilon}{12C}\nonumber\\
\ \le&\ \ \epsilon/12+\epsilon/12=\epsilon/6.\nonumber
\end{align}
Now we modify $\tilde{\delta}>0$ such that $|\pi_t(\xi)(\omega)-\pi_t(\xi)(\tilde{\omega})|\le \delta_0$ if $|\omega-\tilde{\omega}|\le \tilde{\delta}$.
Furthermore applying \cref{Thm Nutz1} for the function $(\omega,x+1/n)\mapsto \tilde{U}^{1/n}_{t}(\omega,x+1/n)$ there exists a maximiser $(H',c') \in \mathcal{A}_{t,\tilde{x}+1/n}(\tilde{\omega})$ of $$ \sup_{(H,c) \in \mathcal{A}_{t,\tilde{x}+1/n}(\tilde{\omega})} \inf_{P \in \Pc^u_{t}(\tilde{\omega})} \E_{P}[\tilde{U}^{1/n}_{t+1}((\tilde{\omega}, \cdot), \tilde{x}+1/n+H\Delta S_{t+1}(\tilde{\omega}, \cdot)-c)]+ U(t,c)$$ and a strategy $(H,c'-\beta) \in \mathcal{A}_{t,\tilde{x}+1/n}(\omega),$ where $\beta:=c'\wedge|\pi_t(\xi)(\omega)-\pi_t(\xi)(\tilde{\omega})|\le \delta_0/2$. Furthermore there exists $P \in \Pc_{t}^u(\omega)$ such that 
$$\tilde{U}^{1/n}_{t}(\omega,\tilde{x}+1/n) \ge \E_{P}\left[\tilde{U}^{1/n}_{t+1}((\omega, \cdot),\tilde{x}+2/n+H\Delta S_{t+1}(\omega, \cdot)-c'+\beta)\right]+U(t,c'-\beta)-\epsilon/6.$$
 Note that we can modify $\tilde{\delta}>0$ such that $|(\omega,HS_t(\omega))-(\tilde{\omega}, HS_t(\tilde{\omega}))|\le (\tilde{C}+2)\tilde{\delta}\le \delta_0/2$. Now by \eqref{eq. estimate2} with $y(\cdot)=\tilde{x}+1/n+H\Delta S_{t+1}(\omega,\cdot)-c'+\beta$ and $\tilde{y}(\cdot)=\tilde{x}+1/n+H\Delta S_{t+1}(\tilde{\omega},\cdot)-c'$
\begin{align*}
&\E_{P}[\tilde{U}^{1/n}_{t+1}((\omega, \cdot),\tilde{x}+2/n+H\Delta S_{t+1}(\omega, \cdot)-c'+\beta)]+U(t,c'-\beta)-\epsilon/6\\
\ge\  &\E_{P}[\tilde{U}^{1/n}_{t+1}((\tilde{\omega}, \cdot),\tilde{x}+2/n+H\Delta S_{t+1}(\tilde{\omega}, \cdot)-c')] +U(t,c')-\epsilon/3 
\end{align*}
follows and by \eqref{eq. estimate} with 
$y(\cdot)=\tilde{x}+1/n+H\Delta S_{t+1}(\tilde{\omega},\cdot)-c'$, 
$\tilde{y}(\cdot)=\tilde{x}+1/n+H'\Delta S_{t+1}(\tilde{\omega},\cdot)-c'$ and
noting that $|H-H'|\le 2\tilde{C}$
\begin{align*}
&\E_{P}\left[\tilde{U}^{1/n}_{t+1}((\tilde{\omega}, \cdot),\tilde{x}+2/n+H\Delta S_{t+1}(\tilde{\omega}, \cdot)-c')\right] +U(t,c')-\epsilon/3 \\
\ge\  &\E_{\tilde{P}}\left[\tilde{U}^{1/n}_{t+1}[(\tilde{\omega}, \cdot),\tilde{x}+2/n+H'\Delta S_{t+1}(\tilde{\omega}, \cdot)-c')\right]+U(t,c')- \epsilon/2 \\
\ge\ & \tilde{U}^{1/n}_{t}(\tilde{\omega},\tilde{x})- \epsilon/2.
\end{align*}
Exchanging the roles of $\omega$ and $\tilde{\omega}$ concludes the proof of the induction step. \\
This shows in particular continuity of $\omega' \mapsto \tilde{U}^{1/n}_{t+1}((\omega, \omega'), x+1/n+\tilde{H} \Delta S_{t+1}(\omega, \omega')-\tilde{c})$ as $\omega' \mapsto x+\tilde{H} \Delta S_{t+1}(\omega, \omega')-\tilde{c}$ is continuous.
As this function is also $\Pc^u_t(\omega)$-q.s. bounded by \cref{lem usa} (recall that $(\tilde{H},\tilde{c})\in \mathcal{A}_{t,x}(\omega)$), we conclude by use of the Portmanteau theorem that
\begin{align*}
\tilde{U}_{t}(\omega,x)&=\inf_{n\in \N}\tilde{U}_t^{1/n}(\omega,x)\\
&=\inf_{n\in \N}\liminf_{k \to \infty}  \E_{P^{n_k(\omega)}_{t}(\omega)} [\tilde{U}_{t+1}^{1/n}((\omega, \cdot),x+1/n+\tilde{H} \Delta S_{t+1}(\omega,\cdot)-\tilde{c})]+U(t, \tilde{c})\\
& \ge \inf_{n\in \N}\E_{\hat{P}_{t}(\omega)} [\tilde{U}^{1/n}_{t+1}((\omega, \cdot), x+1/n+\tilde{H} \Delta S_{t+1}(\omega,\cdot)-\tilde{c})]+U(t, \tilde{c})\\
&= \E_{\hat{P}_{t}(\omega)} [\tilde{U}_{t+1}((\omega, \cdot), x+\tilde{H} \Delta S_{t+1}(\omega,\cdot)-\tilde{c})]+U(t, \tilde{c})\\
&\ge \inf_{P \in \Pc^u_{t}(\omega)} \E_{P} [\tilde{U}_{t+1}((\omega, \cdot),x+\tilde{H} \Delta S_{t+1}(\omega,\cdot)-\tilde{c})]+U(t, \tilde{c}),
\end{align*}
which yields for $x\ge \pi_t(\omega)$
\begin{align*}
\tilde{U}_{t}(\omega,x) = \E_{\hat{P}_{t}(\omega)} [\tilde{U}_{t+1}((\omega,\cdot),x+\tilde{H} \Delta S_{t+1}(\omega,\cdot)-\tilde{c})]+U(t, \tilde{c}).
\end{align*}
In particular for $i=1,2$
\begin{align*}
&\E_{\hat{P}_{t}\omega)}\left[\tilde{U}_{t+1}\left((\omega, \cdot),x+H^3\Delta S_{t+1}(\omega,\cdot)-c^3\right)\right]+U(t,c^3)\\
= \ &\E_{\hat{P}_{t}(\omega)}\left[\tilde{U}_{t+1}\left((\omega, \cdot),x+H^i\Delta S_{t+1}(\omega,\cdot)-c^i\right)\right]+U(t,c^i).
\end{align*}
Now since
\begin{align*}
(H,c) \mapsto \E_{\hat{P}_{t}(\omega)} [\tilde{U}_{t+1}((\omega,\cdot),x+H \Delta S_{t+1}(\omega,\cdot)-c)]+U(t, c)
\end{align*}
is concave and strictly concave in $c$, we need to have $c^1=c^2$ and
\begin{align*}
H^1 \Delta S_{t+1}(\omega,\cdot)&=H^2\Delta S_{t+1}(\omega, \cdot) \quad \hat{P}_{t}(\omega)-\text{a.s}
\end{align*}
Lastly denote by $\Xi_t$ the correspondence
\begin{align*}
\Xi_t(\omega)=\left\{ P \in \mathcal{P}^u_t(\omega) \ \bigg| \ \tilde{U}_t(x,\omega)= \sup_{(H,c) \in \mathcal{A}_{t, x}(\omega)} \E_{P} [\tilde{U}_{t+1}((\omega, \cdot), x+H \Delta S_{t+1}(\omega,\cdot)-c)]+U(t, c) \right\}
\end{align*}
for $x\ge \pi_t(\omega)$ and note that by measurable selection arguments as in \cite{BN}[proof of Lemma 4.10, p. 848] the set
\begin{align*}
\left\{(\omega, P) \in \text{graph}(\Pc^u_t) \ \bigg| \ \sup_{(H,c) \in \mathcal{A}_{t, x}(\omega)} \E_{P} [\tilde{U}_{t+1}((\omega, \cdot), x+H \Delta S_{t+1}(\omega,\cdot)-c)]+U(t, c) - \tilde{U}_t(x,\omega) \le 0 \right\}
\end{align*}
is an element of $\textbf{A}(\Fut\otimes \mathcal{B}(\mathfrak{P}(\Omega)))$, where $\textbf{A}(\Fut\otimes \mathcal{B}(\mathfrak{P}(\Omega)))$ is the set of all nuclei of Suslin schemes on $\Fut\otimes \mathcal{B}(\mathfrak{P}(\Omega))$. In consequence there exists an $\Fc^{\mathcal{U}}_t$-measurable function $\hat{P}_t: \Omega^t \to \mathfrak{P}(\Omega)$ such that $\text{graph}(\hat{P}_t)\subseteq \text{graph}(\Xi_t)$. This concludes the proof.
\end{proof}
\begin{remark}
If we assume that $H^1-H^2 \in  \text{span}_{\hat{P}_{t}(\omega)}(\Delta S_{t+1}(\omega,\cdot))$, then $H^1=H^2$.
\end{remark}

\end{appendices}

\bibliographystyle{abbrvnat}
\bibliography{bibliomiklos4}
\end{document}